\documentclass[a4paper,12pt]{article}
\usepackage{amsmath,amssymb,bm}
\usepackage{enumerate}
\usepackage{mathrsfs}
\usepackage{amsthm}

 \newtheorem{thm}{Theorem}
 \newtheorem{cor}{Corollary}
 \newtheorem{lem}{Lemma}
 \newtheorem{prop}{Proposition}
 
 \newtheorem{step}{Step}
 \theoremstyle{definition}
 
 \theoremstyle{remark}

\numberwithin{equation}{section}

\title{The Hamilton-Jacobi-Bellman Equation in Economic Dynamics with a Non-Smooth Fiscal Policy}
\author{Yuhki Hosoya\thanks{E-mail: hosoya(at)tamacc.chuo-u.ac.jp, ORCID ID: 0000-0002-8581-4518}\\Faculty of Economics, Chuo University\thanks{742-1 Higashinakano, Hachioji-shi, Tokyo 192-0393, Japan.}}
\date{\today}
\pagestyle{plain}

\begin{document}
\maketitle

\begin{abstract}
We consider a class of economic growth models that includes the classical Ramsey--Cass--Koopmans capital accumulation model and verify that, under several assumptions, the value function of the model is the unique viscosity solution to the Hamilton--Jacobi--Bellman equation. Moreover, we discuss a solution method for these models using differential inclusion, where the subdifferential of the value function plays an important role. Next, we present an assumption under which the value function is a classical solution to the Hamilton--Jacobi--Bellman equation, and show that many economic models satisfy this assumption. In particular, our result still holds in an economic growth model in which the government takes a non-smooth Keynesian policy rule.

\vspace{12pt}
\noindent
\textbf{MSC2020 codes}: 35F21, 49L25, 91B62.

\vspace{12pt}
\noindent
\textbf{Keywords}: Capital Accumulation Model, Hamilton--Jacobi--Bellman Equation, Viscosity Solution, Keynesian Fiscal Policy, Subdifferential Calculus.
\end{abstract}

\section{Introduction}
Since the 1970s, dynamic theory in macroeconomics has commonly been modelled using the calculus of variations or optimal control theory. The most important reason for the emergence of such a culture was the observation in the developed economies of the 1970s of phenomena that could not be explained by the traditional theory of fiscal and monetary policy based on the work of Phillips \cite{PH}. Lucas \cite{LU} explained that this stemmed from a flaw in the previous theory, and this was considered the most persuasive view in economics. The best method for removing the flaw that Lucas criticized was seen to be the use of optimal control theory in the model, so this style became popular. In particular, the so-called Ramsey--Cass--Koopmans (RCK) model, constructed by Ramsey \cite{RA} and modified by Cass \cite{CA} and Koopmans \cite{KO} for ease of use, is the most fundamental of these models, and is frequently used in textbooks of advanced macroeconomics: see, for example, Acemoglu \cite{AC}, Barro and Sara-i-Martin \cite{BS1}, Blanchard and Fischer \cite{BF}, or Romer \cite{ROM}.

Unfortunately, the models treated in modern theories of macroeconomic dynamics are far more complex than the RCK model and, in most cases, cannot be solved directly.\footnote{In some cases, there are exceptions that can be solved using classical methods including Pontryagin's maximum principle. See, for example, Huong \cite{HU}. In the case of linear constraints, the classical arguments concerning an explicit solution can also be found in Chapter 4 of Barro and Sara-i-Martin \cite{BS1}. Hosoya and Kuwata \cite{HK} extend this result to include overtaking optimal cases.} Therefore, analysis using the Hamilton--Jacobi--Bellman (HJB) equation is commonly applied. Achdou et al. \cite{ABLLM} is a typical example of such analysis. In fact, optimal control problems in economics have a different form than that classically treated by Lions \cite{LI} and Crandall and Lions \cite{CL2}. For such a class of problems, Barles \cite{BA} pointed out that the HJB equation may behave in an unusual way. In economics, however, a `proof' of the relationship between the value function of the problem and the HJB equation was provided early on, as dealt with in Marrialis and Brock \cite{MB}. Therefore, it was believed that the problem noted by Barles did not occur in economics.

Hosoya \cite{HOB} identified a gap of this `proof' of the relationship between the value function and the HJB equation in economics. This gap is so serious that we can construct an economic model in which the HJB equation has no relationship with the value function. Therefore, the problem specified by Barles is important even in economics, making it necessary to find conditions under which the problem vanishes. Hosoya \cite{HOA} addressed this problem and showed that, under certain conditions, the value function is the unique classical solution to the HJB equation in some functional space.

However, Hosoya \cite{HOA} assumed that the functions being treated in the model are smooth. In economic models, it may be possible to assume that the utility function (which expresses the consumer's preference, and the technology function, which describes the production process) are smooth, but we cannot be sure that the {\bf fiscal policy rule} owned by the government is smooth. Indeed, in this paper, we construct an example of an economic model in which the government uses a Keynesian fiscal policy rule that is non-smooth (see Subsection 4.2). Therefore, the assumption of the smoothness is undesirable for functions in the control problem of economic models that take fiscal policy into account.

This is exactly the problem we focus on in this paper. In other words, our purpose is to derive a relationship between the HJB equation and the value function without the differentiability assumptions of the technology function considered by Hosoya \cite{HOA}. In this case, because the model is not differentiable, it is natural that the solution of the HJB equation is also not differentiable. Therefore, unlike Hosoya \cite{HOA}, we mainly use the notion of viscosity solutions. In Hosoya \cite{HOA}, it was proved that the value function is a viscosity solution to the HJB equation under several assumptions. In this paper, we show a similar result under weaker conditions (Theorem 1).

Furthermore, this paper attempts to determine a condition under which there is no other viscosity solution to the HJB equation than the value function. This problem is quite non-trivial. As treated in Bardi and Capuzzo-Dolcetta \cite{BCD}, the viscosity solutions to the HJB equation are usually discussed in the class of bounded, uniformly continuous functions. In this case, for any pair of a viscosity subsolution and a viscosity supersolution, the former is less than the latter, from which the uniqueness of the viscosity solution can easily be derived. However, this argument is possible because boundedness can be naturally introduced into the objective function. The most typical objective function in economic control problems is $\int_0^{\infty}e^{-\rho t}\log c(t)dt$, where $c(t)$ is only assumed to be non-negative and locally integrable. In this case, the value function is usually unbounded in the first place. Thus, the above class of functions cannot be used as the set of candidates for solutions. Fortunately, in most economic models, the value function is concave. However, when considering the class of concave functions, there are often multiple solutions to the HJB equation (see the beginning of Subsection 3.5).

Therefore, as the set of candidates for solutions, we consider the space of increasing and concave functions that satisfy a sort of growth condition. All functions included in this class have positive subderivatives. Therefore, we can evaluate these functions using differential inclusion defined in terms of subderivatives. For problems satisfying the assumptions treated in Theorem 1 and one additional assumption, we show that the value function is the unique viscosity solution to the HJB equation in this class (Theorem 2).

The differential inclusion that we consider is interesting in its own right. Using this differential inclusion and the value function, we can obtain the solution to the original optimal control problem (Corollary 1). Furthermore, under an additional assumption, the value function becomes differentiable, whereupon the differential inclusion becomes a usual differential equation. Hence, we can prove that a solution to the optimal control problem can be taken continuously (Corollary 2). These are the main results in this paper.

Finally, we confirm that our results are correctly applicable to economic models. We do not assume differentiability in our model, but it is, of course, possible to apply our results to models that do assume differentiability. Therefore, we first consider the usual RCK model and check that our assumptions are very weak. Next, we introduce a non-smooth Keynesian fiscal policy for such models. Even in such a case, our assumptions hold naturally, and thus the results mentioned above are all applicable. Surprisingly, even though our constructed fiscal policy function is not differentiable, if capital depreciation is absent, then the value function is differentiable. However, this is not always the case in models with positive capital depreciation.

The remaining of this paper is organized as follows. First, Section 2 discusses the model and its underlying assumptions, and presents rigorous definitions of terms such as the value function. Section 3 derives the main results. Section 4 confirms the applicability of our results to economic models. Section 5 discusses the relationship with related studies. Section 6 presents the conclusion. Several proofs of lemmas is contained in the appendix.

\section{Model and Definitions}
\subsection{Model}
The model that we discuss in this paper is the following.\footnote{The statement `$\int_0^{\infty}e^{-\rho t}u(c(t),k(t))dt$ can be defined' admits that the value of this integral may be $\pm \infty$.}
\begin{align}
\max~~~~~&~\int_0^{\infty}e^{-\rho t}u(c(t),k(t))dt \nonumber \\
\mbox{subject to. }&~c(\cdot)\in W,\nonumber \\
&~k(t)\mbox{ is locally absolutely continuous},\nonumber \\
&~k(t)\ge 0,\ c(t)\ge 0,\label{MODEL}\\
&~\int_0^{\infty}e^{-\rho t}u(c(t),k(t))dt\mbox{ can be defined},\nonumber \\
&~\dot{k}(t)=F(k(t),c(t))\mbox{ a.e.},\nonumber \\
&~k(0)=\bar{k},\nonumber
\end{align}
where $W$ denotes some set of functions. The symbol $c(t)$ means the amount of consumption, and $k(t)$ denotes the amount of capital stock. If the instantaneous utility function $u(c,k)$ is independent of $k$ and the technology function $F(k,c)$ is $f(k)-dk-c$, then problem (\ref{MODEL}) coincides with the traditional RCK model.\footnote{The constant $d\ge 0$ denotes the capital depreciation rate.} In the context of the RCK model, the requirement that $\mathbb{R}_{++}=f'(\mathbb{R}_{++})$ (called the {\bf Inada condition}) is sometimes used. Later, we treat such a case (see Subsection 4.1).

We implicitly consider the following model: there exists a {\bf fiscal policy rule} $G(c,k)$ that is exogenously determined, and the government expenditure $g(t)$ is determined by the equation $g(t)=G(c(t),k(t))$. Moreover, $F(k,c)=f(k)-dk-c-G(c,k)$, and $u(c,k)=v(c,G(c,k))$, where $v$ is some function. In economic models, $f$ and $v$ are usually assumed to be differentiable, but $G$ may not be differentiable. Hence, $F$ and $u$ may not be differentiable. However, the non-differentiability of $u$ significantly increases the difficulty of the analysis, and thus, we later make a compromise and assume that $u$ is differentiable.\footnote{This can be naturally justified if $v$ is independent of $g$. Later, we discuss the meaning of such an assumption.}

Let $W_1$ be the set of all functions $c:\mathbb{R}_+\to\mathbb{R}_+$ that are locally integrable, and $W_2$ be the set of all functions $c:\mathbb{R}_+\to\mathbb{R}_+$ that are measurable and locally bounded. Note that $W_2\subset W_1$.\footnote{In this paper, we use the following notation. First, we define $\mathbb{R}^n_+=\{x\in \mathbb{R}^n|x_i\ge 0\mbox{ for all }i\}$ and $\mathbb{R}^n_{++}=\{x\in \mathbb{R}^n|x_i>0\mbox{ for all }i\}$. If $n=1$, we simply write these sets as $\mathbb{R}_+$ and $\mathbb{R}_{++}$, respectively.}

\vspace{12pt}
\noindent
\textbf{Assumption 1}. $\rho>0$, and $W$ is either $W_1$ or $W_2$.

\vspace{12pt}
\noindent
\textbf{Assumption 2}. The instantaneous utility function $u:\mathbb{R}^2_+\to \mathbb{R}\cup \{-\infty\}$ is a continuous and concave function on $\mathbb{R}^2_+$. Moreover, $u(c,k)$ is nondecreasing on $\mathbb{R}^2_+$, and increasing in $c$ and continuously differentiable on $\mathbb{R}^2_{++}$.\footnote{Because $u$ is increasing in $c$ on $\mathbb{R}^2_{++}$, if $(c,k)\in \mathbb{R}^2_{++}$, then $u(c,k)>u(2^{-1}c,k)\ge -\infty$. Therefore, $u(c,k)\in \mathbb{R}$ for every $(c,k)\in \mathbb{R}^2_{++}$.} Furthermore, there exists $c>0$ such that $u(c,0)>-\infty$.

\vspace{12pt}
\noindent
\textbf{Assumption 3}. The technology function $F:\mathbb{R}^2_+\to \mathbb{R}$ is a continuous and concave function that satisfies $F(0,0)=0$. Moreover, $F$ is decreasing in $c$, and there exist $d_1>0$ and an increasing function $\delta_2:\mathbb{R}_+\to\mathbb{R}_+$ such that $\delta_2(0)=0$ and $F(k,c)>-d_1k-\delta_2(c)$ for every $(k,c)$ such that $k>0$ and $c\ge 0$. If $W=W_1$, then there exists $d_2\ge 0$ such that $\delta_2(c)=d_2c$ for all $c\ge 0$. Furthermore, for every $c\ge 0$, there exists $k>0$ such that $F(k,c)>F(0,c)$.\footnote{The last assumption means that, for every $\bar{c}>0$, there exists $k^*>0$ such that for every $c\in [0,\bar{c}]$, $k\mapsto F(k,c)$ is increasing on $[0,k^*]$. In fact, we can choose
\[k^*=\min_{c\in [0,\bar{c}]}\min\arg\max\{F(k,c)|0\le k\le 1\}.\]}

\vspace{12pt}
\noindent
\textbf{Assumption 4}. The function $\frac{\partial u}{\partial c}(c,k)$ is decreasing in $c$ on $\mathbb{R}^2_{++}$. Moreover, $\lim_{c\to 0}\frac{\partial u}{\partial c}(c,k)=+\infty$ and $\lim_{c\to \infty}\frac{\partial u}{\partial c}(c,k)=0$ for every $k>0$. Furthermore, for every $k>0$, $\limsup_{c\downarrow 0}\frac{\partial u}{\partial k}(c,k)<+\infty$.

\subsection{Admissibility and the Value Function}

We say that a pair of real-valued functions $(k(t),c(t))$ defined on $\mathbb{R}_+$ is \textbf{admissible} if $k(t)$ is absolutely continuous on every compact interval, $c(t)\in W$, $k(t)\ge 0, c(t)\ge 0$, $\int_0^{\infty}e^{-\rho t}u(c(t),k(t))dt$ can be defined, and
\begin{equation}
\dot{k}(t)=F(k(t),c(t))\ \mbox{a.e.}.\label{TC}
\end{equation}
Note that, if $k(t)$ is absolutely continuous on every compact interval, then it is differentiable almost everywhere and $\int_a^b\dot{k}(t)dt=k(b)-k(a)$ for all $a,b$ with $0\le a<b$.

Let $A_{\bar{k}}$ denote the set of all admissible pairs such that $k(0)=\bar{k}$. Using the notation $A_{\bar{k}}$, we can simplify model (\ref{MODEL}) as follows:
\begin{align*}
\max~~~~~&~\int_0^{\infty}e^{-\rho t}u(c(t),k(t))dt\\
\mbox{subject to. }&~(k(t),c(t))\in A_{\bar{k}}.
\end{align*}
For each $\bar{k}>0$, let
\[\bar{V}(\bar{k})=\sup\left\{\left.\int_0^{\infty}e^{-\rho t}u(c(t),k(t))dt\right| (k(t),c(t))\in A_{\bar{k}}\right\}.\]
We call this function $\bar{V}$ the \textbf{value function} of problem (\ref{MODEL}).\footnote{We can easily show that $A_{\bar{k}}$ is nonempty for all $\bar{k}\ge 0$. Actually, $(k(t),c(t))\equiv (0,0)\in A_0$, and if $\bar{k}>0$, this result is proved in the proof of Proposition 2 of Hosoya \cite{HOA}. Therefore, we can define $\bar{V}$ on $\mathbb{R}_+$. However, in this paper, we consider the domain of $\bar{V}$ to be $\mathbb{R}_{++}$ for some technical reasons; thus, $\bar{V}(0)$ is not treated throughout this paper, although it can be defined.}

We call a pair $(k^*(t),c^*(t))\in A_{\bar{k}}$ a \textbf{solution} if and only if the following two requirements hold. First,
\[\int_0^{\infty}e^{-\rho t}u(c^*(t),k^*(t))dt\in \mathbb{R}.\]
Second, for every pair $(k(t),c(t))\in A_{\bar{k}}$,
\[\int_0^{\infty}e^{-\rho t}u(c^*(t),k^*(t))dt\ge \int_0^{\infty}e^{-\rho t}u(c(t),k(t))dt.\]
These two requirements can be summarized by the following formula:
\[\int_0^{\infty}e^{-\rho t}u(c^*(t),k^*(t))dt=\bar{V}(\bar{k})\in \mathbb{R}.\]
For all $(k(t),c(t))\in A_{\bar{k}}$ such that $\int_0^{\infty}e^{-\rho t}u(c(t),k(t))dt\in \mathbb{R}$ and $T>0$, if $k(T)>0$, then
\begin{align*}
\int_0^Te^{-\rho t}u(c(t),k(t))dt=&~\int_0^{\infty}e^{-\rho t}u(c(t),k(t))dt-\int_T^{\infty}e^{-\rho t}u(c(t),k(t))dt\\
\ge&~\int_0^{\infty}e^{-\rho t}u(c(t),k(t))dt-e^{-\rho T}\bar{V}(k(T)).
\end{align*}
In particular, if $\int_0^{\infty}e^{-\rho t}u(c(t),k(t))dt>M$, then for all $T>0$ such that $k(T)>0$,
\[\int_0^Te^{-\rho t}u(c(t),k(t))dt>M-e^{-\rho T}\bar{V}(k(T)).\]
Therefore, if $\bar{V}(\bar{k})$ is finite, then for every $\varepsilon>0$, there exists a pair $(k(t),c(t))\in A_{\bar{k}}$ such that either $k(T)=0$ or
\[\int_0^Te^{-\rho t}u(c(t),k(t))dt>\bar{V}(\bar{k})-e^{-\rho T}\bar{V}(k(T))-\varepsilon\]
for every $T>0$.

\subsection{HJB Equation}
The HJB equation is given as follows.
\begin{equation}
\sup_{c\ge 0}\{F(k,c)V'(k)+u(c,k)\}-\rho V(k)=0.\label{HJB}
\end{equation}
A function $V:\mathbb{R}_{++}\to \mathbb{R}\cup \{-\infty\}$ is called a \textbf{classical solution} to the HJB equation if and only if $V$ is continuously differentiable and equation (\ref{HJB}) holds for every $k>0$.

In many models of the dynamic control problem, there exists no classical solution to the HJB equation. Hence, we should extend the notion of the solution. First, a function $V:\mathbb{R}_{++}\to \mathbb{R}$ is called a \textbf{viscosity subsolution} to (\ref{HJB}) if and only if it is upper semi-continuous, and for every $k>0$ and every continuously differentiable function $\varphi$ defined on a neighbourhood of $k$ such that $\varphi(k)=V(k)$ and $\varphi(k')\le V(k')$ whenever $k'$ is in the domain of $\varphi$,\footnote{Usually, the condition for $\varphi$ is that $\varphi(k')\ge V(k')$, because this equation is considered for some minimization problem. In this paper, however, we treat a maximization problem (\ref{MODEL}), and thus the inequality is reversed.}
\[\sup_{c\ge 0}\{F(k,c)\varphi'(k)+u(c,k)\}-\rho V(k)\le 0.\]
Second, a function $V:\mathbb{R}_{++}\to \mathbb{R}$ is called a \textbf{viscosity supersolution} to (\ref{HJB}) if and only if it is lower semi-continuous, and for every $k>0$ and every continuously differentiable function $\varphi$ defined on a neighbourhood of $k$ such that $\varphi(k)=V(k)$ and $\varphi(k')\ge V(k')$ whenever $k'$ is in the domain of $\varphi$,
\[\sup_{c\ge 0}\{F(k,c)\varphi'(k)+u(c,k)\}-\rho V(k)\ge 0.\]
If a continuous function $V:\mathbb{R}_{++}\to \mathbb{R}$ is both a viscosity sub- and supersolution to (\ref{HJB}), then $V$ is called a \textbf{viscosity solution} to (\ref{HJB}).

Suppose that $V$ is a viscosity solution to the HJB equation and is differentiable at $k>0$. Then, it is known that
\[\sup_{c\ge 0}\{F(k,c)V'(k)+u(c,k)\}-\rho V(k)=0.\]
See Proposition 1.9 of Ch.2 of Bardi and Capuzzo-Dolcetta \cite{BCD}.

\vspace{12pt}
\noindent
{\bf Note}: In many studies, it is proved that for any viscosity subsolution $v_1$ and viscosity supersolution $v_2$, $v_1\le v_2$. If so, it is trivial that if there exists a viscosity solution, then such a solution is unique. However, in (\ref{MODEL}), such a result cannot be proved. For a simple counter-example, see the first paragraph of Subsection 3.5.

\subsection{Subdifferentials and Left- and Right-Derivatives}
In this paper, we heavily use subdifferential calculus. We introduce the notion of the subdifferential and several results. For the proofs of these results, see textbooks on convex analysis, such as Rockafeller \cite{ROC}.

Suppose that a function $G:U\to \mathbb{R}$ is concave, $U\subset \mathbb{R}^n$ is convex, and the interior $V$ of $U$ is nonempty. Choose any $x\in V$. We define
\[\partial G(x)=\{p\in \mathbb{R}^n|G(y)-G(x)\le p\cdot (y-x)\mbox{ for all }y\in U\}.\]
Then, we can show that $\partial G(x)$ is nonempty. The set-valued mapping $\partial G$ is called the \textbf{subdifferential} of $G$.\footnote{Formally, the subdifferential is defined for not concave but \textbf{convex} functions, and thus the inequality in the definition is reversed. In this view, the name `subdifferential' may not be appropriate, and `superdifferential' may be more suitable. However, in the literature of economics, these two notions are not distinguished, and thus our $\partial G$ is traditionally called the `subdifferential'.} It is known that $G$ is differentiable at $x$ if and only if $\partial G(x)$ is a singleton, and if so, $\partial G(x)=\{\nabla G(x)\}$.

If $n=1$, then define the left- and right-derivatives $D_-G(x),\ D_+G(x)$ as
\[D_-G(x)=\lim_{y\uparrow x}\frac{G(y)-G(x)}{y-x},\ D_+G(x)=\lim_{y\downarrow x}\frac{G(y)-G(x)}{y-x}.\]
Note that, if $G$ is concave, then $\frac{G(y)-G(x)}{y-x}$ is nonincreasing in $y$, and thus
\[D_-G(x)=\inf_{t>0}\frac{G(x-t)-G(x)}{-t},\ D_+G(x)=\sup_{t>0}\frac{G(x+t)-G(x)}{t},\]
which implies that both $D_-G(x),D_+G(x)$ are defined and real numbers. It is known that $\partial G(x)=[D_+G(x),D_-G(x)]$.

Recall that, under Assumption 3, our $F$ is concave. In this case, the functions $c\mapsto F(k,c)$ and $k\mapsto F(k,c)$ are also concave, and thus the `partial' subdifferential can be considered. Let
\[\partial_kF(k,c)=\{p|F(k',c)-F(k,c)\le p(k'-k)\mbox{ for all }k'\ge 0\},\]
\[\partial_cF(k,c)=\{p|F(k,c')-F(k,c)\le p(c'-c)\mbox{ for all }c'\ge 0\}.\]
The partial left- and right-derivatives can be defined in the same manner. For example, 
\[D_{k,+}F(k,c)=\sup_{t>0}\frac{F(k+t,c)-F(k,c)}{t},\]
\[D_{k,-}F(k,c)=\inf_{t>0}\frac{F(k-t,c)-F(k,c)}{-t}.\]
Now, suppose that $f$ is concave and continuous, and $k_1<k_2$. Define $r=\frac{f(k_2)-f(k_1)}{k_2-k_1}$. Let $p=D_+f(k_1)$ and $q=D_-f(k_2)$. If $p=r$ or $q=r$, then $f'(k)\equiv r$ for all $k\in ]k_1,k_2[$. Suppose that $p>r>q$. Define $g(k)=f(k)-rk$. Then, $D_+g(k_1)>0$ and $D_-g(k_2)<0$, and thus, there exists $k\in ]k_1,k_2[$ such that $g(k)=\max_{k'\in [k_1,k_2]}g(k')$. By the definition of the subdifferential, $0\in \partial g(k)$, and thus $r\in \partial f(k)$. In conclusion, we obtain the following {\bf mean value theorem}: if $f$ is concave and $k_1<k_2$, then there exist $k\in ]k_1,k_2[$ and $r\in \partial f(k)$ such that $f(k_2)-f(k_1)=r(k_2-k_1)$.

Suppose that $G:I\to \mathbb{R}$ is a concave function, where $I$ is some open interval in $\mathbb{R}$. We prove the upper hemi-continuity of $\partial G(x)$.\footnote{Let $X,Y$ be some topological spaces and $F:X\twoheadrightarrow Y$ be some set-valued mapping. It is said that $F$ is {\bf upper hemi-continuous} at $x$ if for any open set $V\subset Y$ that includes $F(x)$, there exists an open neighborhood $U$ of $x$ such that $F(y)\subset V$ for all $y\in U$.} Suppose not. Then, there exist $\varepsilon>0$ and a sequence $(x_m)$ such that $x_m\to x$ as $m\to \infty$ and either $D_+G(x_m)\le D_+G(x)-2\varepsilon$ or $D_-G(x_m)\ge D_-G(x)+2\varepsilon$ for each $m$. We only treat the case in which the former holds for all $m$, because the other cases can be treated symmetrically. By the definition of the right-derivative, there exists $h>0$ such that
\[G(x+h)-G(x)\ge h[D_+G(x)-\varepsilon].\]
Therefore, for any sufficiently large $m$,
\[G(x_m+h)-G(x_m)>h[D_+G(x)-2\varepsilon]\ge hD_+G(x_m),\]
which is a contradiction. 

We use these facts in the proof of Proposition \ref{Prop4}.

\subsection{Pure Accumulation Path}
Consider the following differential equation:
\begin{equation}
\dot{k}(t)=F(k(t),0),\ k(0)=\bar{k}.\label{PAP}
\end{equation}
Let $k^+(t,\bar{k})$ denote the solution to the above equation defined on $\mathbb{R}_+$. This function $k^+(t,\bar{k})$ is called the \textbf{pure accumulation path}. The following lemma has been proved in Hosoya \cite{HOA}.\footnote{Note that, under Assumption 3, there exists $k>0$ such that $F(k,0)>0$.}

\begin{lem}\label{Lemma1}
Under Assumption 3, the pure accumulation path $k^+$ is uniquely defined on the set $\mathbb{R}_+\times\mathbb{R}_{++}$. Moreover, if $F(k,0)>0$ and $\gamma\in \partial_kF(k,0)$, then $k^+(t,\bar{k})\ge \min\{k,\bar{k}\}$ and
\begin{equation}
k^+(t,\bar{k})\le \begin{cases}
e^{\gamma t}\left(\bar{k}+(e^{-\gamma t}-1)\frac{\gamma k-F(k,0)}{\gamma}\right) & \mbox{if }\gamma\neq 0,\\
\bar{k}+tF(k,0) & \mbox{if }\gamma=0,
\end{cases}\label{SGC}
\end{equation}
for all $t\ge 0$.
\end{lem}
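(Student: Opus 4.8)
The plan is to treat \eqref{PAP} as a genuine scalar ODE and combine the Cauchy--Lipschitz (Picard--Lindel\"of) theorem with a priori bounds obtained from concavity. First I would record that, under Assumption 3, the map $k\mapsto F(k,0)$ is concave, hence continuous and locally Lipschitz on the interior $\mathbb{R}_{++}$ of its domain. Consequently, for each fixed $\bar{k}>0$ there is a unique maximal $C^1$ solution $k^+(\cdot,\bar{k})$ of \eqref{PAP} on some interval $[0,T_{\max})$, and the whole task reduces to proving $T_{\max}=\infty$ together with the two bounds, since uniqueness is already guaranteed wherever the path stays in $\mathbb{R}_{++}$.

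For the upper bound I would exploit the subdifferential directly. Because $\gamma\in\partial_kF(k,0)$, the defining inequality of the (concave) subdifferential gives, for every $\kappa\ge 0$,
\[
F(\kappa,0)\le F(k,0)+\gamma(\kappa-k)=\gamma\kappa+b,\qquad b:=F(k,0)-\gamma k.
\]
Evaluating along the trajectory yields the differential inequality $\dot{k}^+(t,\bar{k})\le\gamma\,k^+(t,\bar{k})+b$. I would then compare with the linear equation $\dot{y}=\gamma y+b$, $y(0)=\bar{k}$, whose explicit solution is exactly the right-hand side of \eqref{SGC}: setting $w=k^+-y$ one has $\dot{w}\le\gamma w$ and $w(0)=0$, so $t\mapsto e^{-\gamma t}w(t)$ is nonincreasing and hence $w\le 0$. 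This is the standard Gr\"onwall/comparison step and delivers \eqref{SGC} on all of $[0,T_{\max})$, in both the $\gamma\neq0$ and $\gamma=0$ forms.

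For the lower bound I would again use concavity together with $F(0,0)=0$: for $\kappa\in(0,k]$,
\[
F(\kappa,0)\ge \tfrac{\kappa}{k}F(k,0)+\bigl(1-\tfrac{\kappa}{k}\bigr)F(0,0)=\tfrac{\kappa}{k}F(k,0)>0 ,
\]
so the velocity is strictly positive whenever the path lies in $(0,k]$. If $\bar{k}\le k$, the path starts in this region with positive velocity and can leave it only through the top value $k$, where it is still increasing; hence it never falls below $\bar{k}$. If $\bar{k}>k$, I would run the first-crossing argument: were $k^+$ to reach $k$ from above at some finite $\tau$, then $\dot{k}^+(\tau)\le 0$, contradicting $\dot{k}^+(\tau)=F(k,0)>0$. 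In either case $k^+(t,\bar{k})\ge\min\{k,\bar{k}\}=:m>0$.

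Finally, these bounds close the argument for global existence: on every $[0,T]\subset[0,T_{\max})$ the path is confined to the compact set $[m,y(T)]\subset\mathbb{R}_{++}$, so it can neither blow up nor approach $0$ in finite time, and the usual escape-from-compact-sets criterion forces $T_{\max}=\infty$; uniqueness follows since the trajectory remains in the region of local Lipschitz continuity. The main obstacle I anticipate is precisely this interplay of ordering and boundary behaviour: the bounds are what guarantee global existence, yet they must be established on the a priori only maximal interval, and the concave function $F(\cdot,0)$ may fail to be Lipschitz (indeed may have infinite slope) at $k=0$. Both difficulties are resolved by the lower bound $k^+\ge m>0$, which keeps the path uniformly away from the bad boundary point, so the care needed is chiefly in phrasing the comparison and crossing arguments on $[0,T_{\max})$ before concluding $T_{\max}=\infty$.
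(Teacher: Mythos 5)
Your proof is correct, and it is self-contained, which is more than the paper offers: the paper does not prove Lemma \ref{Lemma1} at all, but defers it to Hosoya \cite{HOA}, so there is no in-text argument to compare against. Your route --- local Lipschitz continuity of the concave map $k\mapsto F(k,0)$ on $\mathbb{R}_{++}$, the subgradient inequality $F(\kappa,0)\le \gamma\kappa+F(k,0)-\gamma k$ turned into a linear comparison equation whose explicit solution is exactly the right-hand side of (\ref{SGC}), the chord inequality $F(\kappa,0)\ge \tfrac{\kappa}{k}F(k,0)>0$ on $(0,k]$ with a first-crossing argument for the lower bound, and the escape-from-compact-sets criterion to force $T_{\max}=\infty$ --- is the natural argument, and each step checks out (including the Gr\"onwall step via monotonicity of $e^{-\gamma t}w(t)$, which is valid for either sign of $\gamma$).

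One small patch is needed to make the first claim of the lemma complete as stated. The lemma asserts that $k^+$ is uniquely defined on all of $\mathbb{R}_+\times\mathbb{R}_{++}$ under Assumption 3 \emph{alone}, whereas your a priori bounds (and hence your global-existence and ``trajectory stays in the Lipschitz region'' conclusions) are derived under the hypothesis that some $k>0$ with $F(k,0)>0$ and $\gamma\in\partial_kF(k,0)$ has been fixed. You should note explicitly that such a $k$ always exists: the last clause of Assumption 3 with $c=0$, together with $F(0,0)=0$, gives $F(k,0)>F(0,0)=0$ for some $k>0$, and $\partial_kF(k,0)\neq\emptyset$ by concavity --- this is precisely what the paper records in the footnote attached to the lemma. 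With that one sentence inserted before your final paragraph, the argument applies to every $\bar{k}>0$ and the proof is complete.
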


Let $V:\mathbb{R}_{++}\to \mathbb{R}$. The following requirement of $V$ is called the \textbf{growth condition}.
\begin{equation}
\lim_{T\to\infty}e^{-\rho T}V(k^+(T,\bar{k}))=0\mbox{ for all }\bar{k}>0.\label{GC}
\end{equation}
Define $\mathscr{V}$ as the space of all functions $V:\mathbb{R}_{++}\to \mathbb{R}$ that is increasing and concave, and satisfies the growth condition. Note that, by concavity, every $V\in \mathscr{V}$ is locally Lipschitz.

The following lemma shows that (\ref{GC}) is not strong.

\begin{lem}\label{Lemma2}
Suppose that there exist $k>0$ and $\gamma\in \mathbb{R}$ such that $\gamma\in \partial_kF(k,0)$ and $\gamma<\rho$. Then, $\mathscr{V}$ coincides with the set of all increasing and concave real-valued functions on $\mathbb{R}_{++}$.
\end{lem}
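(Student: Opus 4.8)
The plan is to establish the single nontrivial inclusion. By the definition of $\mathscr{V}$, every element is already an increasing, concave, real-valued function, so I only need the converse: under the stated hypothesis, every increasing concave $V:\mathbb{R}_{++}\to\mathbb{R}$ automatically satisfies the growth condition (\ref{GC}). Everything will hinge on controlling the asymptotic growth of the pure accumulation path $k^+(T,\bar{k})$ via the estimate (\ref{SGC}) of Lemma \ref{Lemma1}, and then exploiting the fact that an increasing concave function grows at most linearly.

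First I would secure a point at which Lemma \ref{Lemma1} can actually be invoked, since that lemma requires a point $k'$ with $F(k',0)>0$ \emph{together with} a subderivative $\gamma'\in\partial_k F(k',0)$, and to make the growth estimate useful I need in addition $\gamma'<\rho$. Writing $\phi(k)=F(k,0)$ (concave, with $\phi(0)=0$ and, by Assumption 3, $\phi(k^*)>0$ for some $k^*$), I would split into two cases according to whether $\{k>0:\phi(k)>0\}$ is bounded above. If it is bounded, $\phi$ attains its maximum over the closure of this set at some interior point $k'$, where $\phi(k')>0$ and $0\in\partial\phi(k')$, so that $D_{k,+}F(k',0)\le 0<\rho$; note the hypothesis is not even needed here. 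If it is unbounded, I use the hypothesis: there is $k$ with $D_{k,+}F(k,0)\le\gamma<\rho$, and since the right-derivative of a concave function is nonincreasing, every $k'\ge k$ satisfies $D_{k,+}F(k',0)\le\gamma<\rho$; choosing such a $k'$ large enough that $\phi(k')>0$ (possible because the positive set is unbounded) supplies the required point. I expect this case analysis to be the main obstacle, as it is the only place where the location of the hypothesised slope and the region where $F(\cdot,0)$ is positive must be reconciled so that both conditions of Lemma \ref{Lemma1} hold at one point.

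With such a $k'$ and $\gamma'=D_{k,+}F(k',0)<\rho$ fixed, the remainder is routine. Lemma \ref{Lemma1} gives $k^+(T,\bar{k})\ge\min\{k',\bar{k}\}$ together with (\ref{SGC}), from which $e^{-\rho T}k^+(T,\bar{k})\to 0$ follows because $\gamma'<\rho$ (in the degenerate case $\gamma'=0$ the bound is affine in $T$, and $\rho>0$ still forces the limit to vanish). Since $V$ is concave, fixing any $p\in\partial V(1)$ yields $V(k)\le V(1)+p(k-1)$ for all $k>0$, whence $e^{-\rho T}V(k^+(T,\bar{k}))\le e^{-\rho T}[V(1)-p]+p\,e^{-\rho T}k^+(T,\bar{k})\to 0$; and since $V$ is increasing, $V(k^+(T,\bar{k}))\ge V(\min\{k',\bar{k}\})$, so $e^{-\rho T}V(k^+(T,\bar{k}))\ge e^{-\rho T}V(\min\{k',\bar{k}\})\to 0$. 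Squeezing between these two bounds gives (\ref{GC}) for every $\bar{k}>0$, which completes the proof.
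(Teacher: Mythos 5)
Your proposal is correct and follows essentially the same route as the paper: invoke Lemma \ref{Lemma1} and the estimate (\ref{SGC}) at a point where $F(\cdot,0)$ is positive and admits a subderivative below $\rho$, then squeeze $e^{-\rho T}V(k^+(T,\bar{k}))$ between $e^{-\rho T}V(\min\{k',\bar{k}\})$ and the linear-in-$k^+$ upper bound supplied by concavity. The only difference is that the paper compresses your entire second paragraph into the unjustified phrase ``without loss of generality, we can assume that $F(k,0)>0$''; your bounded/unbounded case analysis (global maximiser with $0\in\partial_kF(k',0)$ in the bounded case, monotonicity of $D_{k,+}F(\cdot,0)$ in the unbounded case) is a valid and welcome way to fill in exactly that reduction.
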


\begin{proof}
Without loss of generality, we can assume that $F(k,0)>0$. Therefore, by (\ref{SGC}), there exist $A,B,C\in \mathbb{R}$ such that $k^+(t,\bar{k})\le Ae^{\gamma t}+Bt+C$. Choose any increasing and concave function $V:\mathbb{R}_{++}\to \mathbb{R}$ and $p\in \partial V(\bar{k})$. Then,
\begin{align*}
-\infty<&~V(\inf_{t\ge 0}k^+(t,\bar{k}))\le V(k^+(T,\bar{k}))\\
\le&~V(\bar{k})+p(k^+(T,\bar{k})-\bar{k}),\\
\le&~V(\bar{k})+p(Ae^{\gamma T}+BT+C-\bar{k}),
\end{align*}
which implies that $V$ automatically satisfies (\ref{GC}). Thus, the requirement (\ref{GC}) vanishes and $\mathscr{V}$ coincides with the set of all increasing and concave functions on $\mathbb{R}_{++}$. This completes the proof.
\end{proof}

\section{Results}
In this section, we analyze the HJB equation for a characterization of the value function. We prohibit ourselves from making assumptions on the solution explicitly. Assumptions must be made for properties of primitives $\rho, u, F$ in our model (\ref{MODEL}), and, for example, the assumption for the existence of the solution to (\ref{MODEL}) is not appropriate. The reason why we restrict ourselves is simple: in many cases, ensuring the existence of a solution to (\ref{MODEL}) is tremendously difficult. Hence, we prohibit ourselves from assuming the existence of a solution, although under the existence assumption of the solution, the proofs of results become quite easy.\footnote{See, for example, Hosoya \cite{HOB} for detailed arguments.}

\subsection{Knowledge on Ordinary Differential Equations}
In this section, we frequently use knowledge on ordinary differential equations (ODEs). Hence, we note basic knowledge on ODEs for readers.

First, consider the following differential equation:
\begin{equation}
\dot{x}(t)=h(t,x(t)),\label{ODE}
\end{equation}
where $\dot{x}$ denotes $\frac{dx}{dt}$. We assume that $h:U\to \mathbb{R}^n$, $U\subset \mathbb{R}_+\times \mathbb{R}^n$, and the relative interior of $U$ in $\mathbb{R}_+\times \mathbb{R}^n$ is nonempty (denoted by $V$). We call a set $I\subset \mathbb{R}$ an \textbf{interval} if and only if $I$ is a convex set of $\mathbb{R}$ that includes at least two points. We say that a function $x:I\to \mathbb{R}^n$ is a \textbf{solution} to (\ref{ODE}) if and only if, 1) $I$ is an interval, 2) $x(t)$ is absolutely continuous on every compact subinterval of $I$, 3) the graph of $x(t)$ is included in $U$, and 4) $\dot{x}(t)=h(t,x(t))$ for almost all $t\in I$. Suppose that $(t^*,x^*)\in U$. If a solution $x(t)$ to (\ref{ODE}) satisfies 1) $t^*\in I$ and 2) $x(t^*)=x^*$, then $x(t)$ is called a \textbf{solution with the initial value condition} $x(t^*)=x^*$, or simply, a solution to the following differential equation:
\begin{equation}
\dot{x}(t)=h(t,x(t)),\ x(t^*)=x^*.\label{CODE}
\end{equation}
Now, suppose that $h:U\to \mathbb{R}^n$ satisfies the following requirements: 1) for every $t\in \mathbb{R}$, $x\mapsto h(t,x)$ is continuous, and 2) for every $x\in \mathbb{R}^n$, $t\mapsto h(t,x)$ is measurable. Then, we say that $h$ satisfies \textbf{Carath\'eodory's condition}. If, additionally, for a set $C\subset U$, there exists $L>0$ such that
\[\|h(t,x_1)-h(t,x_2)\|\le L\|x_1-x_2\|\]
for every $(t,x_1,x_2)$ such that $(t,x_1), (t,x_2)\in C$, then $h$ is said to be \textbf{Lipschitz in $x$ on $C$}.

The following facts are well known.
\begin{enumerate}[1)]
\item Suppose that $h$ satisfies Carath\'eodory's condition. Moreover, suppose that $(t^*,x^*)\in V$ and there exist $\varepsilon>0$ and an integrable function $r:[t^*-\varepsilon,t^*+\varepsilon]\to\mathbb{R}_+$ such that $\|h(t,x)\|\le r(t)$ for all $(t,x)\in U$ with $\|(t,x)-(t^*,x^*)\|\le \varepsilon$. Then, there exists a solution $x:I\to \mathbb{R}^n$ to (\ref{CODE}), where $I$ is relatively open in $\mathbb{R}_+$.\footnote{This result is called the Carath\'eodory--Peano existence theorem. For a proof, see Ch.2 of Coddington and Levinson \cite{CL1}.}

\item Suppose that $h$ satisfies Carath\'eodory's condition, and that for every compact set $C\subset V$, $h$ is Lipschitz in $x$ on $C$. Moreover, suppose that $(t^*,x^*)\in V$ and there exists a convex neighbourhood of $t^*$ such that $t\mapsto h(t,x^*)$ is integrable on this neighbourhood. Then, there exists a solution $x:I\to \mathbb{R}^n$ to (\ref{CODE}), where $I$ is relatively open in $\mathbb{R}_+$.

\item Suppose that $h$ satisfies Carath\'eodory's condition, and that for every compact set $C\subset V$, $h$ is Lipschitz in $x$ on $C$. Moreover, suppose that for every $(t^+,x^+)\in V$, there exists a convex neighbourhood of $t^+$ such that $t\mapsto h(t,x^+)$ is integrable on this neighbourhood. Choose any $(t^*,x^*)\in V$. Suppose that $x_1(t),x_2(t)$ are two solutions to (\ref{CODE}) such that $(t,x_i(t))\in V$ for every $t\in I_i$, where $I_i$ is the domain of $x_i(t)$. Then, $x_1(t)=x_2(t)$ for every $t\in I_1\cap I_2$.\footnote{Results 2) and 3) are known as the Carath\'eodory--Picard--Lindel\"of existence theorem. For a proof, see Section 0.4 of Ioffe and Tikhomirov \cite{IT}.}

\item Suppose that $h$ is continuous. Then, any solution $x(t)$ to (\ref{ODE}) is continuously differentiable.\footnote{See Ch.2 of Hartman \cite{HA}.}
\end{enumerate}

Next, suppose that $h$ satisfies all requirements in 3) and $(t^*,x^*)\in V$. Choose a solution $x:I\to \mathbb{R}^n$ to (\ref{CODE}). A solution $y:J\to \mathbb{R}^n$ is called an \textbf{extension} of $x$ if and only if 1) $I\subset J$, and 2) $y(t)=x(t)$ for all $t\in I$. Then, $x(t)$ is said to be \textbf{nonextendable} if and only if there is no extension except $x(t)$ itself.

The following facts are well known.\footnote{Fact 5) can be proved easily. The proof of 6) is in Ch.2 of Coddington and Levinson \cite{CL1}.}
\begin{enumerate}[1)]
\setcounter{enumi}{4}
\item In addition to the requirements of 3), suppose that $V=U$. Then, there uniquely exists a nonextendable solution $x(t)$ to (\ref{CODE}). Moreover, the domain $I$ of $x(t)$ is relatively open in $\mathbb{R}_+$.

\item Suppose that all requirements of 5) hold, and let $x:I\to \mathbb{R}^n$ be the nonextendable solution to (\ref{CODE}). Choose any compact set $C\subset V$. Then, there exists $t^+\in I$ such that if $t^+\le t\in I$, then $(t,x(t))\notin C$.
\end{enumerate}
Finally, suppose that $h(t,x)=a(t)x+b(t)$, where $a(t), b(t)$ are locally integrable functions defined on $\mathbb{R}_+$ and $a(t)$ is bounded. Then, the solution to (\ref{ODE}) is determined by the following formula:
\begin{equation}
x(t)=e^{\int_0^ta(\tau)d\tau}\left[x(0)+\int_0^te^{-\int_0^sa(\tau)d\tau}b(s)ds\right].\label{LODE}
\end{equation}
This is called the formula of the solution for linear ODEs.

\subsection{Basic Lemma and Propositions}
In this subsection, we introduce a lemma and several propositions that have been proved in Hosoya \cite{HOA}. Because all results are proved in Hosoya \cite{HOA}, we omit the proofs.

\begin{lem}\label{Lemma3}
Consider the following two ODEs:
\begin{equation}
\dot{k}(t)=h_i(t,k(t)),\ k(0)=\bar{k}_i,\label{eq:eq13}
\end{equation}
where $i\in \{1,2\}$. Suppose that each $h_i$ is a real-valued function defined on some convex neighbourhood $U\subset \mathbb{R}_+\times\mathbb{R}$ of $(0,\bar{k}_i)$ and $h_i$ satisfies Carath\'eodory's condition. Then, the following results hold.
\begin{enumerate}[i)]
\item For some $i\in \{1,2\}$, if there exists a locally integrable function $r(t)$ such that
\[\sup_{k:(t,k)\in U}|h_i(t,k)|\le r(t),\]
then there exists $T>0$ such that this equation $(\ref{eq:eq13})$ has a solution $k_i:[0,T]\to\mathbb{R}$. Moreover, if $h_i(t,k)$ is continuous, then $k_i(t)$ is continuously differentiable.

\item Suppose that $\bar{k}_1\le \bar{k}_2$, $h_1(t,k)\le h_2(t,k)$ for every $(t,k)\in U$, and for some $i^*\in \{1,2\}$, $h_{i^*}$ is Lipschitz in $k$ on $U$. Suppose also that $h_{i^*}(t,\bar{k})$ is locally integrable, and there exist solutions $k_i:[0,T]\to \mathbb{R}$ to the above equations for $i\in \{1,2\}$. Then, $k_1(t)\le k_2(t)$ for all $t\in [0,T]$.\footnote{If such an $L>0$ is absent, then this lemma does not hold. For example, consider $\bar{k}_1=\bar{k}_2=0,\ h_1(t,k)=\sqrt{|k|}-\frac{t}{8},\ h_2(t,k)=\sqrt{|k|},\ k_1(t)=\frac{t^2}{16},$ and $k_2(t)\equiv 0$.

Note that, if $\bar{k}_1=\bar{k}_2$ and $h_1=h_2$, then this claim immediately implies the uniqueness of the solution. From this perspective, this lemma is an extension of the Carath\'eodory--Picard--Lindel\"of uniqueness result in the theory of ODEs.}
\end{enumerate}
\end{lem}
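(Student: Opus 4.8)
The plan is to treat the two parts separately, since i) is a local existence statement and ii) is a comparison statement, and the real content lies in ii).

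For part i), I would simply invoke the existence theory already recorded in Subsection 3.1. Since $U$ is a convex neighbourhood of $(0,\bar{k}_i)$ in $\mathbb{R}_+\times\mathbb{R}$, the point $(0,\bar{k}_i)$ lies in the relative interior $V$ of $U$. The hypothesis $\sup_{k:(t,k)\in U}|h_i(t,k)|\le r(t)$ with $r$ locally integrable supplies exactly the integrable domination required by Fact 1) (the Carath\'eodory--Peano existence theorem): on a small ball around $(0,\bar{k}_i)$ the bound $|h_i(t,k)|\le r(t)$ holds and $r$ is integrable on a compact neighbourhood of $0$. Fact 1) then produces a solution on a relatively open interval $I\ni 0$, and restricting to $[0,T]$ for $T>0$ small enough gives the asserted solution. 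If moreover $h_i$ is continuous, Fact 4) guarantees that the solution is continuously differentiable. There is no genuine obstacle here.

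For part ii) I would run a Gronwall-type comparison on the difference $w(t):=k_1(t)-k_2(t)$, which is absolutely continuous with $w(0)=\bar{k}_1-\bar{k}_2\le 0$; the goal is $w\le 0$ on $[0,T]$. Suppose not, so $w(t_0)>0$ for some $t_0$. Put $t_1:=\sup\{t\in[0,t_0]:w(t)\le 0\}$; continuity of $w$ forces $t_1<t_0$, $w(t_1)=0$, and $w(t)>0$ on $(t_1,t_0]$. The crucial step is to bound $\dot{w}$ on $(t_1,t_0)$. Assuming first $h_{i^*}=h_1$ is Lipschitz with constant $L$, at almost every such $t$,
\begin{align*}
\dot{w}(t)&=h_1(t,k_1(t))-h_2(t,k_2(t))\\
&=\bigl[h_1(t,k_1(t))-h_1(t,k_2(t))\bigr]+\bigl[h_1(t,k_2(t))-h_2(t,k_2(t))\bigr]\\
&\le L\,|k_1(t)-k_2(t)|+0=L\,w(t),
\end{align*}
where the first bracket is controlled by the Lipschitz estimate and the second is $\le 0$ by $h_1\le h_2$; the case $h_{i^*}=h_2$ is symmetric (bound $h_1(t,k_1)\le h_2(t,k_1)$ first, then apply the Lipschitz property of $h_2$). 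Hence $\frac{d}{dt}\bigl[e^{-Lt}w(t)\bigr]\le 0$ a.e.\ on $(t_1,t_0)$, so the absolutely continuous function $e^{-Lt}w(t)$ is nonincreasing there, whence $e^{-Lt_0}w(t_0)\le e^{-Lt_1}w(t_1)=0$ and $w(t_0)\le 0$, a contradiction. Throughout I would note that $(t,k_1(t)),(t,k_2(t))\in U$, since solutions have their graphs in $U$ by definition, so both $h_1\le h_2$ and the Lipschitz estimate on $U$ are legitimately applied.

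The main obstacle is exactly the decomposition of $\dot{w}$ above: only one of the $h_i$ is Lipschitz, so one cannot estimate both difference terms symmetrically. The trick is to use the ordering $h_1\le h_2$ to kill the ``cross'' term (the two right-hand sides evaluated at the \emph{same} argument) and reserve the Lipschitz bound for the ``same function, different argument'' term. The footnote example built from $\sqrt{|k|}$ shows that without a Lipschitz hypothesis the conclusion genuinely fails, so this asymmetric use of the two hypotheses is essential rather than a convenience.
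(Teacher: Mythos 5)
Your proof is correct. Note that the paper itself gives no proof of Lemma \ref{Lemma3}: it is stated in Subsection 3.2 as one of the results imported from Hosoya \cite{HOA}, so there is no in-paper argument to compare against; your proposal is a valid self-contained substitute. Part i) is exactly the routine application of Facts 1) and 4) from Subsection 3.1 that one would expect. Part ii) is the standard Gronwall/first-crossing comparison, and the key asymmetric decomposition you highlight --- using $h_1\le h_2$ to dispose of the ``same argument, different function'' term and reserving the Lipschitz bound of $h_{i^*}$ for the ``same function, different argument'' term, with the two cases $i^*=1$ and $i^*=2$ handled by splitting at $h_1(t,k_2(t))$ and $h_2(t,k_1(t))$ respectively --- is precisely the right way to exploit that only one of the two right-hand sides is Lipschitz. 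One genuinely useful observation: your argument never uses the hypothesis that $t\mapsto h_{i^*}(t,\bar{k}_{i^*})$ is locally integrable. That hypothesis is what the Carath\'eodory--Picard--Lindel\"of machinery (Facts 2), 3), 5)) would require if one instead proved the comparison by constructing and iterating auxiliary solutions; your pointwise a.e.\ estimate $\dot{w}\le Lw$ on the set where $w>0$, combined with absolute continuity of $e^{-Lt}w(t)$, bypasses it entirely, since $h_i(t,k_i(t))=\dot{k}_i(t)$ a.e.\ is automatically integrable along the given solutions. So your route proves the statement under strictly weaker hypotheses, which is harmless and arguably cleaner.
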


\begin{prop}\label{Prop1}
 Suppose that Assumptions 1-4 hold. Then, there exists a positive continuous function $c^*(p,k)$ defined on $\mathbb{R}^2_{++}$ such that, for all $(p,k)\in \mathbb{R}^2_{++}$,
\[F(k,c^*(p,k))p+u(c^*(p,k),k)=\sup_{c\ge 0}\{F(k,c)p+u(c,k)\}.\]
\end{prop}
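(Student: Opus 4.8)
The plan is to fix $(p,k)\in\mathbb{R}^2_{++}$, study the concave function $\phi_{p,k}(c)=pF(k,c)+u(c,k)$ on $[0,\infty)$, show that its supremum is attained at a unique positive point, which will serve as $c^*(p,k)$, and then deduce continuity from a maximum-theorem argument. Note first that $\phi_{p,k}$ is concave because $F(k,\cdot)$ is concave (Assumption 3), $u(\cdot,k)$ is concave (Assumption 2), and $p>0$. Moreover, since $\frac{\partial u}{\partial c}(\cdot,k)$ is strictly decreasing by Assumption 4, $u(\cdot,k)$ is strictly concave, hence so is $\phi_{p,k}$; this gives uniqueness of the maximizer for free.

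To locate the maximizer I would work with the right-derivative $D_{c,+}\phi_{p,k}(c)=p\,D_{c,+}F(k,c)+\frac{\partial u}{\partial c}(c,k)$ for $c>0$ (the one-sided derivative of a sum of concave functions is the sum of the one-sided derivatives, and $u(\cdot,k)$ is differentiable). Since $F$ is decreasing in $c$, the number $D_{c,+}F(k,0)=\sup_{t>0}\frac{F(k,t)-F(k,0)}{t}$ is nonpositive and in particular finite, so the Inada condition $\lim_{c\downarrow0}\frac{\partial u}{\partial c}(c,k)=+\infty$ forces $D_{c,+}\phi_{p,k}(c)\to+\infty$ as $c\downarrow0$; thus $\phi_{p,k}$ is strictly increasing near $0$ and $c=0$ is not optimal. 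At the other end, fixing any $c_0>0$ we have $D_{c,+}F(k,c)\le D_{c,+}F(k,c_0)<0$ for $c\ge c_0$ (concavity and strict monotonicity of $F$ in $c$), while $\frac{\partial u}{\partial c}(c,k)\to0$ as $c\to\infty$; hence $D_{c,+}\phi_{p,k}(c)<0$ for all large $c$, so $\phi_{p,k}$ is eventually strictly decreasing. Together with concavity and continuity on $(0,\infty)$, this yields a unique interior maximizer $c^*(p,k)>0$, characterized by the optimality condition $p\,D_{c,+}F(k,c^*)+\frac{\partial u}{\partial c}(c^*,k)\le0\le p\,D_{c,-}F(k,c^*)+\frac{\partial u}{\partial c}(c^*,k)$.

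For continuity I would argue sequentially. Let $(p_n,k_n)\to(p_0,k_0)$ in $\mathbb{R}^2_{++}$ and put $c_n=c^*(p_n,k_n)$. The key step is to show that $(c_n)$ is bounded and bounded away from $0$. If $c_n\to0$ along a subsequence, the left inequality at $c_n$ gives $\frac{\partial u}{\partial c}(c_n,k_n)\le -p_n\,D_{c,+}F(k_n,c_n)$, whose right-hand side stays bounded because $D_{c,+}F(k_n,c_n)$ remains bounded as $(k_n,c_n)\to(k_0,0)$ (concavity and continuity of $F$); on the other hand, using that $\frac{\partial u}{\partial c}$ is decreasing in $c$ together with its continuity on $\mathbb{R}^2_{++}$, the pointwise Inada limit upgrades to $\frac{\partial u}{\partial c}(c_n,k_n)\to+\infty$, a contradiction. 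A symmetric computation with the right inequality, the bound $|D_{c,-}F(k_n,c_n)|\ge|D_{c,-}F(k_n,c_0)|$ (bounded below by a positive constant by strict monotonicity and continuity of $F$), and the locally uniform limit $\frac{\partial u}{\partial c}(c_n,k_n)\to0$ rules out $c_n\to\infty$. Hence $(c_n)$ lies in a compact interval $[a,b]\subset(0,\infty)$; any subsequential limit $\bar c$ satisfies, by joint continuity of $\phi$ on $[a,b]$ times a neighbourhood of $(p_0,k_0)$, the inequality $\phi_{p_0,k_0}(\bar c)\ge\phi_{p_0,k_0}(c)$ for all $c\ge0$, so $\bar c=c^*(p_0,k_0)$ by uniqueness. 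Therefore $c_n\to c^*(p_0,k_0)$, proving continuity. Equivalently, once these uniform bounds are in hand, one may simply invoke Berge's maximum theorem on the compact constraint set $[a,b]$.

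The main obstacle I anticipate is precisely these uniform bounds on the maximizer: Assumption 4 supplies the Inada conditions only pointwise in $k$, so one must upgrade $\lim_{c\downarrow0}\frac{\partial u}{\partial c}=+\infty$ and $\lim_{c\to\infty}\frac{\partial u}{\partial c}=0$ to hold locally uniformly in $k$, which I would do by exploiting monotonicity of $\frac{\partial u}{\partial c}$ in $c$ and its continuity in $(c,k)$. A secondary care point is that $F$ need not be differentiable in $c$, so every marginal comparison must be phrased through the one-sided derivatives $D_{c,\pm}F$ and the subdifferential $\partial_cF$, whose local boundedness and strict negativity away from $c=0$ I would extract from the concavity, continuity, and strict monotonicity of $F$ in $c$.
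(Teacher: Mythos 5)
Your proof is correct. Note first that this paper does not actually contain a proof of Proposition \ref{Prop1}: Subsection 3.2 explicitly imports it (together with Lemma \ref{Lemma3} and Propositions \ref{Prop2}--\ref{Prop3}) from Hosoya \cite{HOA} and omits the argument, so there is no in-paper proof to compare against; your write-up is a valid self-contained substitute along the standard lines one would expect. The two places where such an argument usually breaks are exactly the ones you flag and handle correctly. First, the strict negativity $D_{c,+}F(k,c_0)<0$ for every $c_0>0$ is not immediate from ``sup of negative difference quotients'' (which could a priori be $0$), but it does follow as you say: if $D_{c,+}F(k,c_0)=0$, concavity forces $F(k,c)\le F(k,c_0)$ for $c<c_0$, contradicting that $F$ is decreasing in $c$; this is what makes $\phi_{p,k}$ eventually strictly decreasing once $\partial u/\partial c\to 0$. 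Second, the upgrade of the pointwise Inada limits in Assumption 4 to the limits along sequences $(c_n,k_n)\to(k_0,0)$ or $c_n\to\infty$ is genuinely needed for the compactness step, and your mechanism (monotonicity of $\partial u/\partial c$ in $c$ to compare with a fixed $\bar c$, then continuity in $k$, then let $\bar c\to 0$ or $\bar c\to\infty$) is exactly the right one. The remaining steps -- lower bound on $D_{c,+}F(k_n,c_n)$ near $c=0$ via a fixed chord and continuity of $F$, uniqueness from strict concavity of $u(\cdot,k)$, and the sequential maximum-theorem argument identifying every subsequential limit with $c^*(p_0,k_0)$ -- are all sound, including the treatment of $c=0$ where $u$ may equal $-\infty$.
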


\begin{prop}\label{Prop2}
Suppose that Assumptions 1-3 hold. Then, $\bar{V}(\bar{k})>-\infty$ for every $\bar{k}>0$ and the function $\bar{V}$ is nondecreasing and concave.
\end{prop}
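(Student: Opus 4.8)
The plan is to prove the three assertions---finiteness from below, monotonicity, and concavity---largely independently, with the comparison principle for ODEs (Lemma \ref{Lemma3}) and the pure accumulation path (Lemma \ref{Lemma1}) as the common workhorses. For $\bar V(\bar k)>-\infty$ I would exhibit a single admissible pair whose objective is bounded below, obtained by consuming a small positive constant. By the last part of Assumption 3 and its footnote, $k\mapsto F(k,0)$ is increasing on some $[0,k^*]$, and since $F(0,0)=0$ this yields $F(k_1,0)>0$ for every $k_1\in\,]0,\min\{k^*,\bar k\}]$. Fixing such a $k_1$ and using continuity of $F$ in $c$, I can choose $c_0>0$ small enough that $F(k_1,c_0)>0$. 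For the autonomous equation $\dot k=F(k,c_0)$ with $k(0)=\bar k\ge k_1$, the vector field is strictly positive at the level $k_1$, so the set $\{k\ge k_1\}$ is forward invariant and $k(t)\ge k_1>0$ for all $t$; global existence follows by comparing from above with $k^+(\cdot,\bar k)$ (Lemma \ref{Lemma1}) and from below with $k_1$. Along this path $u(c_0,k(t))\ge u(c_0,k_1)\in\mathbb{R}$ (monotonicity of $u$ in $k$ and finiteness of $u$ on $\mathbb{R}^2_{++}$), so the objective is at least $u(c_0,k_1)/\rho>-\infty$, and the pair lies in $A_{\bar k}$.

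For monotonicity, take $\bar k_1\le\bar k_2$ and any $(k_1,c)\in A_{\bar k_1}$ with objective $>-\infty$. Running the same control from the larger initial stock, i.e. solving $\dot k_2=F(k_2,c)$ with $k_2(0)=\bar k_2$, the comparison part of Lemma \ref{Lemma3} gives $k_2(t)\ge k_1(t)\ge0$, so $(k_2,c)\in A_{\bar k_2}$; since $u$ is nondecreasing in $k$ the objective weakly increases, and taking the supremum over $A_{\bar k_1}$ yields $\bar V(\bar k_1)\le\bar V(\bar k_2)$.

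For concavity, fix $\bar k_1,\bar k_2>0$, $\lambda\in[0,1]$, set $\bar k_\lambda=\lambda\bar k_1+(1-\lambda)\bar k_2$, and pick near-optimal pairs $(k_i,c_i)\in A_{\bar k_i}$. The convex combinations $k_\lambda=\lambda k_1+(1-\lambda)k_2$ and $c_\lambda=\lambda c_1+(1-\lambda)c_2$ satisfy $k_\lambda(0)=\bar k_\lambda$, $c_\lambda\in W$ (both $W_1$ and $W_2$ are convex), but concavity of $F$ gives only the differential inequality $\dot k_\lambda=\lambda F(k_1,c_1)+(1-\lambda)F(k_2,c_2)\le F(k_\lambda,c_\lambda)$. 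To restore admissibility I would absorb this slack using the monotonicity structure: because $F$ is decreasing in $c$ and $u$ is increasing in $c$, one can either replace $c_\lambda$ by a larger control achieving equality, or keep $c_\lambda$ and pass to the genuine, pointwise larger trajectory $\hat k\ge k_\lambda$ solving $\dot{\hat k}=F(\hat k,c_\lambda)$ (globally defined by comparison with $k^+(\cdot,\bar k_\lambda)$). Either way one obtains an admissible pair from $\bar k_\lambda$ whose instantaneous utility dominates $u(c_\lambda,k_\lambda)\ge\lambda u(c_1,k_1)+(1-\lambda)u(c_2,k_2)$ by concavity of $u$; integrating and letting the approximation error tend to $0$ gives $\bar V(\bar k_\lambda)\ge\lambda\bar V(\bar k_1)+(1-\lambda)\bar V(\bar k_2)$.

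The main obstacle is this concavity step, specifically the passage from the differential inequality back to an admissible trajectory: the equality constraint $\dot k=F(k,c)$ is not preserved under convex combination, and the repair must be carried out measurably in $t$ and so that the objective integral is well defined. Both this repair and the monotonicity comparison ultimately rest on Lemma \ref{Lemma3}, whose Lipschitz hypothesis on $F(\cdot,c(t))$ is delicate near $k=0$, where a concave function on $\mathbb{R}_+$ need not be Lipschitz and where an admissible path may touch the boundary; the careful part is confining each comparison to the region in which positivity holds and treating the boundary by continuity.
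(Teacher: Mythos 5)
Since the paper itself does not prove Proposition \ref{Prop2} (it is imported verbatim from Hosoya \cite{HOA}, with the proof omitted), your attempt can only be judged on its own terms. Your architecture is the natural one, and your finiteness step is essentially complete: the constructed path stays in $\{k\ge k_1\}$ with $k_1>0$, and on that region concavity does give local Lipschitz continuity of $F(\cdot,c)$, so every comparison you need there is legitimate. The genuine gap is in the other two steps, exactly where you lean on Lemma \ref{Lemma3}(ii) for paths that may touch $k=0$. This is not a technicality that can be settled ``by continuity'': Assumption 3 admits $F(k,c)=\sqrt{k}-c$ (the paper's own example in Subsection 3.5), for which $F(\cdot,c)$ is not Lipschitz near $0$, admissible paths can reach $k=0$ (take $c$ large), and the footnote to Lemma \ref{Lemma3} exhibits precisely the failure of comparison in this situation ($\dot k=\sqrt{|k|}$ with equal initial conditions). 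So your monotonicity step is unproved in the only hard case. It can be repaired without any comparison theorem: take a local solution $k_2$ of $\dot k_2=F(k_2,c(t))$, $k_2(0)=\bar k_2$, let $\tau$ be the first time $k_2$ meets $k_1$, and paste, i.e.\ set $\tilde k_2=k_2$ on $[0,\tau[$ and $\tilde k_2=k_1$ on $[\tau,\infty[$. Before $\tau$ one has $k_2>k_1\ge 0$, so $k_2$ cannot exit through the boundary before meeting $k_1$, finite-time blow-up is excluded by the affine majorant $F(k,c)\le F(k,0)\le F(1,0)+D_{k,+}F(1,0)(k-1)$ and Gronwall, and the pasted path is absolutely continuous, solves the ODE a.e., and dominates $k_1$.

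The concavity step has a second, related gap. Your ``Option B'' (compare the true trajectory for the control $c_\lambda$ with the subsolution $k_\lambda$) suffers the same Lipschitz failure, and here pasting is unavailable because $k_\lambda$ is not a solution; so ``Option A'' (enlarge the control to restore equality) is the route that must be carried out. There the missing point is not only measurability: you must check that the enlarged control lies in $W$, i.e.\ is locally integrable when $W=W_1$ (locally bounded when $W=W_2$). Existence of $\tilde c(t)=\min\{c\ge c_\lambda(t): F(k_\lambda(t),c)=\dot k_\lambda(t)\}$ is fine (a decreasing concave function of $c$ tends to $-\infty$), but its size needs an estimate, which concavity does provide: for $c\ge \max\{c_\lambda(t),1\}$ the slope of $F(k,\cdot)$ is at most $F(k,1)-F(k,0)=:-\sigma(k)<0$, with $\sigma$ continuous and positive, so on $[0,T]$,
\[
\tilde c(t)\ \le\ c_\lambda(t)+1+\frac{\max_{0\le k\le \max_{[0,T]} k_\lambda}F(k,0)+|\dot k_\lambda(t)|}{\min_{0\le k\le \max_{[0,T]} k_\lambda}\sigma(k)},
\]
which is locally integrable because $\dot k_\lambda$, being the derivative of an absolutely continuous function, is. With this bound in place, $u(\tilde c,k_\lambda)\ge u(c_\lambda,k_\lambda)\ge \lambda u(c_1,k_1)+(1-\lambda)u(c_2,k_2)$, the negative parts are dominated so the objective is well defined, and the usual truncation $\min\{\bar V(\bar k_i)-\varepsilon,N\}$ handles possibly infinite values. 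In short: right plan and a correct finiteness argument, but both remaining steps as written rest on an inapplicable lemma, and the control-repair step omits the one estimate (membership of $\tilde c$ in $W$) that makes it legitimate.
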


By Proposition \ref{Prop2}, $\bar{V}(k)\in\mathbb{R}$ for some $k>0$ if and only if $\bar{V}(k)\in \mathbb{R}$ for every $k>0$. We say that $\bar{V}$ is \textbf{finite} if $\bar{V}(k)\in\mathbb{R}$ for every $k>0$.

\begin{prop}\label{Prop3}
Suppose that Assumptions 1-4 hold. If the value function $\bar{V}$ is finite, then it is increasing, and it is a viscosity solution to the HJB equation.
\end{prop}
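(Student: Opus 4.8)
The plan is to build everything on the two sub/super-optimality inequalities recorded at the end of Subsection 2.2, which already encode Bellman's principle: for every admissible $(k,c)\in A_{\bar{k}}$ with finite payoff and every $T>0$ with $k(T)>0$,
\[
\int_0^T e^{-\rho t}u(c(t),k(t))dt \ge \int_0^{\infty} e^{-\rho t}u(c(t),k(t))dt - e^{-\rho T}\bar{V}(k(T)),
\]
and, when $\bar{V}$ is finite, a near-optimal path giving the reverse up to $\varepsilon$. From Proposition \ref{Prop2} I already know $\bar{V}$ is nondecreasing and concave, so the task splits into three pieces: (i) upgrading nondecreasing to increasing; (ii) the viscosity subsolution inequality; (iii) the viscosity supersolution inequality.

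For (i) I would argue directly. Fix $\bar{k}_1<\bar{k}_2$ and a payoff-$\varepsilon$-optimal admissible pair $(k_1,c_1)$ from $\bar{k}_1$. Feeding the same control $c_1$ into the dynamics from the larger stock $\bar{k}_2$ and invoking the comparison statement of Lemma \ref{Lemma3}(ii) yields a trajectory $k_2(t)\ge k_1(t)$, and since $u$ is nondecreasing in $k$ this already re-proves $\bar{V}(\bar{k}_2)\ge\bar{V}(\bar{k}_1)$. To make the inequality strict I would exploit the strict monotonicity of $u$ in $c$ together with the positive capital buffer $k_2-k_1>0$: on a short interval I perturb the control to $c_1+\eta$ with a small $\eta>0$, which is feasible because $F$ is only decreasing in $c$ and the buffer absorbs the extra depletion for small $\eta$ and small time, while the strict increase of $u$ in $c$ produces a strictly positive gain in the integral. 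Letting $\varepsilon\downarrow0$ gives $\bar{V}(\bar{k}_2)>\bar{V}(\bar{k}_1)$. A consequence I will use afterwards is that $\partial\bar{V}(k)\subset\mathbb{R}_{++}$, i.e. every subderivative is strictly positive.

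For (ii), fix $k>0$ and a $C^1$ test function $\varphi\le\bar{V}$ with $\varphi(k)=\bar{V}(k)$. For each fixed $c\ge0$ the map $k'\mapsto F(k',c)$ is continuous and locally bounded, so Lemma \ref{Lemma3}(i) produces a $C^1$ trajectory $k_c(t)$ on some $[0,T]$ with $k_c(0)=k$ and $\dot{k}_c=F(k_c,c)$. Applying the universally valid inequality above to the admissible pair that uses the constant control $c$ on $[0,T]$ and then continues optimally, and replacing $\bar{V}(k_c(T))\ge\varphi(k_c(T))$, I obtain
\[
\varphi(k)\ge \int_0^T e^{-\rho t}u(c,k_c(t))dt + e^{-\rho T}\varphi(k_c(T)).
\]
Since $\varphi$ is smooth I divide by $T$ and let $T\downarrow0$; the left-hand difference quotient tends to $\rho\varphi(k)-F(k,c)\varphi'(k)$ and the right-hand one to $u(c,k)$, giving $F(k,c)\varphi'(k)+u(c,k)\le\rho\bar{V}(k)$ for every $c$, and taking the supremum over $c$ closes the subsolution inequality.

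Step (iii) is the substantive one, and I would argue by contradiction: suppose that at some $k_0$ there is a $C^1$ test function $\varphi\ge\bar{V}$ with $\varphi(k_0)=\bar{V}(k_0)$ for which $\sup_{c\ge0}\{F(k_0,c)\varphi'(k_0)+u(c,k_0)\}<\rho\bar{V}(k_0)$. Because $\varphi$ touches the concave $\bar{V}$ from above, $p:=\varphi'(k_0)\in\partial\bar{V}(k_0)$, and by (i) $p>0$; hence Proposition \ref{Prop1} guarantees that the supremum is attained at a finite $c^*(p,k_0)$ and, by the continuity there, the strict gap persists on a neighbourhood $[k_0-\sigma,k_0+\sigma]$ with a uniform margin. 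Taking a near-optimal path for small $\varepsilon$, choosing $T$ small enough that $k(T)>0$ (possible since $k(\cdot)$ is absolutely continuous with $k(0)=k_0>0$, which also disposes of the ``$k(T)=0$'' alternative), and integrating $\frac{d}{dt}[e^{-\rho t}\varphi(k(t))]+e^{-\rho t}u(c(t),k(t))$ while the trajectory stays in the neighbourhood, the uniform gap forces
\[
\int_0^T e^{-\rho t}u(c(t),k(t))dt + e^{-\rho T}\varphi(k(T)) \le \varphi(k_0) - \kappa\,\frac{1-e^{-\rho T}}{\rho}
\]
for some $\kappa>0$, which after using $\varphi\ge\bar{V}$ contradicts the near-optimal inequality once $\varepsilon$ is taken below the fixed positive defect. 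The hard part will be precisely this trajectory control: when $W=W_1$ the control may be unbounded, so $k(\cdot)$ could leave the neighbourhood before time $T$; I expect to handle this by stopping at the exit time and showing, via the growth bound $F(k,c)\ge -d_1k-\delta_2(c)$ of Assumption 3 together with the integrability built into admissibility, that a genuine exit would itself already contradict the near-optimal inequality, so that the estimate survives in either case. Establishing this uniform, exit-proof gap estimate is the crux of the whole proposition.
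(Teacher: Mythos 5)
Your overall architecture (Bellman inequalities from Subsection 2.2, test functions from below for the subsolution half, near-optimal paths plus a contradiction for the supersolution half) is the natural one, and since the paper imports Proposition \ref{Prop3} from Hosoya \cite{HOA} without reproducing the proof, I can only judge the proposal on its own terms. Step (ii) is essentially correct as written (the only loose end is $c=0$, where $u(0,k)$ may equal $-\infty$; prove the inequality for $c>0$ and let $c\downarrow 0$ using continuity of $u$ and $F$). The other two steps, however, each contain a genuine gap, and they are linked.

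The supersolution step (iii), which you yourself call the crux, is not closed, and the repair you hint at is the wrong mechanism. A downward exit does not ``itself contradict the near-optimal inequality'': applying your gap estimate on $[0,\tau]$, where $\tau$ is the first exit time from $[k_0-\sigma,k_0+\sigma]$, and comparing with the near-optimal inequality at $T=\tau$ yields only $\kappa(1-e^{-\rho\tau})/\rho<\varepsilon$, i.e.\ the $\varepsilon$-optimal path must exit within time $O(\varepsilon)$ --- a constraint, not a contradiction. Since $F(k,c)\le F(k,0)$ bounds the upward speed, the exit is necessarily downward for small $\varepsilon$, and the missing content is precisely that a dive from $k_0$ to $k_0-\sigma$ in time $\tau\to0$ produces vanishing utility while forfeiting the fixed amount $\bar{V}(k_0)-e^{-\rho\tau}\bar{V}(k_0-\sigma)-\varepsilon$, which stays bounded away from $0$ only because $\bar{V}$ is \emph{increasing}. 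The two estimates needed here cannot be obtained from the tools you cite (the lower bound $F\ge -d_1k-\delta_2(c)$ and path-by-path integrability of $c$, which is not uniform in $\varepsilon$). What works is the opposite pair: (a) concavity and strict decrease of $F$ in $c$ give $F(k,c)\le F(k,0)-\alpha_0 c$ for $c\ge1$, with $\alpha_0=\min_{|k-k_0|\le\sigma}[F(k,0)-F(k,1)]>0$, so integrating the dynamics over the dive bounds total consumption, $\int_0^\tau c\,dt\le(\sigma+C\tau)/\alpha_0$; and (b) Assumption 4 ($\partial u/\partial c\to0$ as $c\to\infty$) makes $u$ sublinear in $c$, $u(c,k_0+\sigma)\le A_\eta+\eta c$ for every $\eta>0$, so the dive utility is at most $\max\{A_\eta,0\}\tau+\eta(\sigma+C\tau)/\alpha_0$, arbitrarily small. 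Without (a) and (b), the claim that ``the estimate survives in either case'' is unsupported.

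Step (i) also fails as stated, and this propagates: both your use of Proposition \ref{Prop1} (which needs $p=\varphi'(k_0)>0$) and the positive value drop in the dive argument rest on increasingness. The gain from replacing $c_1$ by $c_1+\eta$ on $[0,h]$ is $\int_0^h e^{-\rho t}[u(c_1(t)+\eta,\cdot)-u(c_1(t),\cdot)]dt$, which is positive for each fixed path but \emph{not uniformly} positive over the family of $\varepsilon$-optimal paths: Assumption 4 forces $\partial u/\partial c(c,k)\to0$ as $c\to\infty$, and for $W=W_1$ the $\varepsilon$-optimal control, which changes with $\varepsilon$, may be arbitrarily large on $[0,h]$. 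So you only get $\bar{V}(\bar{k}_2)\ge\bar{V}(\bar{k}_1)-\varepsilon+G_\varepsilon$ with possibly $G_\varepsilon\to0$, and letting $\varepsilon\downarrow0$ recovers merely the weak inequality. (A secondary flaw: Lemma \ref{Lemma3}(ii) requires one right-hand side to be Lipschitz in $k$ uniformly in $t$, which can fail for $h(t,k)=F(k,c_1(t))$ with unbounded $c_1$, since the $k$-slope of the jointly concave $F(\cdot,c)$ may grow linearly in $c$; a one-sided Lipschitz/Gronwall argument with a locally integrable coefficient is needed instead.) A more promising route uses concavity: if $\bar{V}$ is nondecreasing but not increasing it is constant on some half-line $[a,\infty)$, and running a constant control $c^\flat$ from $2a$ until the stock reaches $a$ forces $\rho\bar{V}(a)\ge u(c^\flat,a)$ for every $c^\flat$, which is immediately absurd when $\sup_c u(c,a)=+\infty$; the case of $u$ bounded in $c$ still needs a separate argument, so this part of the proof must be rebuilt rather than patched.
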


\subsection{Necessity of the HJB Equation}
Because Proposition \ref{Prop3} requires the finiteness of the value function, we want an additional assumption that ensures the finiteness of the value function. However, we cannot assume that $u(c,k)$ is bounded, because a typical example of $u(c,k)$ is $\log c$. Therefore, we need an alternative condition for $u$.

First, define the {\bf constant relative risk aversion} (CRRA) function as follows:
\[u_{\theta}(x)=\begin{cases}
\frac{x^{1-\theta}-1}{1-\theta} & \mbox{if }\theta\neq 1,\\
\log x & \mbox{if }\theta=1,
\end{cases}\]
where $\theta>0$. This function is the unique solution to the following differential equation:
\[-x\frac{u''(x)}{u'(x)}=\theta,\ u(1)=0,\ u'(1)=1,\]
where the left-hand side of this equation is sometimes called the {\bf relative risk aversion} of $u$.

Using this function, we present an additional assumption.

\vspace{12pt}
\noindent
\textbf{Assumption 5}. There exist $k^*>0, c^*\ge 0, \gamma>0, \delta>0, \theta>0, a>0, b\ge 0, C\in \mathbb{R}$ such that
\begin{align}
&~(\gamma,-\delta)\in \partial F(k^*,c^*),\label{FL}\\
&~\rho-(1-\theta)\gamma>0,\label{SL}\\
&~u(c,k)\le au_{\theta}(c)+bu_{\theta}(k)+C\mbox{ for all }c>0, k>0.\label{TL}
\end{align}

\vspace{12pt}
Then, we obtain the following result.

\begin{lem}\label{Lemma4}
Suppose that Assumptions 1-5 hold. Define\footnote{Because
\[-k^*+\frac{\delta c^*+F(k^*,c^*)}{\gamma}=\frac{1}{\gamma}[\gamma(0-k^*)-\delta(0-c^*)+F(k^*,c^*)]\ge 0,\]
we have that $\hat{k},C^*>0$ for all $\bar{k}>0$.}
\[\hat{k}=\bar{k}-k^*+\frac{\delta c^*+F(k^*,c^*)}{\gamma},\ C^*=\frac{\rho-(1-\theta)\gamma}{\theta \delta}\hat{k},\]
\[V_3(\bar{k})=\begin{cases}
\frac{(C^*)^{1-\theta}\theta}{(1-\theta)(\rho-(1-\theta)\gamma)}-\frac{1}{\rho(1-\theta)}, & \mbox{if }\theta\neq 1,\\
\frac{\log C^*}{\rho}+\frac{\gamma-\rho}{\rho^2} & \mbox{if }\theta=1
\end{cases}\]
and
\[V_4(\bar{k})=\begin{cases}
\frac{\hat{k}^{1-\theta}}{(1-\theta)(\rho-(1-\theta)\gamma)}-\frac{1}{\rho(1-\theta)} & \mbox{if }\theta\neq 1,\\
\frac{\log \hat{k}}{\rho}+\frac{\gamma}{\rho^2}, & \mbox{if }\theta=1.
\end{cases}\]

Then,
\[aV_3(\bar{k})+bV_4(\bar{k})+\frac{C}{\rho}\ge \bar{V}(\bar{k})\]
for all $\bar{k}>0$.
\end{lem}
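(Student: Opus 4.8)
The plan is to collapse the fully nonlinear problem to a linear one by exploiting the single subgradient inequality (\ref{FL}), and then to bound the two pieces $u_\theta(c)$ and $u_\theta(k)$ of the majorant (\ref{TL}) separately. First I would translate (\ref{FL}) into the affine envelope
\[F(k,c)\le F(k^*,c^*)+\gamma(k-k^*)-\delta(c-c^*)=\gamma k-\delta c+\beta,\qquad \beta:=F(k^*,c^*)-\gamma k^*+\delta c^*,\]
valid for all $(k,c)$; evaluating at $(0,0)$ together with $F(0,0)=0$ gives $\beta\ge 0$, and one checks $\hat{k}=\bar{k}+\beta/\gamma$. Along any admissible pair this yields $\dot{k}(t)\le \gamma k(t)-\delta c(t)+\beta$ a.e. Introducing the auxiliary state $\tilde{k}(t):=k(t)+\beta/\gamma$ (so $\tilde{k}(0)=\hat{k}$ and $\tilde{k}\ge \beta/\gamma\ge 0$) converts this into the linear differential inequality $\dot{\tilde{k}}(t)\le \gamma\tilde{k}(t)-\delta c(t)$.

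Two consequences follow. Multiplying by $e^{-\gamma t}$ gives $\frac{d}{dt}(e^{-\gamma t}\tilde{k})\le -\delta e^{-\gamma t}c$, hence for every $T$,
\[\delta\int_0^Te^{-\gamma t}c(t)\,dt\le \hat{k}-e^{-\gamma T}\tilde{k}(T)\le \hat{k},\]
so $c$ obeys the present-value budget constraint $\int_0^\infty e^{-\gamma t}c(t)\,dt\le \hat{k}/\delta$; and dropping the $-\delta c$ term gives $\dot{\tilde{k}}\le \gamma\tilde{k}$, whence $e^{-\gamma t}\tilde{k}(t)$ is nonincreasing and $k(t)\le \tilde{k}(t)\le \hat{k}e^{\gamma t}$.

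For the consumption term I would run a Lagrangian/duality argument against this budget constraint. Define the candidate path $c^*(t):=C^*e^{(\gamma-\rho)t/\theta}$ with $C^*$ as in the statement; a direct computation using (\ref{SL}) for convergence shows that the budget binds, $\int_0^\infty e^{-\gamma t}c^*(t)\,dt=\hat{k}/\delta$, and that $\int_0^\infty e^{-\rho t}u_\theta(c^*(t))\,dt=V_3(\bar{k})$ (both the $\theta\neq 1$ and $\theta=1$ cases). Concavity of $u_\theta$ gives the tangent inequality $u_\theta(c)\le u_\theta(c^*(t))+u_\theta'(c^*(t))(c-c^*(t))$, and since $e^{-\rho t}u_\theta'(c^*(t))=(C^*)^{-\theta}e^{-\gamma t}$ by construction, multiplying by $e^{-\rho t}$ and integrating turns the linear term into a positive multiple of $\int_0^Te^{-\gamma t}(c-c^*)\,dt$, whose $\limsup$ as $T\to\infty$ is $\le 0$ by the budget constraint. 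Letting $T\to\infty$ yields $\int_0^\infty e^{-\rho t}u_\theta(c(t))\,dt\le V_3(\bar{k})$. For the capital term, monotonicity of $u_\theta$ and $k(t)\le \hat{k}e^{\gamma t}$ give $u_\theta(k(t))\le u_\theta(\hat{k}e^{\gamma t})$, and a direct computation of $\int_0^\infty e^{-\rho t}u_\theta(\hat{k}e^{\gamma t})\,dt$ (again using (\ref{SL})) gives exactly $V_4(\bar{k})$.

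Finally I would combine: where $c,k>0$, (\ref{TL}) gives $u(c,k)\le au_\theta(c)+bu_\theta(k)+C$; integrating against $e^{-\rho t}$ over $[0,T]$, inserting the two bounds, and letting $T\to\infty$ gives $\int_0^\infty e^{-\rho t}u(c,k)\,dt\le aV_3(\bar{k})+bV_4(\bar{k})+C/\rho$ for every admissible pair, so taking the supremum yields the claim. I expect the main obstacle to be the rigorous passage to the limit in the duality step, namely controlling the cross term $\int_0^Te^{-\gamma t}(c-c^*)\,dt$ (whose finite-horizon value need not be negative) and ensuring the separate integrals are well defined as upper bounds, together with the bookkeeping at times where $k(t)=0$ or $c(t)=0$, where the pointwise majorant (\ref{TL}) and the $-\infty$ convention must be handled with care. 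The verification computations identifying $V_3$ and $V_4$, including the logarithmic case $\theta=1$, are routine but must be carried out in both regimes.
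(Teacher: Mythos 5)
Your proof is correct, and its analytical core coincides with the paper's: both exploit the subgradient inequality (\ref{FL}) to dominate the nonlinear dynamics by an affine one, both use the same candidate path $c^*(t)=C^*e^{(\gamma-\rho)t/\theta}$ (whose discounted marginal utility is proportional to $e^{-\gamma t}$), both bound capital by $\hat{k}e^{\gamma t}$, and both finish with the tangent-line inequality for $u_\theta$ and the majorization (\ref{TL}). The organization, however, is genuinely different. The paper proceeds through a chain of auxiliary control problems: it defines value functions $V_1$ and $V_2$ for the linearized dynamics kept as an ODE with equality, proves $\bar{V}\le V_1$ via the ODE comparison result (Lemma \ref{Lemma3}), proves $V_1\le V_2+C/\rho$ via (\ref{TL}), and in Step 3 disposes of the linear correction term by an exact integration by parts against the state equation, producing the boundary term $\delta^{-1}e^{-\rho T}u_{\theta}'(c^*(T))(k^*(T)-k(T))$, which is then controlled as $T\to\infty$. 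You instead integrate the differential inequality $\dot{k}\le \gamma k-\delta c+\beta$ once and for all into the present-value budget constraint $\delta\int_0^{\infty}e^{-\gamma t}c(t)\,dt\le \hat{k}$, and run a purely static duality argument against that constraint, using that the candidate path exhausts the budget exactly. This buys economy: you never need the auxiliary value functions $V_1,V_2$, nor Lemma \ref{Lemma3} (a Gronwall-type integration of the inequality suffices), and the passage to the limit reduces to monotone convergence of $\int_0^Te^{-\gamma t}c\,dt$ rather than an analysis of boundary terms. What the paper's route buys is modularity (each inequality in the chain $\bar{V}\le V_1\le V_2+C/\rho\le aV_3+bV_4+C/\rho$ is isolated and reusable) and an exact identity along the candidate path in the style of standard verification arguments. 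The caveats you flag are the right ones and are routine to discharge: the positive parts of $e^{-\rho t}u_\theta(c(t))$ and $e^{-\rho t}u_\theta(k(t))$ are integrable on $[0,\infty)$ because your own tangent inequality bounds them by $e^{-\rho t}|u_\theta(c^*(t))|+(C^*)^{-\theta}e^{-\gamma t}(c(t)+c^*(t))$ and by $e^{-\rho t}\max\{u_\theta(\hat{k}e^{\gamma t}),0\}$ respectively, both integrable under (\ref{SL}); and (\ref{TL}) extends to $c=0$ or $k=0$ by continuity of $u$ on $\mathbb{R}^2_+$.
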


\begin{proof}
First, consider the following problem:
\begin{align*}
\max~~~~~&~\int_0^{\infty}e^{-\rho t}u(c(t),k(t))dt\\
\mbox{subject to. }&~c(t)\in W_1,\\
&~k(t)\ge 0,\ c(t)\ge 0,\\
&~\int_0^{\infty}e^{-\rho t}u(c(t),k(t))dt\mbox{ can be defined},\\
&~\dot{k}(t)=\gamma (k(t)-k^*)-\delta (c(t)-c^*)+F(k^*,c^*)\mbox{ a.e.},\\
&~k(0)=\bar{k}.
\end{align*}
Define $A^L_{\bar{k}}$ as the set of all pairs $(k(t),c(t))$ of nonnegative functions such that $k(t)$ is absolutely continuous on every compact set, $c(t)\in W_1$, $\int_0^{\infty}e^{-\rho t}u(c(t),k(t))dt$ can be defined, $k(0)=\bar{k}$, and
\[\dot{k}(t)=\gamma (k(t)-k^*)-\delta(c(t)-c^*)+F(k^*,c^*)\]
for almost all $t\ge 0$. Let
\[V_1(\bar{k})=\sup\left\{\left.\int_0^{\infty}e^{-\rho t}u(c(t),k(t))dt\right|(k(t),c(t))\in A^L_{\bar{k}}\right\}.\]

\begin{step}
$V_1(\bar{k})\ge \bar{V}(\bar{k})$ for all $\bar{k}>0$.
\end{step}

\begin{proof}[{\bf Proof of Step 1}]
By Proposition \ref{Prop2}, $\bar{V}(\bar{k})>-\infty$. For every $\varepsilon>0$ and $N>0$, there exists $(k(t),c(t))\in A_{\bar{k}}$ such that $c(t)$ is bounded and
\[\int_0^{\infty}e^{-\rho t}u(c(t),k(t))dt\ge \min\{\bar{V}(\bar{k})-\varepsilon,N\}.\]
Consider the following differential equation:
\[\dot{k}(t)=\gamma (k(t)-k^*)-\delta (c(t)-c^*)+F(k^*,c^*),\ k(0)=\bar{k}.\]
The solution to the above equation is
\[\hat{k}(t)=e^{\gamma t}\left[\bar{k}-\int_0^te^{-\gamma s}(\gamma k^*+\delta(c(s)-c^*)-F(k^*,c^*))ds\right].\]
Because $(\gamma,-\delta)\in \partial F(k^*,c^*)$, $F(k,c)\le \gamma (k-k^*)-\delta (c-c^*)+F(k^*,c^*)$ for all $(k,c)$, and thus, by Lemma \ref{Lemma3}, $\hat{k}(t)\ge k(t)$ for every $t\ge 0$. Therefore, $(\hat{k}(t),c(t))\in A^L_{\bar{k}}$, and thus,
\[V_1(\bar{k})\ge \min\{\bar{V}(\bar{k})-\varepsilon,N\}.\]
Because $\varepsilon, N$ are arbitrary, we have that $V_1(\bar{k})\ge \bar{V}(\bar{k})$. This completes the proof of Step 1.
\end{proof}

Second, consider the following problem: 
\begin{align*}
\max~~~~~&~\int_0^{\infty}e^{-\rho t}[au_{\theta}(c(t))+bu_{\theta}(k(t))]dt\\
\mbox{subject to. }&~c(t)\in W_1,\\
&~k(t)\ge 0,\ c(t)\ge 0,\\
&~\int_0^{\infty}e^{-\rho t}[au_{\theta}(c(t))+bu_{\theta}(k(t))]dt\mbox{ can be defined},\\
&~\dot{k}(t)=\gamma (k(t)-k^*)-\delta(c(t)-c^*)+F(k^*,c^*)\mbox{ a.e.},\\
&~k(0)=\bar{k}.
\end{align*}
Define $A^{L2}_{\bar{k}}$ as the set of all pairs $(k(t),c(t))$ of nonnegative functions such that $k(t)$ is absolutely continuous in every compact set, $c(t)\in W_1$, $\int_0^{\infty}e^{-\rho t}[au_{\theta}(c(t))+bu_{\theta}(k(t))]dt$ can be defined, $k(0)=\bar{k}$, and
\[\dot{k}(t)=\gamma(k(t)-k^*)-\delta(c(t)-c^*)+F(k^*,c^*)\]
for almost all $t\ge 0$. Let
\[V_2(\bar{k})=\sup\left\{\left.\int_0^{\infty}e^{-\rho t}[au_{\theta}(c(t))+bu_{\theta}(k(t))]dt\right|(k(t),c(t))\in A^{L2}_{\bar{k}}\right\}.\]

\begin{step}
$V_2(\bar{k})+\frac{C}{\rho}\ge V_1(\bar{k})$ for all $\bar{k}>0$.
\end{step}

\begin{proof}[{\bf Proof of Step 2}]
By Proposition \ref{Prop2} and Step 1, $V_1(\bar{k})>-\infty$. For every $\varepsilon>0$ and $N>0$, there exists $(k(t),c(t))\in A^L_{\bar{k}}$ such that
\[\int_0^{\infty}e^{-\rho t}u(c(t),k(t))dt\ge \min\{V_1(\bar{k})-\varepsilon,N\}.\]
Because
\[\int_0^{\infty}e^{-\rho t}\min\{au_{\theta}(c(t))+bu_{\theta}(k(t))+C,0\}dt\ge \int_0^{\infty}e^{-\rho t}\min\{u(c(t),k(t)),0\}dt>-\infty,\]
we have that
\[\int_0^{\infty}e^{-\rho t}[au_{\theta}(c(t))+bu_{\theta}(k(t))]dt\]
can be defined. Hence, $(k(t),c(t))\in A^{L2}_{\bar{k}}$, and thus,
\[V_2(\bar{k})+\frac{C}{\rho}\ge \min\{V_1(\bar{k})-\varepsilon,N\}.\]
Because $\varepsilon, N$ are arbitrary, $V_2(\bar{k})+\frac{C}{\rho}\ge V_1(\bar{k})$. This completes the proof of Step 2.
\end{proof}

\begin{step}
$aV_3(\bar{k})+bV_4(\bar{k})\ge V_2(\bar{k})$ for all $\bar{k}>0$.
\end{step}

\begin{proof}[{\bf Proof of Step 3}]
Let
\[c^*(t)=C^*e^{\frac{\gamma-\rho}{\theta}t},\]
\[k^*(t)=e^{\gamma t}\left[\bar{k}-\int_0^te^{-\gamma s}[\gamma k^*+\delta(c^*(s)-c^*)-F(k^*,c^*)]ds\right].\]
Note that,
\[\dot{k}^*(t)=\gamma (k^*(t)-k^*)-\delta (c^*(t)-c^*)+F(k^*,c^*),\]
\[\frac{d}{dt}(u_{\theta}'(c^*(t)))=(\rho-\gamma)u_{\theta}'(c^*(t)).\]
Then, for every $(k(t),c(t))\in A^{L2}_{\bar{k}}$,
\begin{align*}
&~\int_0^Te^{-\rho t}(u_{\theta}(c(t))-u_{\theta}(c^*(t)))dt\\
\le&~\int_0^Te^{-\rho t}u_{\theta}'(c^*(t))(c(t)-c^*(t))dt\\
=&~\delta^{-1}\int_0^Te^{-\rho t}u_{\theta}'(c^*(t))[\gamma (k(t)-k^*(t))-(\dot{k}(t)-\dot{k}^*(t))]dt\\
=&~\delta^{-1}\int_0^T\frac{d}{dt}[e^{-\rho t}u_{\theta}'(c^*(t))(k^*(t)-k(t))]dt\\
=&~\delta^{-1}e^{-\rho T}u_{\theta}'(c^*(T))(k^*(T)-k(T))\\
\le&~\delta^{-1}e^{-\rho T}u_{\theta}'(c^*(T))k^*(T)\\
=&~\delta^{-1}(C^*)^{-\theta}\left[\bar{k}-\left(k^*-\frac{\delta c^*+F(k^*,c^*)}{\gamma}\right)(1-e^{-\gamma T})-\frac{\theta \delta C^*}{\rho-(1-\theta)\gamma}(1-e^{\frac{(1-\theta)\gamma-\rho}{\theta}T})\right]\\
\to&~0\mbox{ (as $T\to \infty$)}.
\end{align*}
Moreover, because $\gamma k^*-\delta c^*-F(k^*,c^*)\le 0$,
\begin{align*}
k(t)=&~e^{\gamma t}\left(\bar{k}-\int_0^te^{-\gamma s}[\gamma k^*+\delta(c(s)-c^*)-F(k^*,c^*)]ds\right)\\
\le&~e^{\gamma t}\left(\bar{k}-\int_0^te^{-\gamma s}[\gamma k^*-\delta c^*-F(k^*,c^*)]ds\right)\\
\le&~e^{\gamma t}\left(\bar{k}-\int_0^{\infty}e^{-\gamma s}[\gamma k^*-\delta c^*-F(k^*,c^*)]ds\right)\\
=&~\left(\bar{k}-k^*+\frac{\delta c^*+F(k^*,c^*)}{\gamma}\right)e^{\gamma t}=\hat{k}e^{\gamma t},
\end{align*}
and thus,
\begin{align*}
&~\int_0^{\infty}e^{-\rho t}[au_{\theta}(c(t))+bu_{\theta}(k(t))]dt\\
\le&~a\int_0^{\infty}e^{-\rho t}u_{\theta}(c^*(t))dt+b\int_0^{\infty}e^{-\rho t}u_{\theta}(\hat{k}e^{\gamma t})dt\\
=&~aV_3(\bar{k})+bV_4(\bar{k}),
\end{align*}
which completes the proof.
\end{proof}

Steps 1-3 show that our claim is correct. This completes the proof.

\end{proof}
\setcounter{step}{0}

Combining Proposition \ref{Prop3} and Lemma \ref{Lemma4}, we obtain the following result.

\begin{thm}\label{Theorem1}
Suppose that Assumptions 1-5 hold. Then $\bar{V}\in \mathscr{V}$, and $\bar{V}$ is a viscosity solution to the HJB equation.
\end{thm}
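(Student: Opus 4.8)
The plan is to reduce the theorem to Propositions \ref{Prop2} and \ref{Prop3} together with the upper bound of Lemma \ref{Lemma4}, and then to verify by hand the single remaining ingredient, namely the growth condition (\ref{GC}).

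First I would establish that $\bar{V}$ is finite. Proposition \ref{Prop2} already gives $\bar{V}(\bar{k})>-\infty$ for every $\bar{k}>0$, along with concavity and the nondecreasing property. Lemma \ref{Lemma4} supplies the matching upper bound $\bar{V}(\bar{k})\le aV_3(\bar{k})+bV_4(\bar{k})+\frac{C}{\rho}$; since $\hat{k},C^*>0$ (footnote to Lemma \ref{Lemma4}), the right-hand side is a finite real number. Hence $\bar{V}$ is finite. With finiteness in hand, Proposition \ref{Prop3} immediately yields that $\bar{V}$ is increasing and is a viscosity solution to the HJB equation, which settles the second assertion of the theorem and, combined with the concavity from Proposition \ref{Prop2}, shows that $\bar{V}$ is increasing and concave.

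The real work is to verify $\lim_{T\to\infty}e^{-\rho T}\bar{V}(k^+(T,\bar{k}))=0$, which I would do by squeezing this quantity between two expressions that both vanish. For the lower bound, I choose $k_1>0$ with $F(k_1,0)>0$ (available under Assumption 3); Lemma \ref{Lemma1} gives $k^+(T,\bar{k})\ge\min\{k_1,\bar{k}\}>0$, so monotonicity of $\bar{V}$ yields $e^{-\rho T}\bar{V}(k^+(T,\bar{k}))\ge e^{-\rho T}\bar{V}(\min\{k_1,\bar{k}\})\to 0$. For the upper bound, I would first extract an exponential majorant for the pure accumulation path directly from Assumption 5: the inclusion $(\gamma,-\delta)\in\partial F(k^*,c^*)$ gives $F(k,0)\le\gamma k+M$ with $M=F(k^*,c^*)-\gamma k^*+\delta c^*$, so comparing $\dot{k}^+=F(k^+,0)$ with the linear equation $\dot{z}=\gamma z+M$, $z(0)=\bar{k}$, via Lemma \ref{Lemma3} gives $k^+(T,\bar{k})\le Pe^{\gamma T}+Q$ for suitable constants $P\ge 0$ and $Q$.

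Applying Lemma \ref{Lemma4} at the argument $k^+(T,\bar{k})$ and substituting this majorant reduces everything to the asymptotics of $V_3$ and $V_4$, and I expect this case analysis on $\theta$ to be the main (though routine) obstacle. As functions of their argument $y$, both $V_3$ and $V_4$ behave like a constant times $y^{1-\theta}$ when $\theta<1$, stay bounded when $\theta>1$, and grow like $\rho^{-1}\log y$ when $\theta=1$. Feeding in $y=Pe^{\gamma T}+Q$ and multiplying by $e^{-\rho T}$ produces a bound of order $e^{-(\rho-(1-\theta)\gamma)T}$ when $\theta<1$, of order $e^{-\rho T}$ when $\theta>1$, and of order $Te^{-\rho T}$ when $\theta=1$. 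Condition (\ref{SL}), $\rho-(1-\theta)\gamma>0$, is precisely what forces the first exponent to be negative, while the remaining cases decay because $\rho>0$; the additive $\frac{C}{\rho}$ term contributes $e^{-\rho T}\frac{C}{\rho}\to 0$. Hence the upper bound also tends to $0$, the squeeze gives (\ref{GC}), and therefore $\bar{V}\in\mathscr{V}$, completing the proof.
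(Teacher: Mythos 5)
Your proposal is correct and follows essentially the same route as the paper: finiteness from Proposition \ref{Prop2} plus Lemma \ref{Lemma4}, the viscosity/monotonicity/concavity claims from Propositions \ref{Prop2} and \ref{Prop3}, and the growth condition via a squeeze whose lower bound uses Lemma \ref{Lemma1} and whose upper bound combines the subgradient inequality from Assumption 5 with the comparison Lemma \ref{Lemma3} to get an exponential majorant $k^+(T,\bar{k})\le \hat{k}e^{\gamma T}$ (the paper obtains this by citing the calculation in Step 3 of Lemma \ref{Lemma4}'s proof, which is exactly your derivation), followed by the same three-way case analysis on $\theta$ using (\ref{SL}).
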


\begin{proof}
By Lemma \ref{Lemma4}, we have that $\bar{V}$ is finite, and by Propositions \ref{Prop2} and \ref{Prop3}, we have that $\bar{V}$ is an increasing and concave viscosity solution to the HJB equation. It suffices to show that $\bar{V}$ satisfies the growth condition (\ref{GC}).

By Lemma \ref{Lemma1}, $\inf_{t\ge 0}k^+(t,\bar{k})>0$ for all $\bar{k}>0$, and thus it suffices to show that for $i\in \{3,4\}$,
\[\limsup_{t\to\infty}e^{-\rho t}V_i(k^+(t,\bar{k}))\le 0,\]
Define
\[\hat{k}(t)=\hat{k}e^{\gamma t}.\]
By Lemma \ref{Lemma3} and the calculation in Step 3 of the proof of Lemma \ref{Lemma4}, $\hat{k}(t)\ge k^+(t,\bar{k})$ for all $t\ge 0$, and it suffices to show that, for $i\in \{3,4\}$,
\[\lim_{t\to \infty}e^{-\rho t}V_i(\hat{k}(t))=0.\]
If $\theta=1$, then 
\[V_i(\hat{k}(t))=A\log (e^{\gamma t}+B)+C\]
for some $A>0,\ B\ge 0$, and $C\in\mathbb{R}$, and
\[e^{-\rho t}V_i(\hat{k}(t))=Ae^{-\rho t}\log (e^{\gamma t}+B)+e^{-\rho t}C\to 0\]
as $t\to \infty$. If $\theta\neq 1$, then
\[V_i(\hat{k}(t))=A(e^{\gamma t}+B)^{1-\theta}+C,\]
for some $A, B, C\in\mathbb{R}$ such that $A(1-\theta)>0$ and $B\ge 0$. If $\theta<1$, then
\[e^{-\rho t}V_i(\hat{k}(t))=A(e^{\frac{(1-\theta)\gamma-\rho}{1-\theta}t}+e^{-\frac{\rho}{1-\theta}t}B)^{1-\theta}+e^{-\rho t}C\to 0\]
as $t\to\infty$. If $\theta>1$, then
\[e^{-\rho t}V_i(\hat{k}(t))=e^{-\rho t}[A(e^{\gamma t}+B)^{1-\theta}+C]\to 0\]
as $t\to \infty$. Thus, in any case, our claim is correct. This completes the proof.
\end{proof}

\subsection{Differential Inclusions}
In the previous subsection, we showed that under Assumptions 1-5, the value function is a viscosity solution to the HJB equation. We require the converse: that is, we need to show that, under some additional assumption, $\bar{V}$ is the unique viscosity solution to the HJB equation in $\mathscr{V}$. However, to derive this result, we need the aid of differential inclusions.

Hence, in this subsection, we introduce several properties of differential inclusions. First, consider the following autonomous differential inclusion:
\begin{equation}\label{DI}
\dot{k}(t)\in \Gamma(k(t)),\ k(0)=\bar{k}>0,
\end{equation}
where $\Gamma:\mathbb{R}_{++}\twoheadrightarrow \mathbb{R}$ is a nonempty-valued set function. A function $k:I\to \mathbb{R}_{++}$ is called a {\bf solution} to (\ref{DI}) if and only if, 1) $I$ is an interval that includes $0$, 2) $k(t)$ is absolutely continuous on any compact subinterval of $I$, 3) $k(0)=\bar{k}$, and 4) $\dot{k}(t)\in \Gamma(k(t))$ for almost every $t\in I$. It is known that if $\Gamma$ is a compact- and convex-valued upper hemi-continuous mapping, then there exists at least one solution $k(t)$ to (\ref{DI}) defined on $[0,T]$ for some $T>0$.\footnote{If the domain of $\Gamma$ is $\mathbb{R}$, then we can apply Theorem 1 of Maruyama \cite{MA} directly. It is easy to extend this result to our case. Define $\Phi(k)=\Gamma(k)$ if $k>\bar{k}/2$ and $\Phi(k)=\Gamma(\bar{k}/2)$ otherwise. Then, $\Phi:\mathbb{R}\twoheadrightarrow \mathbb{R}$ is a nonempty-, compact-, and convex-valued upper hemi-continuous mapping, and thus, the inclusion
\[\dot{k}(t)\in \Phi(k(t)),\ k(0)=\bar{k}>0,\]
has a solution $k:[0,T]\to \mathbb{R}$ such that $k(t)\ge \bar{k}/2$ for any $t\in [0,T]$. This $k(t)$ is also a solution to (\ref{DI}).}

We need the following lemmas. The proofs of these lemmas are placed in the appendix.

\begin{lem}\label{Lemma5}
Consider the differential inclusion $(\ref{DI})$, and the differential equation
\begin{equation}\label{DEI}
\dot{k}(t)=h(k(t)),\ k(0)=\bar{k}>0,
\end{equation}
where $h:\mathbb{R}_{++}\to \mathbb{R}$ is a locally Lipschitz function such that, for any $k\in \mathbb{R}_{++}$ and $y\in \Gamma(k)$, $y\le h(k)$. Let $k_1(t)$ be a solution to $(\ref{DEI})$ and $k_2(t)$ be a solution to $(\ref{DI})$ defined on $[0,T]$. Then, $k_1(t)\ge k_2(t)$ for all $t\in [0,T]$.
\end{lem}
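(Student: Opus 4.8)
The plan is to reduce the statement to a scalar differential inequality and then close with a Gronwall-type argument. First I would extract the basic consequences of the hypotheses. Since $k_2$ solves (\ref{DI}), we have $\dot{k}_2(t)\in\Gamma(k_2(t))$ for almost every $t\in[0,T]$, and because every element of $\Gamma(k)$ is bounded above by $h(k)$, this gives the pointwise bound $\dot{k}_2(t)\le h(k_2(t))$ a.e. Meanwhile $\dot{k}_1(t)=h(k_1(t))$ almost everywhere. Both $k_1,k_2$ are continuous on the compact interval $[0,T]$ with values in $\mathbb{R}_{++}$, so their images lie in a common compact set $K\subset\mathbb{R}_{++}$; since $h$ is locally Lipschitz, it is Lipschitz on $K$ with some constant $L>0$, exactly the kind of Lipschitz-on-compacts hypothesis used in Lemma \ref{Lemma3}.

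Next I would introduce the gap function $g(t)=\max\{k_2(t)-k_1(t),0\}$, which is absolutely continuous on $[0,T]$, nonnegative, and satisfies $g(0)=0$ because $k_1(0)=k_2(0)=\bar{k}$. The goal becomes to prove $g\equiv 0$. To estimate $\dot{g}$ I would split $[0,T]$ by the sign of $k_2-k_1$. On the set where $k_2(t)>k_1(t)$ one has $g=k_2-k_1$, so $\dot{g}=\dot{k}_2-\dot{k}_1\le h(k_2)-h(k_1)\le L|k_2-k_1|=Lg$ a.e., using the Lipschitz bound on $K$. On the set where $k_2(t)\le k_1(t)$ one has $g=0$, and since an absolutely continuous function has zero derivative almost everywhere on the level set where it attains its minimum value, $\dot{g}=0\le Lg$ a.e. there as well. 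Combining the two regions yields the differential inequality $\dot{g}(t)\le Lg(t)$ for almost every $t\in[0,T]$.

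Finally I would apply Gronwall: from $\dot{g}\le Lg$ it follows that $\frac{d}{dt}\left(e^{-Lt}g(t)\right)=e^{-Lt}(\dot{g}-Lg)\le 0$ almost everywhere, so $e^{-Lt}g(t)$ is nonincreasing and hence $e^{-Lt}g(t)\le g(0)=0$ for all $t\in[0,T]$. As $g\ge 0$, this forces $g\equiv 0$, i.e. $k_2(t)\le k_1(t)$ on $[0,T]$, which is the claim.

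I expect the only delicate points to be the two technical inputs behind the differential inequality: securing a single Lipschitz constant $L$ valid along both trajectories (which follows from compactness of $K\subset\mathbb{R}_{++}$ together with local Lipschitzness of $h$), and the measure-theoretic fact that $\dot{g}=0$ a.e. on $\{t:g(t)=0\}$, which legitimizes the treatment of the region $k_2\le k_1$. One could instead try to invoke Lemma \ref{Lemma3} directly, but the inclusion side carries no globally defined comparison function $h_2(t,k)$ — only a measurable selection realized along $k_2$ — so the direct estimate above is the cleaner route.
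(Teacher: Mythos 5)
Your proof is correct, but it takes a genuinely different route from the paper's. The paper argues by contradiction: assuming $k_1(t^+)<k_2(t^+)$ for some $t^+$, it passes to the last crossing time $t^*$ (where $k_1(t^*)=k_2(t^*)$), introduces the auxiliary function $k_3(t)=k_1(t^*)+\int_{t^*}^t h(k_2(s))\,ds$, which dominates $k_2$ on $]t^*,t^+]$ because $\dot k_2\le h(k_2)$ a.e., and then uses the Lipschitz bound to get $k_3(t)-k_1(t)\le L(t-t^*)\max_{s\in[t^*,t]}\bigl(k_2(s)-k_1(s)\bigr)$; evaluating at a maximizer $s^*$ on an interval of length less than $L^{-1}$ gives $k_3(s^*)<k_2(s^*)$, a contradiction. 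You instead work globally with the gap function $g=\max\{k_2-k_1,0\}$ and close with Gronwall. Both proofs rest on the same essential input, namely a single Lipschitz constant for $h$ on a compact subset of $\mathbb{R}_{++}$ containing both trajectories, which you secure correctly. The difference is in how the a.e.\ structure is handled: the paper's argument is purely integral and never differentiates the gap, so it avoids the fact you invoke that $\dot g=0$ a.e.\ on $\{g=0\}$ (which is legitimate here, and in your setting even elementary: every $t\in\;]0,T[$ with $g(t)=0$ is an interior global minimum of $g\ge 0$, so $\dot g(t)=0$ wherever the derivative exists). In exchange, your route is the standard comparison-theorem argument, is direct rather than by contradiction, and generalizes immediately to nonzero initial gaps, quantitative bounds, and time-dependent right-hand sides. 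Your closing remark is also accurate: Lemma \ref{Lemma3} cannot be invoked directly because the selection $\dot k_2(t)$ is controlled only along the trajectory, so no comparison function $h_2(t,k)$ satisfying that lemma's hypotheses is available, and the paper likewise proves the claim from scratch.
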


\begin{lem}\label{Lemma6}
Consider the differential inclusion $(\ref{DI})$, where $\Gamma:\mathbb{R}_{++}\twoheadrightarrow \mathbb{R}$ is a nonempty-, compact-, and convex-valued upper hemi-continuous mapping. Suppose that there exist two positive continuous functions $\hat{k}(t)$ and $\bar{k}(t)$ defined on $\mathbb{R}_+$ such that $\bar{k}(t)\le k(t)\le \hat{k}(t)$ for any solution $k(t)$ to $(\ref{DI})$ defined on $[0,T]$ for some $T>0$ and $t\in [0,T]$. Then, this inclusion $(\ref{DI})$ has a solution defined on $\mathbb{R}_+$ itself.
\end{lem}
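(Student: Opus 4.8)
The plan is to run a maximal-interval argument, the differential-inclusion analogue of Fact 6) in Subsection 3.1: the two-sided a priori bound is designed precisely to trap every solution inside a fixed compact subset of $\mathbb{R}_{++}$ on each finite time span, so a solution can neither reach the boundary $\{0\}$ nor escape to $+\infty$ in finite time. Consequently the maximal interval of existence cannot have a finite right endpoint, and must therefore be all of $\mathbb{R}_+$.

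First I would fix the class $\mathcal{E}$ of all solutions $k:J\to\mathbb{R}_{++}$ of (\ref{DI}), where $J$ is an interval of the form $[0,T]$ or $[0,T)$ containing $0$, partially ordered by extension ($k\preceq k'$ when $J\subseteq J'$ and $k'$ restricts to $k$ on $J$). The existence result cited just before the lemma makes $\mathcal{E}$ nonempty. The key observation for applying Zorn's lemma is that the members of any chain are mutually nested, so the pointwise union of a chain is a well-defined function whose restriction to each compact subinterval coincides with a single chain element; hence it is again absolutely continuous and satisfies $\dot{k}\in\Gamma(k)$ almost everywhere, i.e. it lies in $\mathcal{E}$ and bounds the chain. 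This produces a maximal solution $k^*:J^*\to\mathbb{R}_{++}$ \emph{without} appealing to any convergence theorem for differential inclusions.

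It then remains to exclude $J^*=[0,T^*]$ and $J^*=[0,T^*)$ with $T^*<\infty$. In the closed case $k^*(T^*)\ge\bar{k}(T^*)>0$, so by autonomy I restart (\ref{DI}) from the value $k^*(T^*)$, obtain a short solution from the existence result, and concatenate it to get a proper extension of $k^*$, contradicting maximality. In the half-open case I first apply the a priori bound to each restriction $k^*|_{[0,s]}$, $s<T^*$, to confine $k^*$ to a fixed compact set $[m,M]\subset\mathbb{R}_{++}$, where $m=\min_{[0,T^*]}\bar{k}>0$ and $M=\max_{[0,T^*]}\hat{k}$. Since $\Gamma$ is compact-valued and upper hemi-continuous, its image over $[m,M]$ is bounded, say by $L$, so $|\dot{k}^*|\le L$ almost everywhere and $k^*$ is $L$-Lipschitz on $[0,T^*)$. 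Uniform continuity then forces $\lim_{t\uparrow T^*}k^*(t)=\ell$ to exist with $\ell\ge\bar{k}(T^*)>0$, and defining $k^*(T^*):=\ell$ extends $k^*$ to a solution on $[0,T^*]$ (the inclusion already holds on the full-measure set $[0,T^*)$, and the extended function is Lipschitz, hence absolutely continuous). This is again a proper extension, a contradiction. Therefore $J^*=[0,\infty)$, which is the desired global solution.

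The main obstacle is the half-open case: one must establish both that the solution possesses a genuine limit at the finite right endpoint and that this limit remains strictly positive. This is exactly where the hypotheses enter in full force, namely the lower bound $\bar{k}$ keeps $k^*$ away from $0$, while the two-sided bound combined with compact-valuedness and upper hemi-continuity of $\Gamma$ yields the uniform Lipschitz estimate that produces the limit in the first place. I would also stress that organizing the argument around nested chains, rather than around a supremum of existence times that is merely approached by a sequence of possibly distinct solutions, is precisely what lets me bypass the Arzel\`a--Ascoli plus closure-theorem machinery, in which the convexity of the values of $\Gamma$ would otherwise be indispensable.
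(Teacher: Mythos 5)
Your proof is correct and follows essentially the same route as the paper: Zorn's lemma on the extension order, the a priori bounds together with compact-valuedness and upper hemi-continuity of $\Gamma$ to obtain a Lipschitz estimate and hence a limit at any finite right endpoint, and a restart-and-concatenate argument via the local existence theorem to contradict maximality. The only difference is organizational: you admit half-open domains into the poset, so the limit argument appears when excluding a maximal solution on $[0,T^*)$, whereas the paper admits only closed intervals and $\mathbb{R}_+$ into its poset and therefore performs the same limit argument when constructing the upper bound of a chain.
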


\subsection{Sufficiency of the HJB Equation and Construction of the Solution}

In this subsection, we examine the sufficiency of the HJB equation to determine the value function. First, we should mention an important example. Consider the case in which $u(c,k)=-1/c$ and $F(k,c)=\sqrt{k}-c$. The corresponding HJB equation is
\[\sup_{c\ge 0}\{(\sqrt{k}-c)V'(k)-1/c\}-\rho V(k)=0,\]
and thus, $V\equiv 0$ is a classical solution to this equation. However, this function is not increasing. Note that, these $u$ and $F$ satisfies Assumptions 1-5.\footnote{For Assumption 5, choose $\theta=2$.} By Theorem 1, the value function $\bar{V}$ is an increasing solution to the HJB equation, and thus $\bar{V}\neq V$. This indicates that the usual uniqueness argument cannot be applied to our model, and, at least, the increasing requirement is crucial for the uniqueness result.

We need two additional lemmas. The proofs of these lemmas are placed in the appendix.\footnote{Although Lemma \ref{Lemma7} may be a known result, the author could not find an appropriate reference. Fortunately, the proof of this result is relatively easy and short, and thus we put the proof into the appendix.}

\begin{lem}\label{Lemma7}
Suppose that Assumptions 1-4 hold, and that $V:\mathbb{R}_{++}\to \mathbb{R}$ is a concave function. Then, $V$ is a viscosity solution to the HJB equation if and only if, for any $k>0$ and $p\in \partial V(k)$,
\begin{equation}\label{EQQ}
\sup_{c\ge 0}\{F(k,c)p+u(c,k)\}=\rho V(k).
\end{equation}
\end{lem}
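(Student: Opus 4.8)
The plan is to translate the analytic definitions of viscosity sub- and supersolution into purely algebraic conditions on the subdifferential $\partial V$, using the concavity facts recorded in Subsection 2.4 (namely $\partial V(k)=[D_+V(k),D_-V(k)]$ and the density of points of differentiability). Write $H(k,p)=\sup_{c\ge 0}\{F(k,c)p+u(c,k)\}$ for the Hamiltonian. The two reductions I would establish first are the following. For the supersolution: a $C^1$ function $\varphi$ touching $V$ from above at $k$ must satisfy $\varphi'(k)\in\partial V(k)$ (comparing one-sided difference quotients), and conversely every $p\in\partial V(k)$ is the derivative at $k$ of the affine supporting function $k'\mapsto V(k)+p(k'-k)$, which touches from above by the very definition of $\partial V(k)$; hence $V$ is a viscosity supersolution if and only if $H(k,p)\ge\rho V(k)$ for all $k>0$ and all $p\in\partial V(k)$. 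For the subsolution: since $V$ is concave, a $C^1$ function can touch $V$ from below at $k$ only where $V$ is differentiable (otherwise the one-sided inequalities on difference quotients force $D_-V(k)\le\varphi'(k)\le D_+V(k)$, impossible at a kink, where $D_+V(k)<D_-V(k)$), and there $\varphi'(k)=V'(k)$; hence $V$ is a viscosity subsolution if and only if $H(k,V'(k))\le\rho V(k)$ at every point $k$ of differentiability of $V$. Concavity also makes $V$ continuous, so the semicontinuity requirements in the definitions are automatic.

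Given these reductions, the direction ``(\ref{EQQ}) $\Rightarrow$ viscosity solution'' is immediate: the supersolution inequality is the $\ge$ half of (\ref{EQQ}) at $p=\varphi'(k)\in\partial V(k)$, and at a point of differentiability $\partial V(k)=\{V'(k)\}$, so the subsolution inequality is the $\le$ half of (\ref{EQQ}). For the converse I would argue the two inequalities in (\ref{EQQ}) separately. The $\ge$ inequality, for every $k$ and every $p\in\partial V(k)$, is exactly the supersolution reduction, so the content lies in the $\le$ inequality.

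For the $\le$ inequality the key observation is that $H$ is lower semicontinuous on $\mathbb{R}_{++}\times\mathbb{R}$, being a supremum of the functions $(k,p)\mapsto F(k,c)p+u(c,k)$, each continuous in $(k,p)$ for $k>0$ and $c>0$ (where $u(c,k)$ is finite by Assumption 2). At a point $k$ of differentiability the subsolution reduction already gives $H(k,V'(k))\le\rho V(k)$. For a kink $k_0$, I would choose differentiability points $k_n\uparrow k_0$; by concavity $V'(k_n)\to D_-V(k_0)$ and $V(k_n)\to V(k_0)$, so lower semicontinuity yields $H(k_0,D_-V(k_0))\le\liminf_n H(k_n,V'(k_n))\le\liminf_n\rho V(k_n)=\rho V(k_0)$. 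Approaching from the right gives $H(k_0,D_+V(k_0))\le\rho V(k_0)$ in the same way. Finally, for fixed $k_0$ the map $p\mapsto H(k_0,p)$ is convex (a supremum of functions affine in $p$), so it lies below the chord joining its values at the endpoints $D_+V(k_0)$ and $D_-V(k_0)$ of the interval $\partial V(k_0)$; since both endpoint values are $\le\rho V(k_0)$, we obtain $H(k_0,p)\le\rho V(k_0)$ for every $p\in\partial V(k_0)$. Combining with the $\ge$ inequality gives (\ref{EQQ}).

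The main obstacle I anticipate is the passage to kinks: one must check that the one-sided limits of $V'$ along differentiability points reproduce precisely the two endpoints of $\partial V(k_0)$, and that the lower semicontinuity of $H$ survives the fact that $u$ may take the value $-\infty$ on the boundary. The first is the standard one-sided continuity of the derivatives of a concave function, consistent with the upper hemicontinuity of $\partial V$ established in Subsection 2.4. The second is handled by noting that for a lower bound it suffices to compare $H$ with the finite, jointly continuous functions $(k,p)\mapsto F(k,c)p+u(c,k)$ for $c>0$, on which $u$ is finite, so the potentially infinite boundary values of $u$ never enter the estimate.
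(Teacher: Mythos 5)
Your proof is correct, and its skeleton is the same as the paper's: reduce supersolution test functions to elements of $\partial V(k)$ via affine supports, observe that subsolution test functions can only touch a concave $V$ from below at differentiability points, approximate a kink's endpoints $D_{\pm}V(k_0)$ by nearby differentiability points, and fill in the interior of $\partial V(k_0)$ using convexity of the Hamiltonian in $p$. The one genuine difference lies in the limit passage at the kink endpoints. The paper passes to the limit through the equalities $\rho V(k_m)=F(k_m,c^*(V'(k_m),k_m))V'(k_m)+u(c^*(V'(k_m),k_m),k_m)$, which requires the continuous maximizer $c^*(p,k)$ supplied by Proposition \ref{Prop1} (and hence Assumptions 1--4), and obtains \emph{equality} $H(k_0,D_{\pm}V(k_0))=\rho V(k_0)$ directly. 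You instead pass only the subsolution inequality to the limit via lower semicontinuity of $H$ as a supremum of the continuous functions $(k,p)\mapsto F(k,c)p+u(c,k)$, $c>0$, obtaining $H(k_0,D_{\pm}V(k_0))\le\rho V(k_0)$ and recovering the reverse inequality from the supersolution reduction. Your route is slightly more economical in its hypotheses at that step (it never invokes the existence or continuity of a maximizer, only continuity of $F$ and $u$ on the open orthant, and it handles the $u=-\infty$ boundary values explicitly), while the paper's route is shorter given that Proposition \ref{Prop1} is already available; your handling of the dense differentiability points and of the one-sided limits $V'(k_n)\to D_{\pm}V(k_0)$ is the standard concave-function fact that the paper obtains from upper hemi-continuity of $\partial V$, so the two arguments agree there.
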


\begin{lem}\label{Lemma8}
Suppose that $U\subset \mathbb{R}$ is an open interval, $H:U\to \mathbb{R}$ is concave, $A<B$, and $x:[A,B]\to U$ is absolutely continuous. Let $\psi(t)=H(x(t))$. Then, $\psi(t)$ is also absolutely continuous. Moreover, if both $\dot{\psi}(t)$ and $\dot{x}(t)$ are defined, then for any $p\in \partial H(x(t))$, $\dot{\psi}(t)=p\dot{x}(t)$.
\end{lem}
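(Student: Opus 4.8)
The plan is to split the statement into its two assertions and treat them in turn. For the absolute continuity of $\psi$, the key observation is that a concave function is locally Lipschitz on the interior of its domain, which is the standard convex-analysis fact the paper already invokes when it remarks that every $V\in\mathscr{V}$ is locally Lipschitz. Since $x$ is continuous on the compact interval $[A,B]$, its image $x([A,B])$ is a compact subset of the open interval $U$, hence contained in some closed subinterval $[a,b]\subset U$ on which $H$ is Lipschitz with some constant $L>0$. I would then write $\psi=H\circ x$ as the composition of a Lipschitz map with an absolutely continuous map, so that $|\psi(s)-\psi(t)|\le L|x(s)-x(t)|$ for all $s,t\in[A,B]$; pushing the absolute continuity of $x$ through this inequality (for a given $\varepsilon$, choose the mesh so that $\sum_i|x(t_i)-x(s_i)|<\varepsilon/L$ over disjoint intervals) yields the absolute continuity of $\psi$ at once.

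For the chain rule, fix a point $t$ at which both $\dot{x}(t)$ and $\dot{\psi}(t)$ exist, and fix $p\in\partial H(x(t))$. The idea is to exploit the defining subgradient inequality of the (concave) subdifferential, namely $H(y)-H(x(t))\le p(y-x(t))$ for all $y\in U$, applied at $y=x(t+h)$ for $h$ small enough that $x(t+h)\in U$ (possible since $U$ is open and $x$ is continuous). This gives $\psi(t+h)-\psi(t)\le p(x(t+h)-x(t))$. Dividing by $h$ and letting $h\downarrow 0$ produces $\dot{\psi}(t)\le p\dot{x}(t)$, whereas dividing by $h<0$ reverses the inequality and letting $h\uparrow 0$ produces $\dot{\psi}(t)\ge p\dot{x}(t)$; the two one-sided limits exist precisely because $\dot{x}(t)$ and $\dot{\psi}(t)$ are assumed defined, and combining them gives the asserted equality $\dot{\psi}(t)=p\dot{x}(t)$.

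I do not expect a serious obstacle here, as the result is elementary. The only points requiring care are bookkeeping ones: verifying that the image of $x$ lands in a compact subinterval of the open set $U$ so that a single Lipschitz constant governs the whole estimate, and checking that both one-sided difference quotients converge, which is exactly what the hypothesis that $\dot{\psi}(t)$ and $\dot{x}(t)$ are defined supplies. No individual limit or constant is delicate; the entire argument reduces to one application of the subgradient inequality together with the local Lipschitz property of concave functions.
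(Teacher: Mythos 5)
Your proposal is correct and follows essentially the same route as the paper's proof: absolute continuity is obtained from the Lipschitz bound of the concave function $H$ on the compact image $x([A,B])$ pushed through the absolute continuity of $x$, and the chain rule follows from applying the subgradient inequality $\psi(t+h)-\psi(t)\le p(x(t+h)-x(t))$ on both sides of $h=0$, where dividing by $h>0$ and $h<0$ yields the two opposite inequalities $\dot{\psi}(t)\le p\dot{x}(t)$ and $\dot{\psi}(t)\ge p\dot{x}(t)$. No gaps; the argument matches the paper's step for step.
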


We now introduce an additional assumption.

\vspace{12pt}
\noindent
\textbf{Assumption 6}. There exists $\varepsilon_0>0$ such that $\frac{\partial u}{\partial c}$ is continuously differentiable in $k$ on $\mathbb{R}_{++}\times ]0,\varepsilon_0[$, and there exists a continuously differentiable function $H:]0,\varepsilon_0[\times \mathbb{R}_{++}\to \mathbb{R}$ such that $\frac{\partial H}{\partial k}(k,c)>0,\ \frac{\partial H}{\partial c}(k,c)<0$ for all $(k,c)\in ]0,\varepsilon_0[\times \mathbb{R}_{++}$, and if $k<\varepsilon_0$ and $H(k,c)\neq 0$, then $F$ is continuously differentiable in $c$ and $\frac{\partial F}{\partial c}$ is continuously differentiable in $k$ around $(k,c)$. Moreover, there exists $k>0$ such that $\inf_{c\ge 0}D_{k,+}F(k,c)>\rho$.\footnote{The last inequality is similar to assumption (A5) of Frankowska et al. \cite{FZZ}. Because these two assumptions are used in a similar way, there may be some hidden relationship between them. However, we could not identify it.}

\vspace{12pt}
We define some additional notation. Let $B_{\bar{k}}$ denote the set of all pairs of nonnegative functions $(k(t),c(t))$ defined on $\mathbb{R}_+$ such that $k(t)$ is absolutely continuous on any compact interval, $c(t)\in W$, $\lim_{T\to \infty}\int_0^Te^{-\rho t}u(c(t),k(t))dt$ exists, $k(0)=\bar{k}$, and
\[\dot{k}(t)=F(k(t),c(t))\]
for almost all $t\ge 0$. Clearly, $A_{\bar{k}}\subset B_{\bar{k}}$, but it is unknown whether $A_{\bar{k}}=B_{\bar{k}}$.

Now, suppose that $V:\mathbb{R}_{++}\to \mathbb{R}$ is an increasing and concave function. As we discussed in Subsection 2.4, $\partial V$ is a nonempty-, compact-, and convex-valued upper hemi-continuous mapping. Therefore, the mapping $F(k,c^*(\partial V(k),k))$ is also a nonempty-, compact-, and convex-valued upper hemi-continuous mapping. Note also that, the mapping $c^*(\partial V(k),k)$ is also compact-valued and upper hemi-continuous, and thus, this mapping is measurable, and for any continuous function $k:I\to \mathbb{R}_{++}$ defined on an interval $I$, there is a measurable selection $c(t)$ of $c^*(\partial V(k(t)),k(t))$.\footnote{See Section 8.1 of Ioffe and Tikhomirov \cite{IT}.}

The next proposition is crucial for our next main result.

\begin{prop}\label{Prop4}
Suppose that Assumptions 1-4 and 6 hold. Suppose also that $V\in \mathscr{V}$ is a viscosity solution to the HJB equation. Choose any $\bar{k}>0$, and consider the following differential inclusion:
\begin{equation}
\dot{k}(t)\in F(k(t),c^*(\partial V(k(t)),k(t))),\ k(0)=\bar{k}.\label{SOL}
\end{equation}
Then, there exists a solution $k^*(t)$ to the above equation defined on $\mathbb{R}_+$, and for any such solution, $\inf_{t\ge 0}k^*(t)>0$. Moreover, if we define
\begin{equation}\label{CONSUM}
c^*(t)=\arg\min\{|\limsup_{n\to \infty}n(k^*(t+n^{-1})-k^*(t))-F(k^*(t),c)||c\in c^*(\partial V(k^*(t)),k^*(t))\},
\end{equation}
then, $c^*(t)$ is locally bounded, and $(k^*(t),c^*(t))\in B_{\bar{k}}$. Furthermore, for any such pair $(k^*(t),c^*(t))$,
\[V(\bar{k})=\lim_{T\to \infty}\int_0^Te^{-\rho t}u(c^*(t),k^*(t))dt,\]
and for every $(k(t),c(t))\in B_{\bar{k}}$, if $\inf_{t\ge 0}k(t)>0$, then 
\[\lim_{T\to \infty}\int_0^Te^{-\rho t}u(c(t),k(t))dt\le V(\bar{k}).\]
\end{prop}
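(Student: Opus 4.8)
The plan is to obtain a global solution of (\ref{SOL}) by trapping every local solution between two positive continuous barriers and then applying Lemma \ref{Lemma6}, after which the remaining assertions follow from a verification identity built from Lemmas \ref{Lemma7} and \ref{Lemma8}. As noted just before the statement, $k\mapsto F(k,c^*(\partial V(k),k))$ is nonempty-, compact-, and convex-valued and upper hemi-continuous, so the existence theorem for differential inclusions cited after (\ref{DI}) gives a solution on some $[0,T]$. For the upper barrier, since $F$ is decreasing in $c$ and $c^*\ge 0$, every $y\in\Gamma(k):=F(k,c^*(\partial V(k),k))$ satisfies $y\le F(k,0)$; Lemma \ref{Lemma5} with the locally Lipschitz function $h(k)=F(k,0)$ then gives $k^*(t)\le k^+(t,\bar k)$, and the pure accumulation path is defined on all of $\mathbb{R}_+$ with the explicit growth bound (\ref{SGC}) of Lemma \ref{Lemma1}. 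This furnishes the continuous upper barrier $\hat k(t)=k^+(t,\bar k)$.

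The hard part, and the step I expect to be the principal obstacle, is the positive lower barrier, i.e.\ $\inf_{t}k^*(t)>0$; this is where Assumption 6 is needed. I would reduce it to showing that $\min\Gamma(k)=F(k,c^*(D_+V(k),k))\ge 0$ once $k$ is small (so that $\dot k^*\ge 0$ whenever $k^*$ is small, preventing decay to $0$). Writing $\phi(k)$ for the replacement consumption defined by $F(k,\phi(k))=0$, which is positive for small $k$ because $F(k,0)>0$ there by the last clause of Assumption 3, the concavity of $c\mapsto F(k,c)p+u(c,k)$ gives $c^*(D_+V(k),k)\le\phi(k)$ precisely when
\[
D_+V(k)\ \ge\ \frac{\tfrac{\partial u}{\partial c}(\phi(k),k)}{-\tfrac{\partial F}{\partial c}(k,\phi(k))}.
\]
Since $F(0,0)=0$ forces $\phi(k)\downarrow 0$ as $k\downarrow 0$, the Inada condition $\lim_{c\to 0}\tfrac{\partial u}{\partial c}=+\infty$ of Assumption 4 makes the right-hand side blow up, so the task becomes proving that the shadow price $D_+V(k)\to+\infty$ fast enough as $k\to 0$. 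This is exactly the role of the productivity clause $\inf_{c\ge 0}D_{k,+}F(k_0,c)>\rho$: as $k\mapsto D_{k,+}F(k,c)$ is nonincreasing, it gives $D_{k,+}F(k,c)>\rho$ for all $c$ and all $k\le k_0$, and I would combine this with the HJB identity of Lemma \ref{Lemma7} to force the required lower growth of $D_+V$ near $0$. Once $\min\Gamma(k)\ge 0$ for $k\le k_1$, the constant $\bar k(t)\equiv\min\{\bar k,k_1\}>0$ is a valid lower barrier; Lemma \ref{Lemma6} extends the solution to $\mathbb{R}_+$, and the same estimate applied to any solution gives $\inf_t k^*(t)>0$.

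Next I would verify $(k^*,c^*)\in B_{\bar k}$ for the selection (\ref{CONSUM}). For a.e.\ $t$ the point $k^*(t)$ is a differentiability point of $V$: by an occupation-measure argument $\dot k^*(t)=0$ for a.e.\ $t$ with $k^*(t)$ in the countable non-differentiability set of $V$, and there the replacement consumption $\phi(k^*(t))$ is selected and the HJB forces $\rho V=u$, so the identity below persists. Elsewhere $\partial V(k^*(t))$ and $c^*(\partial V(k^*(t)),k^*(t))$ are singletons, the selection (\ref{CONSUM}) equals that single value, and $\dot k^*(t)=F(k^*(t),c^*(t))$; continuity of $c^*(\cdot,\cdot)$ from Proposition \ref{Prop1} on the compacta swept by $(\partial V(k^*(t)),k^*(t))$ over bounded time intervals gives local boundedness, and upper hemi-continuity gives measurability. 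The core is the verification identity: applying Lemma \ref{Lemma8} with $H=V$, $x=k^*$ and the subgradient $p$ for which $c^*(t)=c^*(p,k^*(t))$ gives $\tfrac{d}{dt}V(k^*(t))=p\dot k^*(t)=pF(k^*(t),c^*(t))$ a.e., and since $c^*(t)$ maximizes $c\mapsto F(k^*(t),c)p+u(c,k^*(t))$, Lemma \ref{Lemma7} turns this into
\[
\frac{d}{dt}\big[e^{-\rho t}V(k^*(t))\big]=-e^{-\rho t}u(c^*(t),k^*(t))\quad\text{a.e.}
\]
Integrating over $[0,T]$ yields $V(\bar k)=\int_0^Te^{-\rho t}u(c^*,k^*)\,dt+e^{-\rho T}V(k^*(T))$. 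The upper barrier with the growth condition gives $e^{-\rho T}V(k^*(T))\le e^{-\rho T}V(k^+(T,\bar k))\to 0$, while $\inf_t k^*(t)>0$ and monotonicity give $e^{-\rho T}V(k^*(T))\ge e^{-\rho T}V(\inf_t k^*(t))\to 0$; squeezing shows the second term tends to $0$, so the limit of the integral exists, $(k^*,c^*)\in B_{\bar k}$, and $V(\bar k)=\lim_{T\to\infty}\int_0^Te^{-\rho t}u(c^*,k^*)\,dt$.

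Finally, for an arbitrary $(k,c)\in B_{\bar k}$ with $\inf_t k(t)>0$ I would repeat the chain-rule computation with the inequality form of the HJB. Lemma \ref{Lemma8} gives $\tfrac{d}{dt}V(k(t))=p(t)F(k(t),c(t))$ with $p(t)\in\partial V(k(t))$ a.e., and Lemma \ref{Lemma7} gives $F(k,c)p+u(c,k)\le\rho V(k)$ for every $c$, in particular the actual $c(t)$; hence $\tfrac{d}{dt}[e^{-\rho t}V(k(t))]\le -e^{-\rho t}u(c(t),k(t))$ a.e. Integrating, $\int_0^Te^{-\rho t}u(c,k)\,dt\le V(\bar k)-e^{-\rho T}V(k(T))$. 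Since $\dot k=F(k,c)\le F(k,0)$ forces $k(T)\le k^+(T,\bar k)$, the same squeeze (upper barrier and growth condition above, $\inf_t k(t)>0$ below) gives $e^{-\rho T}V(k(T))\to 0$; letting $T\to\infty$, with the left-hand limit existing because $(k,c)\in B_{\bar k}$, yields $\lim_{T\to\infty}\int_0^Te^{-\rho t}u(c,k)\,dt\le V(\bar k)$, which completes the argument.
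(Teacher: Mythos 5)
Your proposal's architecture matches the paper's: local existence for the inclusion, the upper barrier $k^+(t,\bar k)$ via Lemma \ref{Lemma5}, Lemma \ref{Lemma6} to globalize, a verification identity built from Lemmas \ref{Lemma7} and \ref{Lemma8}, and the same chain-rule computation for the comparison with arbitrary elements of $B_{\bar k}$. The last two parts of your argument are essentially correct and coincide with the paper's Steps 2 and 3; the only cosmetic difference is that the paper avoids your case analysis over differentiability points of $V$ by selecting, for a.e.\ $t$, a subgradient $p^*(t)\in\partial V(k^*(t))$ attaining the HJB equality for $c^*(t)$ and invoking Lemma \ref{Lemma8}, which is valid for \emph{every} subgradient; your occupation-measure variant also works, since at kinks of $V$ one indeed has $F(k,c^*(p,k))=0$ and $u(c^*(p,k),k)=\rho V(k)$.

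The genuine gap is the positive lower bound $\inf_{t\ge 0}k^*(t)>0$ --- the one place where Assumption 6 enters, and the heart of the proposition --- which you flag as the principal obstacle but do not actually prove. You reduce it to showing that $D_+V(k)$ blows up at least as fast as $\frac{\partial u}{\partial c}(\phi(k),k)\bigl/\bigl(-\frac{\partial F}{\partial c}(k,\phi(k))\bigr)$ as $k\downarrow 0$, and then assert that the clause $\inf_{c\ge 0}D_{k,+}F>\rho$ combined with Lemma \ref{Lemma7} ``forces the required lower growth.'' That is precisely the step that does not follow from the pointwise HJB identity: the needed statement is second-order in nature (formally, differentiating the HJB in $k$ and using the first-order condition gives $F\cdot V''=(\rho-F_k)V'-u_k<0$ when $F_k>\rho$, so $F>0$ wherever $V''$ exists), and $V$ is merely concave, so one must give meaning to $V''$ and justify differentiating the identity. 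The paper does this by a contradiction argument \emph{along a decreasing trajectory}: Rademacher's and Alexandrov's theorems produce times at which $V$ has a derivative and an Alexandrov Hessian at $k(t)$, the HJB identity is expanded in finite differences (display (\ref{EVA2})), and each term is controlled --- and it is exactly in controlling the cross terms that the full content of Assumption 6 (the function $H$, continuous differentiability of $\partial u/\partial c$ in $k$, of $\partial F/\partial c$ in $k$ off $\{H=0\}$, and the implicit-function case $H\equiv 0$ along the path) is consumed. Your sketch never uses any of that structure, only the inequality $\inf_{c\ge 0}D_{k,+}F>\rho$; if the lower barrier were obtainable from that clause alone, the rest of Assumption 6 would be superfluous, which is a strong sign something is missing. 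Nothing in your first-order manipulations of Lemma \ref{Lemma7} excludes, say, $D_+V$ remaining bounded as $k\downarrow 0$ (in which case $c^*(D_+V(k),k)$ stays bounded away from $0$ and the dynamics are strictly negative for small $k$); ruling this out is the substance of the paper's Alexandrov--Hessian argument. Until this step is supplied, the global existence with $\inf_t k^*(t)>0$ --- and everything downstream that relies on the lower barrier --- remains unproven.
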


\begin{proof}

We separate the proof into three steps.

\begin{step}
Suppose that $V\in\mathscr{V}$ is a viscosity solution to the HJB equation. Then, there exists a solution $k^*(t)$ to the differential inclusion $(\ref{SOL})$ defined on $\mathbb{R}_+$. Moreover, for any such solution, $\inf_{t\ge 0}k^*(t)>0$.
\end{step}

\begin{proof}[{\bf Proof of Step 1}]
First, as we have mentioned, the differential inclusion (\ref{SOL}) has at least one solution $k(t)$ defined on $[0,T]$.

Let $\varepsilon>0$ satisfy that
\[\inf_{c\ge 0}D_{k,+}F(\varepsilon,c)>\rho,\ \varepsilon<\min\{\bar{k},\varepsilon_0\}.\]
We show that, for any solution $k(t)$ to (\ref{SOL}) defined on $[0,T]$, $k(t)\ge \varepsilon$.

Suppose that this claim is incorrect. Then, there exists $t^+>0$ such that $0<k(t^+)<\varepsilon$. Let $I\subset [0,T[$ be the set of all $t$ such that 1) $k(t)<\varepsilon$, 2) $\dot{k}(t)<0$ and (\ref{SOL}) holds at $t$, 3) $V$ is differentiable at $k(t)$, and 4) there exists an Alexandrov Hessian $L\in \mathbb{R}$ of $V$ at $k(t)$.\footnote{For a concave function $f$, the number $L$ is called an {\bf Alexandrov Hessian} of $f$ at $x$ if and only if, for all $\varepsilon'>0$, there exists $\delta'>0$ such that if $|x'-x|<\delta'$, then $|y'-y-L(x'-x)|\le \varepsilon|x'-x|$ for all $y\in \partial f(x),\ y'\in \partial f(x')$. For a detailed argument, see Alexandrov \cite{AL} or Howard \cite{HO}.} We show that $I$ is nonempty. First, we show that if $F(k,c^*(\partial V(k),k))\neq \{0\}$, then $V$ is differentiable at $k$. Suppose not. Then, $\partial V(k)=[D_+V(k),D_-V(k)]$, where $D_+V(k)<D_-V(k)$. Choose $p\in ]D_+V(k),D_-V(k)[$. By Lemma \ref{Lemma7},
\begin{align*}
F(k,c^*(p,k))p+u(c^*(p,k),k)=&~\rho V(k),\\
F(k,c^*(p,k))q+u(c^*(p,k),k)\le&~\rho V(k)
\end{align*}
for any $q\in \partial V(k)$. Therefore, $F(k,c^*(p,k))=0$. By the continuity of $F$ and $c^*$, we have that $F(k,c^*(\partial V(k),k))=\{0\}$, which is a contradiction. Therefore, $V$ is differentiable at $k$. Now, it is easy to show that there exists $t^*\in [0,T[$ such that $k(t^*)<\varepsilon$, $\dot{k}(t^*)<0$, and $\dot{k}(t^*)\in F(k(t^*),c^*(\partial V(k(t^*)),k(t^*)))$. This implies that $F(k(t^*),c^*(\partial V(k(t^*)),k(t^*)))\neq \{0\}$, and thus $V$ is differentiable at $k(t^*)$. Because $\partial V$ is upper hemi-continuous, there exists a neighbourhood $U$ of $k(t^*)$ such that if $k\in U$, then $F(k,c^*(\partial V(k),k))\neq \{0\}$, which implies that $V$ is differentiable around $k(t^*)$. Again, because $\partial V$ is upper hemi-continuous, we have that $V$ is continuously differentiable around $k(t^*)$. Therefore, there exists an open neighbourhood of $t^*$ such that $k(t)$ is a solution to the following differential equation:
\[\dot{k}(t)=F(k(t),c^*(V'(k(t)),k(t))),\]
which implies that $k(t)$ is continuously differentiable around $t^*$. Because of Rademacher's theorem and Alexandrov's theorem, we have that for almost all $t$ near to $t^*$, $\dot{k}(t)<0$, $V$ is differentiable at $k(t)$, and there exists an Alexandrov Hessian $L\in \mathbb{R}$ of $V$ at $k(t)$, as desired. Moreover, we have shown that if $t\in I$, then there exists an open neighbourhood $U$ of $t$ and an open neighbourhood $W$ of $k(t)$ such that $V$ is continuously differentiable on $W$, $k(\cdot)$ is continuously differentiable on $U$, and $t'\in I$ for almost all $t'\in U$.

Choose $c(T)\in c^*(\partial V(k(T)),k(T))$, and if $0\le t<T$, define
\[c(t)=\arg\min\{|\limsup_{n\to \infty}n(k(t+n^{-1})-k(t))-F(k(t),c)||c\in c^*(\partial V(k(t)),k(t))\}.\]
Then, $c(t)$ is a measurable, positive, and locally bounded function defined on $[0,T]$, and $c(t)\in c^*(\partial V(k(t)),k(t))$ for all $t\in [0,T]$. Note that, if $t\in I$, then $\partial V(k(t))=\{V'(k(t))\}$, and thus $c(t)$ is continuous at $t$.

Choose $t^*\in I$. Because $\dot{k}(t^*)<0$, for any sufficiently small $h>0$, $k(t^*+h)<k(t^*)$ and $V$ is differentiable on $[k(t^*+h),k(t^*)]$. Thus, for any $t\in [t^*,t^*+h]$,
\[F(k(t),c(t))V'(k(t))+u(c(t),k(t))=\rho V(k(t)),\]
and there exist $k_1,k_2\in [k(t^*+h),k(t^*)],\ \theta_1,\theta_2\in ]0,1[$, $p\in \partial_kF(k_2,c(t^*+h))$, and $q\in \partial_cF(k(t^*),c(t^*+\theta_2 h))$ such that,\footnote{See the mean value theorem in Subsection 2.4.}
\begin{align*}
&~\rho V'(k_1)(k(t^*+h)-k(t^*))=\rho (V(k(t^*+h))-V(k(t^*)))\\
=&~F(k(t^*+h),c(t^*+h))V'(k(t^*+h))+u(c(t^*+h),k(t^*+h))\\
&~-F(k(t^*),c(t^*))V'(k(t^*))-u(c(t^*),k(t^*))\\
=&~(F(k(t^*+h),c(t^*+h))-F(k(t^*),c(t^*+h)))V'(k(t^*+h))\\
&~+(F(k(t^*),c(t^*+h))-F(k(t^*),c(t^*)))V'(k(t^*+h))\\
&~+F(k(t^*),c(t^*))(V'(k(t^*+h))-V'(k(t^*)))\\
&~+u(c(t^*+h),k(t^*+h))-u(c(t^*+h),k(t^*))\\
&~+u(c(t^*+h),k(t^*))-u(c(t^*),k(t^*))\\
=&~pV'(k(t^*+h))(k(t^*+h)-k(t^*))+\dot{k}(t^*)(V'(k(t^*+h))-V'(k(t^*)))\\
&~+\frac{\partial u}{\partial k}(c(t^*+h),k(t^*+\theta_1h))(k(t^*+h)-k(t^*))\\
&~+\left(\frac{\partial u}{\partial c}(c(t^*+\theta_2h),k(t^*+\theta_2h))+qV'(k(t^*+h))\right)(c(t^*+h)-c(t^*))\\
&~+\left(\frac{\partial u}{\partial c}(c(t^*+\theta_2h),k(t^*))-\frac{\partial u}{\partial c}(c(t^*+\theta_2h),k(t^*+\theta_2h))\right)(c(t^*+h)-c(t^*)).
\end{align*}
To modify this equation,
\begin{align}
&~\frac{(\rho V'(k_1)-pV'(k(t^*+h)))(k(t^*+h)-k(t^*))}{h}\nonumber \\
=&~\frac{\dot{k}(t^*)(V'(k(t^*+h))-V'(k(t^*)))}{h}\nonumber \\
&~+\frac{\frac{\partial u}{\partial k}(c(t^*+h),k(t^*+\theta_1h))(k(t^*+h)-k(t^*))}{h}\label{EVA2}\\
&~+\frac{\left(\frac{\partial u}{\partial c}(c(t^*+\theta_2h),k^*(t^*+\theta_2h))+qV'(k(t^*+h))\right)(c(t^*+h)-c(t^*))}{h}\nonumber \\
&~+\frac{\left(\frac{\partial u}{\partial c}(c(t^*+\theta_2h),k^*(t^*))-\frac{\partial u}{\partial c}(c(t^*+\theta_2h),k^*(t^*+\theta_2h))\right)(c(t^*+h)-c(t^*))}{h}.\nonumber
\end{align}
Because $k(t^*+h)\le k(t^*)<\varepsilon$, $p>\inf_{c\ge 0}D_{k,+}F(k(t^*),c)>\rho$, and thus
\begin{align*}
&~\liminf_{h\downarrow 0}\frac{(\rho V'(k_1)-pV'(k(t^*+h)))(k(t^*+h)-k(t^*))}{h}\\
\ge&~(\rho-\inf_{c\ge 0}D_{k,+}F(k(t^*),c))V'(k(t^*))\dot{k}(t^*)>0.
\end{align*}
On the other hand, the first and second terms of the right-hand side of (\ref{EVA2}) are always nonpositive. Because $\frac{\partial u}{\partial c}$ is differentiable in $k$ on $\mathbb{R}_{++}\times ]0,\varepsilon_0[$, the fourth term of the right-hand side of (\ref{EVA2}) is
\[\frac{\partial^2u}{\partial k\partial c}(c(t^*+\theta_2h),k(t^*+\theta''h))\frac{k(t^*)-k(t^*+\theta_2h)}{h}(c(t^*+h)-c(t^*))\]
for some $\theta''\in [0,\theta_2]$. Therefore, the fourth term converges to $0$ as $h\to 0$.

We claim that there exists $t^*\in I$ such that the limsup of the third term is not greater than $0$: that is, we show that there exists $t^*\in I$ such that
\begin{equation}\label{THIRD}
\limsup_{h\downarrow 0}\frac{\left(\frac{\partial u}{\partial c}(c(t^*+\theta_2h),k(t^*+\theta_2h))+qV'(k(t^*+h))\right)(c(t^*+h)-c(t^*))}{h}\le 0.
\end{equation}
Because of the definition of $c(t)$ and the first-order condition,
\[\frac{\partial u}{\partial c}(c(t^*+\theta_2h),k(t^*+\theta_2h))=-rV'(k(t^*+\theta_2h)),\]
where $r\in \partial_cF(k(t^*+\theta_2h),c(t^*+\theta_2h))$.

We separate the proof into two cases. First, suppose that there exists $t^*\in I$ such that $H(k(t^*),c(t^*))\neq 0$. Then, $F$ is differentiable in $c$ and $\frac{\partial F}{\partial c}$ is differentiable in $k$ around $(k(t^*),c(t^*))$. Hence, if $h>0$ is sufficiently small, then $q=\frac{\partial F}{\partial c}(k(t^*),c(t^*+\theta_2h))$ and $r=\frac{\partial F}{\partial c}(k(t^*+\theta_2h),c(t^*+\theta_2h))$. In this case, the absolute value of the third term of the right-hand side of (\ref{EVA2}) is bounded from
\begin{align*}
&~\frac{|q|(V'(k(t^*+h))-V'(k(t^*)))\times |c(t^*+h)-c(t^*)|}{h}\\
&~+\frac{|q-r|V'(k(t^*+\theta_2h))\times |c(t^*+h)-c(t^*)|}{h},
\end{align*}
where
\[\frac{V'(k(t^*+h))-V'(k(t^*))}{h}\to L\dot{k}(t^*)\mbox{ as }h\downarrow 0\]
and
\[\frac{q-r}{h}=-\frac{\partial^2 F}{\partial k\partial c}(k(t^*+\theta'h),c(t^*+\theta_2h))\frac{k(t^*+\theta_2h)-k(t^*)}{h}\]
for some $\theta'\in [0,\theta_2]$. Therefore, (\ref{THIRD}) holds. Next, suppose that there is no $t^*\in I$ such that $H(k(t^*),c(t^*))\neq 0$. Then, there exists $h>0$ such that $H(k(t),c(t))=0$ for all $t\in [t^*,t^*+h]$. By the implicit function theorem, there exists a continuously differentiable function $c^+(k)$ such that $c(t)=c^+(k(t))$ for all $t\in [t^*,t^*+h]$. This implies that $c(t)$ is continuously differentiable around $t^*$. Because $(c^+)'(k)>0$, we have that $c(t)$ is decreasing around $t^*$. In this case, the value of the third term of the right-hand side of (\ref{EVA2}) is bounded from
\begin{align*}
&~\frac{q(V'(k(t^*+h))-V'(k(t^*)))(c(t^*+h)-c(t^*))}{h}\\
&~+\frac{(q-r)V'(k(t^*+\theta_2h))(c(t^*+h)-c(t^*))}{h}.
\end{align*}
The first term of the above formula converges to $0$ because of the same arguments as above. Moreover, $F$ is continuously differentiable at $(k(t^*),c(t^*+\theta_2h))$, and
\[q=\frac{\partial F}{\partial c}(k(t^*),c(t^*+\theta_2h))\to D_{c,-}F(k(t^*),c(t^*))\mbox{ as }h\downarrow 0.\]
Because $\theta_2<1$,
\[r\le D_{c,+}F(k(t^*+\theta_2h),c(t^*+h)).\]
We show that
\[\limsup_{h\downarrow 0}D_{c,+}F(k(t^*+\theta_2h),c(t^*+h))\le D_{c,-}F(k(t^*),c(t^*)).\]
Suppose not. Then, there exist $\delta>0$ and a sequence $(k_m,c_m)$ such that $(k_m,c_m)\to (k(t^*),c(t^*))$ as $m\to \infty$, and $D_{c,+}F(k_m,c_m)>D_{c,-}F(k(t^*),c(t^*))+\delta$ for all $m$. Then, there exists $a>0$ such that
\[\frac{F(k(t^*),c(t^*)-a)-F(k(t^*),c(t^*))}{-a}<D_{c,-}F(k(t^*),c(t^*))+\delta.\]
Therefore, for sufficiently large $m$,
\[D_{c,+}F(k_m,c_m)\le \frac{F(k_m,c_m-a)-F(k_m,c_m)}{-a}<D_{c,-}F(k(t^*),c(t^*))+\delta,\]
which is a contradiction. Combining the above inequalities,
\[\limsup_{h\downarrow 0}\frac{(q-r)V'(k(t^*+\theta_2h))(c(t^*+h)-c(t^*))}{h}\le 0,\]
and thus, (\ref{THIRD}) holds. This implies that the limsup of the right-hand side of (\ref{EVA2}) is nonpositive, which is a contradiction.

Hence, $k(t)\ge \varepsilon$ for every $t\in [0,T]$. By Lemma \ref{Lemma5}, $k(t)\le k^+(t,\bar{k})$ for every $t\in [0,T]$, and by Lemma \ref{Lemma6}, there exists a solution $k^*(t)$ to (\ref{SOL}) defined on $\mathbb{R}_+$. Clearly, $k^*(t)\ge \varepsilon$ for every $t\ge 0$, and thus $\inf_{t\ge 0}k^*(t)>0$. This completes the proof of Step 1.
\end{proof}

\begin{step}
Under the assumptions in Step 1, choose a solution $k^*(t)$ to equation $(\ref{SOL})$ defined on $\mathbb{R}_+$, and define $c^*(t)$ by $(\ref{CONSUM})$. Then, $(k^*(t),c^*(t))\in B_{\bar{k}}$ and
\[V(\bar{k})=\lim_{T\to \infty}\int_0^Te^{-\rho t}u(c^*(t),k^*(t))dt.\]
\end{step}

\begin{proof}[{\bf Proof of Step 2}]
By Step 1, there exists $\varepsilon>0$ such that
\[\varepsilon\le k^*(t)\le k^+(t,\bar{k})\]
for all $t\ge 0$. Moreover, by the definition of $c^*(t)$,
\[F(k^*(t),c^*(t))=\dot{k}^*(t)\]
for almost every $t\in \mathbb{R}_+$. Let
\[p^*(t)=\min\arg\min\{|\rho V(k^*(t))-u(c^*(t),k^*(t))-F(k^*(t),c^*(t))p||p\in \partial V(k^*(t))\}.\]
Then, $p^*(t)$ is measurable and positive, and
\[F(k^*(t),c^*(t))p^*(t)+u(c^*(t),k^*(t))=\rho V(k^*(t)).\]
for all $t\ge 0$. Choose any $T>0$. Let $W^*(t)=V(k^*(t))$. By Lemma \ref{Lemma8}, $W^*(t)$ is absolutely continuous on $[0,T]$ and $\dot{W}^*(t)=p^*(t)\dot{k}^*(t)$ almost everywhere, and thus,
\begin{align*}
\int_0^Te^{-\rho t}u(c^*(t),k^*(t))dt=&~\int_0^Te^{-\rho t}[F(k^*(t),c^*(t))p^*(t)+u(c^*(t),k^*(t))]dt\\
&~-\int_0^Te^{-\rho t}p^*(t)\dot{k}^*(t)dt\\
=&~-\int_0^T[-\rho e^{-\rho t}W^*(t)+e^{-\rho t}\dot{W}^*(t)]dt\\
=&~\int_0^T\frac{d}{dt}[-e^{-\rho t}W^*(t)]dt\\
=&~V(\bar{k})-e^{-\rho T}V(k^*(T)),\\
\end{align*}
and hence,
\[V(\bar{k})-e^{-\rho T}V(\varepsilon)\ge \int_0^Te^{-\rho t}u(c^*(t),k^*(t))dt\ge V(\bar{k})-e^{-\rho T}V(k^+(T,\bar{k})).\]
Therefore, by (\ref{GC}),
\[\lim_{T\to \infty}\int_0^Te^{-\rho t}u(c^*(t),k^*(t))dt=V(\bar{k}),\]
which implies that $(k^*(t),c^*(t))\in B_{\bar{k}}$. This completes the proof of Step 2.
\end{proof}

\begin{step}
If $(k(t),c(t))\in B_{\bar{k}}$ and $\inf_{t\ge 0}k(t)>0$, then
\[\lim_{T\to \infty}\int_0^Te^{-\rho t}u(c(t),k(t))dt\le V(\bar{k}).\]
\end{step}

\begin{proof}[{\bf Proof of Step 3}]
Suppose that $(k(t),c(t))\in B_{\bar{k}}$ and $\inf_{t\ge 0}k(t)>0$. Choose $T>0$. Define $W(t)=V(k(t))$. By Lemma \ref{Lemma8}, $W(t)$ is absolutely continuous on $[0,T]$ and $\dot{W}(t)=D_-V(k(t))\dot{k}(t)$ almost everywhere. Moreover, by Lemma \ref{Lemma7},
\begin{align*}
&~F(k(t),c(t))D_-V(k(t))+u(c(t),k(t))\\
\le&~\sup_{c\ge 0}\{F(k(t),c)D_-V(k(t))+u(c,k(t))\}=\rho V(k(t)).
\end{align*}
Therefore,
\begin{align*}
\int_0^Te^{-\rho t}u(c(t),k(t))dt=&~\int_0^Te^{-\rho t}[F(k(t),c(t))D_-V(k(t))+u(c(t),k(t))]dt\\
&~-\int_0^Te^{-\rho t}D_-V(k(t))\dot{k}(t)dt\\
\le&~-\int_0^T[-\rho e^{-\rho t}W(t)+e^{-\rho t}\dot{W}(t)]dt\\
=&~\int_0^T\frac{d}{dt}[-e^{-\rho t}W(t)]dt\\
=&~V(\bar{k})-e^{-\rho T}V(k(T)).
\end{align*}
Because $\inf_{t\ge 0}k(t)>0$ and $V$ satisfies (\ref{GC}), the right-hand side converges to $V(\bar{k})$ as $T\to \infty$. Therefore,
\[\lim_{T\to \infty}\int_0^Te^{-\rho t}u(c(t),k(t))dt\le V(\bar{k}),\]
as desired. This completes the proof of Step 3.
\end{proof}

Steps 1-3 state that all of our claims in Proposition \ref{Prop4} are correct. This completes the proof.
\end{proof}
\setcounter{step}{0}

We now consider the following differential inclusion:
\begin{equation}
\dot{k}(t)\in F(k(t),c^*(\partial \bar{V}(k(t)),k(t))),\ k(0)=\bar{k},\label{SOL2}
\end{equation}
and the corresponding definition of the function:
\begin{equation}\label{CONSUM2}
c^*(t)=\arg\min\{|\limsup_{n\to \infty}n(k^*(t+n^{-1})-k^*(t))-F(k^*(t),c)||c\in c^*(\partial \bar{V}(k^*(t)),k^*(t))\}.
\end{equation}

\begin{thm}\label{Theorem2}
Suppose that Assumptions 1-6 hold. Then, $\bar{V}$ is the unique viscosity solution to the HJB equation in $\mathscr{V}$. 
\end{thm}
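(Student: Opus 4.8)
The plan is to deduce the uniqueness almost entirely from Proposition \ref{Prop4}, applied twice — once to each candidate solution — exploiting the fact that that proposition simultaneously manufactures an attaining pair and a universal upper bound. First I would record that $\bar{V}$ is itself a legitimate competitor: by Theorem \ref{Theorem1} (which requires only Assumptions 1--5) we have $\bar{V}\in\mathscr{V}$, and $\bar{V}$ is a viscosity solution to the HJB equation. Hence, under Assumptions 1--6, both $\bar{V}$ and any further viscosity solution $V\in\mathscr{V}$ satisfy the hypotheses of Proposition \ref{Prop4}.

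Fix $\bar{k}>0$ and let $V\in\mathscr{V}$ be an arbitrary viscosity solution. Applying Proposition \ref{Prop4} to $V$, I obtain a pair $(k^*,c^*)\in B_{\bar{k}}$ with $\inf_{t\ge0}k^*(t)>0$ and
\[
V(\bar{k})=\lim_{T\to\infty}\int_0^T e^{-\rho t}u(c^*(t),k^*(t))\,dt.
\]
Because $(k^*,c^*)\in B_{\bar{k}}$ has $\inf_{t\ge0}k^*(t)>0$, it lies precisely in the class for which the upper-bound half of Proposition \ref{Prop4}, now applied to $\bar{V}$, gives
\[
\lim_{T\to\infty}\int_0^T e^{-\rho t}u(c^*(t),k^*(t))\,dt\le\bar{V}(\bar{k}).
\]
Combining the two displays yields $V(\bar{k})\le\bar{V}(\bar{k})$.

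Next I would run the identical argument with the roles of $V$ and $\bar{V}$ interchanged: apply Proposition \ref{Prop4} to $\bar{V}$ to obtain an attaining pair $(\tilde{k},\tilde{c})\in B_{\bar{k}}$ with positive infimum realizing $\bar{V}(\bar{k})$ as a limit, and then bound that pair above using the conclusion of Proposition \ref{Prop4} applied to $V$. This produces $\bar{V}(\bar{k})\le V(\bar{k})$. Since $\bar{k}>0$ was arbitrary, $V\equiv\bar{V}$ on $\mathbb{R}_{++}$, which is exactly the asserted uniqueness.

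I do not expect a genuine obstacle at this stage: all of the analytic difficulty — solving the inclusion \eqref{SOL} globally, keeping the capital stock bounded away from $0$, and recovering the value through the subdifferential identity — has already been absorbed into Proposition \ref{Prop4}. The only points that require care are that each attaining pair really belongs to $B_{\bar{k}}$ and has strictly positive infimum, so that it is an admissible input to the upper-bound inequality written for the \emph{other} solution, and that $\bar{V}$ genuinely qualifies as a member of $\mathscr{V}$; both facts are furnished directly by Proposition \ref{Prop4} and Theorem \ref{Theorem1} respectively.
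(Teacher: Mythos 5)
Your proposal is correct and follows essentially the same argument as the paper: both proofs apply Proposition \ref{Prop4} twice — once to each of $\bar{V}$ and $V$ to manufacture an attaining pair in $B_{\bar{k}}$ with capital bounded away from zero, and once (to the other function) to invoke the upper-bound half — yielding the two inequalities $V(\bar{k})\le\bar{V}(\bar{k})$ and $\bar{V}(\bar{k})\le V(\bar{k})$. The only cosmetic difference is the order in which the two inequalities are derived.
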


\begin{proof}
Because Assumptions 1-6 hold, $\bar{V}\in \mathscr{V}$, and it is a viscosity solution to the HJB equation. By Proposition \ref{Prop4}, there exists a solution $k^*(t)$ to (\ref{SOL2}) defined on $\mathbb{R}_+$ such that $\inf_{t\ge 0}k^*(t)>0$, and if we define $c^*(t)$ by (\ref{CONSUM2}), then $(k^*(t),c^*(t))\in B_{\bar{k}}$ and
\[\bar{V}(\bar{k})=\lim_{T\to \infty}\int_0^Te^{-\rho t}u(c^*(t),k^*(t))dt.\]
Next, suppose that $V\in\mathscr{V}$ is also a viscosity solution to the HJB equation. Because $\inf_{t\ge 0}k^*(t)>0$, by Proposition \ref{Prop4},
\[V(\bar{k})\ge \lim_{T\to \infty}\int_0^Te^{-\rho t}u(c^*(t),k^*(t))dt=\bar{V}(\bar{k}).\]
By Proposition \ref{Prop4}, there exists a solution $k^+(t)$ to (\ref{SOL}) defined on $\mathbb{R}_+$ such that $\inf_{t\ge 0}k^+(t)>0$. Define $c^+(t)$ as (\ref{CONSUM}), where $k^*(t)$ is replaced with $k^+(t)$. Then, $(k^+(t),c^+(t))\in B_{\bar{k}}$, and
\[V(\bar{k})=\lim_{T\to \infty}\int_0^Te^{-\rho t}u(c^+(t),k^+(t))dt\le \bar{V}(\bar{k}).\]
Hence, we conclude that $V=\bar{V}$. This completes the proof.
\end{proof}

Therefore, under Assumptions 1-6, the HJB equation is the perfect characterization for the value function in the functional space $\mathscr{V}$.

In the proof of Theorem \ref{Theorem2}, if $\int_0^{\infty}e^{-\rho t}u(c^*(t),k^*(t))dt$ is defined in the sense of the Lebesgue integral, then $(k^*(t),c^*(t))\in A_{\bar{k}}$, and thus it is a solution to (\ref{MODEL}). Because $\inf_{t\ge 0}k^*(t)>0$, if $\inf_{t\ge 0}c^*(t)>0$, then $(k^*(t),c^*(t))\in A_{\bar{k}}$. However, this is not so easily verified. The following corollary presents three appropriate sufficient conditions for $(k^*(t),c^*(t))$ to be a solution to (\ref{MODEL}).

\begin{cor}
Suppose that Assumptions 1-6 hold, $k^*(t)$ is a solution to $(\ref{SOL2})$ defined on $\mathbb{R}_+$, and $c^*(t)$ is defined by $(\ref{CONSUM2})$. Suppose that one of the following three conditions holds.
\begin{enumerate}[1)]
\item $u(c,k)$ is bounded from above or below.

\item $k^*(t)$ is bounded.

\item $\liminf_{k\to \infty}c^*(p,k)>0$ for all $p>0$.
\end{enumerate}
Then, $(k^*(t),c^*(t))$ is a solution to $(\ref{MODEL})$.
\end{cor}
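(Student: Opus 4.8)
The plan is to upgrade the membership $(k^*(t),c^*(t))\in B_{\bar k}$---which Proposition \ref{Prop4} already supplies, together with $\inf_{t\ge 0}k^*(t)=:\varepsilon>0$ and $\lim_{T\to\infty}\int_0^T e^{-\rho t}u(c^*(t),k^*(t))\,dt=\bar V(\bar k)\in\mathbb R$---to the stronger membership $(k^*(t),c^*(t))\in A_{\bar k}$. The only gap is that $B_{\bar k}$ asks merely for the truncated integrals to converge, whereas $A_{\bar k}$ requires $\int_0^\infty e^{-\rho t}u(c^*(t),k^*(t))\,dt$ to be defined in the Lebesgue sense (possibly $\pm\infty$). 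I would therefore isolate one reduction good for all three cases: \emph{it suffices to show that either $\int_0^\infty e^{-\rho t}(u(c^*(t),k^*(t)))^+\,dt<\infty$ or $\int_0^\infty e^{-\rho t}(u(c^*(t),k^*(t)))^-\,dt<\infty$}. Indeed, once one signed part is integrable the Lebesgue integral is defined; by monotone convergence the truncated integrals then converge to this Lebesgue value, which must equal the finite number $\bar V(\bar k)$, so both parts are in fact integrable, the integral is finite, and $(k^*(t),c^*(t))\in A_{\bar k}$ with $\int_0^\infty e^{-\rho t}u(c^*(t),k^*(t))\,dt=\bar V(\bar k)$, i.e.\ a solution to (\ref{MODEL}).

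Condition 1) makes the reduction immediate: if $u\le M$ then $(u(c^*,k^*))^+\le M^+$ and $\int_0^\infty e^{-\rho t}M^+\,dt<\infty$, and symmetrically a lower bound on $u$ controls the negative part. For condition 2) I would argue by compactness. With $k^*(t)\in[\varepsilon,K]$, concavity and monotonicity of $V=\bar V$ confine $\partial\bar V$ on $[\varepsilon,K]$ to the positive finite interval $[D_+\bar V(K),D_-\bar V(\varepsilon)]$; writing $c^*(t)=c^*(p^*(t),k^*(t))$ for some $p^*(t)\in\partial\bar V(k^*(t))$, the pair $(p^*(t),k^*(t))$ then lies in a compact subset of $\mathbb R^2_{++}$, on which $c^*(\cdot,\cdot)$ is continuous and positive by Proposition \ref{Prop1}. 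Hence $c^*(t)$ stays in a compact subset of $\mathbb R_{++}$, $(c^*(t),k^*(t))$ stays in a compact subset of $\mathbb R^2_{++}$, and $u$ is bounded there, so both signed parts are integrable.

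The real work, and the step I expect to be the main obstacle, is condition 3), where $k^*(t)$ may be unbounded and the accompanying price $p^*(t)\in\partial\bar V(k^*(t))$ may tend to $0$. Here the target is a uniform bound $c^*(t)\ge c_->0$, which gives $u(c^*(t),k^*(t))\ge u(c_-,\varepsilon)>-\infty$ (since $u$ is nondecreasing and $k^*(t)\ge\varepsilon$, with $u(c_-,\varepsilon)$ finite), bounding the negative part. To obtain it, I would first use concavity of $\bar V$ and $k^*(t)\ge\varepsilon$ to get $p^*(t)\le D_-\bar V(\varepsilon)=:p_+<\infty$. Next I would show that $c^*(p,k)$ is nonincreasing in $p$: the objective $F(k,c)p+u(c,k)$ has decreasing differences in $(c,p)$ because their interaction is carried by $\partial_c F<0$ (Assumption 3), so monotone comparative statics forces the maximizer to decrease in $p$; hence $c^*(t)=c^*(p^*(t),k^*(t))\ge c^*(p_+,k^*(t))$. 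Finally I would check $\inf_{k\ge\varepsilon}c^*(p_+,k)>0$ by patching a positive minimum on a compact window $[\varepsilon,K_0]$ (continuity and positivity from Proposition \ref{Prop1}) with a positive tail bound on $[K_0,\infty)$ supplied by the hypothesis $\liminf_{k\to\infty}c^*(p_+,k)>0$. The two delicate points are proving the monotonicity of $c^*(p,k)$ in $p$ under mere concavity (no differentiability of $F$ in $c$), and combining the compact-window and tail bounds into a single positive infimum.
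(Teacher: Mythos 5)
Your proposal is correct and follows essentially the same route as the paper's proof: reduce everything to upgrading $(k^*(t),c^*(t))\in B_{\bar{k}}$ to membership in $A_{\bar{k}}$ by making one signed part of $e^{-\rho t}u(c^*(t),k^*(t))$ integrable, handle 1) by one-sided boundedness, 2) by compactness of the trajectory, and 3) by showing $c^*(p,k)$ is nonincreasing in $p$ and then patching a compact-window minimum with the tail bound to get $\inf_t c^*(t)>0$. The only divergence is in the monotonicity sub-step of case 3), where the paper argues via the first-order condition $r_ip_i=-\frac{\partial u}{\partial c}(c_i,k)$ with $r_i\in\partial_cF(k,c_i)$ and concavity of $u$, while you invoke monotone comparative statics (decreasing differences of $F(k,c)p+u(c,k)$ in $(c,p)$ from $F$ being decreasing in $c$, plus uniqueness of the maximizer from strict concavity in $c$); this is a valid and, if anything, more elementary alternative since adding the two optimality inequalities yields $[F(k,c_2)-F(k,c_1)](p_2-p_1)\ge 0$ directly, with no appeal to subdifferential first-order conditions.
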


\begin{proof}
By Proposition \ref{Prop4} and Theorem \ref{Theorem1}, $\bar{V}$ is a solution to the HJB equation, $(k^*(t),c^*(t))\in B_{\bar{k}}$, and
\[\lim_{T\to\infty}\int_0^Te^{-\rho t}u(c^*(t),k^*(t))dt=\bar{V}(\bar{k}).\]
Therefore, it suffices to show that $(k^*(t),c^*(t))\in A_{\bar{k}}$.

For 1), if $u(c,k)$ is either bounded from above or below, then $A_{\bar{k}}=B_{\bar{k}}$, and thus our claim holds.

For 2), suppose that $k^*(t)$ is bounded. As we proved in Proposition \ref{Prop4}, $\inf_{t\ge 0}k^*(t)>0$. Therefore, the trajectory of $k^*(t)$ is included in some compact set $C\subset \mathbb{R}_{++}$. This implies that $\inf_{t\ge 0}c^*(t)>0$, and thus $e^{-\rho t}u(c^*(t),k^*(t))$ is bounded from below. Hence, $(k^*(t),c^*(t))\in A_{\bar{k}}$, as desired.

For 3), let $0<p_1<p_2$, and $c_i=c^*(p_i,k)$. Then,
\[F(k,c_i)p_i+u(c_i,k)=\max_{c\ge 0}\{F(k,c)p_i+u(c,k)\}.\]
By the first-order condition, there exists $r_i\in \partial_cF(k,c_i)$ such that
\[r_ip_i=-\frac{\partial u}{\partial c}(c_i,k).\]
If $c_1<c_2$, then
\[r_1\ge D_{c,+}F(k,c_1)\ge D_{c,-}F(k,c_2)\ge r_2,\]
and thus,\footnote{Note that $F$ is decreasing in $c$, and thus $r_i<0$.}
\[\frac{\partial u}{\partial c}(c_2,k)=-r_2p_2>-r_2p_1\ge -r_1p_1=\frac{\partial u}{\partial c}(c_1,k),\]
which contradicts the concavity of $u$. Therefore, $c^*(p,k)$ is nonincreasing in $p$. Define $\varepsilon=\inf_{t\ge 0}k^*(t)$. Then,
\[c^*(t)\ge c^*(D_-\bar{V}(k^*(t)),k^*(t))\ge \inf_{k\ge \varepsilon}c^*(D_-\bar{V}(\varepsilon),k)>0,\]
and thus $u(c^*(t),k^*(t))$ is bounded from below. Hence, again $(k^*(t),c^*(t))\in A_{\bar{k}}$, as desired. This completes the proof.
\end{proof}

Note that, the requirement in Corollary 1 is not strong. For example, suppose that $u(c,k)=au_{\theta}(c)+bu_{\theta}(k)$ for $a,b\ge 0$ for a CRRA function $u_{\theta}$ with $\theta\neq 1$, then $u(c,k)$ is either bounded from above or below, and 1) holds. Next, suppose that $F(k,c)=f(k)-dk-c$, $d>0$, $f$ is continuously differentiable, and $\lim_{k\to \infty}f'(k)<d$. Then, $k^+(t,\bar{k})$ is bounded, and thus $k^*(t)$ is also bounded, and 2) holds. Third, if $F(k,c)=f(k)-h(c)$ for a convex function $h$ and $u(c,k)=u(c)$, then $c^*(p,k)$ is independent of $k$. In this case, $\liminf_{k\to \infty}c^*(p,k)>0$ is trivially satisfied, and 3) holds.

The converse of Corollary 1 holds unconditionally. We state this result as a proposition.

\begin{prop}\label{Prop5}
Suppose that Assumptions 1-5 hold, and that there exists a solution $(k^*(t),c^*(t))$ to $(\ref{MODEL})$ such that $\inf_{t\ge 0}k^*(t)>0$. Then, $k^*(t)$ is a solution to $(\ref{SOL2})$, and $c^*(t)\in c^*(\partial \bar{V}(k^*(t)),k^*(t))$ for almost all $t\ge 0$.
\end{prop}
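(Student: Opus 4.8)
The plan is to run the computation from Step 3 of the proof of Proposition \ref{Prop4} along the given optimal path, and then use optimality to upgrade the inequality obtained there into an exact equality, which pins down the consumption.

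First I would assemble the tools. Since Assumptions 1--5 hold, Theorem \ref{Theorem1} gives $\bar{V}\in\mathscr{V}$ and that $\bar{V}$ is a viscosity solution to the HJB equation; being concave, Lemma \ref{Lemma7} then yields
\[\sup_{c\ge 0}\{F(k,c)p+u(c,k)\}=\rho\bar{V}(k)\qquad\text{for all }k>0,\ p\in\partial\bar{V}(k).\]
Two further observations are needed. (i) Under Assumption 4 the map $c\mapsto u(c,k)$ is strictly concave, so for each $(p,k)\in\mathbb{R}^2_{++}$ the Hamiltonian $c\mapsto F(k,c)p+u(c,k)$ is strictly concave and has a \emph{unique} maximiser, which is therefore exactly the $c^*(p,k)$ of Proposition \ref{Prop1}. (ii) Because $c^*(t)\ge 0$ and $F$ is decreasing in $c$, we have $\dot{k}^*(t)=F(k^*(t),c^*(t))\le F(k^*(t),0)$, so the comparison in Lemma \ref{Lemma5} (exactly as in Step 1 of Proposition \ref{Prop4}) gives $k^*(t)\le k^+(t,\bar{k})$; together with $\inf_t k^*(t)>0$, the growth condition (\ref{GC}) forces $e^{-\rho T}\bar{V}(k^*(T))\to 0$ as $T\to\infty$.

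Next I would carry out the main computation. Put $W^*(t)=\bar{V}(k^*(t))$ and define the nonnegative defect
\[\Delta(t)=\rho\bar{V}(k^*(t))-\left[F(k^*(t),c^*(t))D_-\bar{V}(k^*(t))+u(c^*(t),k^*(t))\right]\ge 0,\]
where nonnegativity is precisely the displayed Lemma \ref{Lemma7} identity applied with $p=D_-\bar{V}(k^*(t))\in\partial\bar{V}(k^*(t))$. By Lemma \ref{Lemma8}, $W^*$ is absolutely continuous on each $[0,T]$ with $\dot{W}^*(t)=D_-\bar{V}(k^*(t))\dot{k}^*(t)$ a.e.; substituting $F(k^*,c^*)=\dot{k}^*$ and integrating by parts gives
\[\int_0^T e^{-\rho t}u(c^*(t),k^*(t))\,dt=\bar{V}(\bar{k})-e^{-\rho T}\bar{V}(k^*(T))-\int_0^T e^{-\rho t}\Delta(t)\,dt.\]
Letting $T\to\infty$, the left-hand side tends to $\bar{V}(\bar{k})$ by optimality and $e^{-\rho T}\bar{V}(k^*(T))\to 0$ by (ii), so $\int_0^\infty e^{-\rho t}\Delta(t)\,dt=0$; since $\Delta\ge 0$ this forces $\Delta(t)=0$ for a.e.\ $t$. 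Hence, for a.e.\ $t$, the value $c^*(t)$ attains $\sup_{c\ge 0}\{F(k^*(t),c)D_-\bar{V}(k^*(t))+u(c,k^*(t))\}$, i.e.\ it is a maximiser of the Hamiltonian at $p=D_-\bar{V}(k^*(t))$. By the uniqueness in (i) this gives $c^*(t)=c^*(D_-\bar{V}(k^*(t)),k^*(t))\in c^*(\partial\bar{V}(k^*(t)),k^*(t))$ a.e., whence $\dot{k}^*(t)=F(k^*(t),c^*(t))\in F(k^*(t),c^*(\partial\bar{V}(k^*(t)),k^*(t)))$ a.e., so $k^*(t)$ solves (\ref{SOL2}).

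The hard part will be the passage from the integrated equality to the pointwise a.e.\ conclusion together with the finiteness bookkeeping: one must verify that $u(c^*(t),k^*(t))>-\infty$ a.e.\ (which follows from finiteness of $\int_0^\infty e^{-\rho t}u(c^*,k^*)dt=\bar{V}(\bar{k})$), that $D_-\bar{V}(k^*(t))>0$ since $\bar{V}$ is increasing and concave, so that Proposition \ref{Prop1} is applicable, and that the image of $k^*$ on each $[0,T]$ lies in a compact subinterval of $\mathbb{R}_{++}$ so that Lemma \ref{Lemma8} is legitimate. The uniqueness of the Hamiltonian maximiser in (i) is what makes the final identification clean; without it one could conclude only that $c^*(t)$ lies in the argmax set, which is in fact exactly what the statement requires.
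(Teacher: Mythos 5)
Your proof is correct and follows essentially the same route as the paper's: both rest on Lemma \ref{Lemma7} applied at $p=D_-\bar{V}(k^*(t))$, the chain rule of Lemma \ref{Lemma8}, integration by parts of $e^{-\rho t}\bar{V}(k^*(t))$, the sandwich $\delta\le k^*(t)\le k^+(t,\bar{k})$ combined with the growth condition, and the uniqueness of the Hamiltonian maximiser to identify $c^*(t)$ with $c^*(D_-\bar{V}(k^*(t)),k^*(t))$. The only difference is presentational: the paper argues by contradiction, converting a non-null ``bad'' set into a strict inequality $\int_0^\infty e^{-\rho t}u(c^*,k^*)\,dt<\bar{V}(\bar{k})$ that violates optimality, whereas you integrate a nonnegative defect $\Delta(t)$ and conclude it vanishes almost everywhere --- the same computation in direct form.
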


\begin{proof}
Suppose that $k^*(t)$ violates (\ref{SOL2}). Then, there exists $T>0$ such that the set $I=\{t\in [0,T]|c^*(t)\notin c^*(\partial \bar{V}(k^*(t)),k^*(t))\}$ is not a null set. By Theorem \ref{Theorem1}, $\bar{V}$ is a viscosity solution to the HJB equation. Define $W(t)=\bar{V}(k^*(t))$. By Lemma \ref{Lemma8}, $W(t)$ is absolutely continuous on $[0,T]$ and $\dot{W}(t)=D_-\bar{V}(k^*(t))\dot{k}^*(t)$ almost everywhere. Therefore,
\begin{align}
\int_0^Te^{-\rho t}u(c^*(t),k^*(t))dt=&~\int_0^Te^{-\rho t}[F(k^*(t),c^*(t))D_-\bar{V}(k^*(t))+u(c^*(t),k^*(t))]dt\nonumber \\
&~-\int_0^Te^{-\rho t}D_-\bar{V}(k^*(t))\dot{k}^*(t)dt\nonumber \\
<&~-\int_0^T[-\rho e^{-\rho t}W(t)+e^{-\rho t}\dot{W}(t)]dt\label{EVALL}\\
=&~\int_0^T\frac{d}{dt}[-e^{-\rho t}W(t)]dt\nonumber \\
=&~\bar{V}(\bar{k})-e^{-\rho T}\bar{V}(k^*(T)).\nonumber
\end{align}
By the same arguments, we have that for every $T'\ge T$,
\[\int_T^{T'}e^{-\rho t}u(c^*(t),k^*(t))dt\le e^{-\rho T}\bar{V}(k^*(T))-e^{-\rho T'}(k^*(T')).\]
By (\ref{EVALL}), there exists $\varepsilon>0$ such that
\[\int_0^Te^{-\rho t}u(c^*(t),k^*(t))dt<\bar{V}(\bar{k})-e^{-\rho T}\bar{V}(k^*(T))-2\varepsilon.\]
Choose $\delta>0$ such that $\inf_{t\ge 0}k^*(t)\ge \delta$. Then, for every $T'\ge T$ such that\footnote{Note that, because of Theorem 1, $\bar{V}$ satisfies (\ref{GC}).}
\[\max\{|e^{-\rho T'}\bar{V}(k^+(T',\bar{k}))|,|e^{-\rho T'}\bar{V}(\delta)|\}<\varepsilon,\]
we have that
\[\int_0^{T'}e^{-\rho t}u(c^*(t),k^*(t))dt<\bar{V}(\bar{k})-e^{-\rho T'}\bar{V}(k^*(T'))-2\varepsilon<\bar{V}(\bar{k})-\varepsilon.\]
Because $T'$ is arbitrary,
\[\int_0^{\infty}e^{-\rho t}u(c^*(t),k^*(t))dt\le \bar{V}(\bar{k})-\varepsilon<\bar{V}(\bar{k}),\]
which is a contradiction. Therefore, $k^*(t)$ is a solution to (\ref{SOL2}). Define $c^+(t)$ by (\ref{CONSUM2}). Then,
\[\dot{k}(t)=F(k^*(t),c^+(t))\]
for almost all $t\ge 0$, and
\[\dot{k}(t)=F(k^*(t),c^*(t))\]
for almost all $t\ge 0$, which implies that $c^*(t)=c^+(t)$ almost everywhere, and thus $c^*(t)\in c^*(\partial \bar{V}(k^*(t)),k^*(t))$ almost everywhere. This completes the proof.
\end{proof}

Finally, we discuss the differentiability of $\bar{V}(k)$. 

\begin{prop}\label{Prop6}
Suppose that Assumptions 1-4 hold and $\bar{V}$ is finite. Suppose also that if $F(k,c)=0$, then $F$ is differentiable in $c$ at $(k,c)$. Then, $\bar{V}$ is continuously differentiable, and thus it is a classical solution to the HJB equation.
\end{prop}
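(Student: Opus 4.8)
The plan is to reduce the statement to pure differentiability and then upgrade to $C^1$ using concavity. By Propositions \ref{Prop2} and \ref{Prop3}, $\bar{V}$ is already known to be increasing, concave, and a viscosity solution to the HJB equation. A concave function is differentiable at $k$ exactly when $\partial\bar{V}(k)$ is a singleton, and the cited fact (Proposition 1.9 of Ch.2 of \cite{BCD}) says a differentiable viscosity solution satisfies the HJB equation pointwise. Hence the entire content of the proposition is the claim that $\partial\bar{V}(k)$ is single-valued for every $k>0$; everything else is formal.

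I would argue by contradiction, fixing $k_0>0$ with $\partial\bar{V}(k_0)=[p_+,p_-]$ and $p_+<p_-$. By Lemma \ref{Lemma7}, for every $p\in[p_+,p_-]$,
\[\sup_{c\ge 0}\{F(k_0,c)p+u(c,k_0)\}=\rho\bar{V}(k_0),\]
and by Proposition \ref{Prop1} the supremum is attained at the positive value $c^*(p,k_0)$. The key observation is that $M(p):=\sup_{c\ge 0}\{F(k_0,c)p+u(c,k_0)\}$ is a supremum of affine functions of $p$, hence convex, and the supporting line at an optimizer shows $F(k_0,c^*(p,k_0))\in\partial M(p)$. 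Since $M$ is \emph{constant} on $[p_+,p_-]$, it is differentiable with derivative $0$ on the interior, so $\partial M(p)=\{0\}$ there; reading off the slope gives $F(k_0,c^*(p,k_0))=0$ on the interior, and then on the whole interval by continuity of $c^*$ and $F$.

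Now I would bring in the extra hypothesis. Because $F$ is strictly decreasing in $c$ (Assumption 3), the equation $F(k_0,\cdot)=0$ has at most one solution, so $c^*(p,k_0)$ equals a single value $\bar{c}>0$ independent of $p$. Since $F(k_0,\bar{c})=0$, the hypothesis of the proposition forces $F$ to be differentiable in $c$ at $(k_0,\bar{c})$, so $\partial_cF(k_0,\bar{c})=\{\tfrac{\partial F}{\partial c}(k_0,\bar{c})\}$ is a singleton. Writing the first-order condition for the interior maximizer $\bar{c}$ of the concave map $c\mapsto F(k_0,c)p+u(c,k_0)$ yields
\[p\,\frac{\partial F}{\partial c}(k_0,\bar{c})+\frac{\partial u}{\partial c}(\bar{c},k_0)=0\]
for \emph{every} $p\in[p_+,p_-]$. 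Here $\frac{\partial u}{\partial c}(\bar{c},k_0)>0$ by Assumption 2 and $\frac{\partial F}{\partial c}(k_0,\bar{c})$ is a fixed constant, so this linear equation in $p$ can hold for at most one value of $p$, contradicting $p_+<p_-$ (and if the coefficient of $p$ vanished we would get $\frac{\partial u}{\partial c}(\bar{c},k_0)=0$, equally impossible).

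This contradiction shows $\partial\bar{V}(k)$ is single-valued, i.e. $\bar{V}$ is differentiable everywhere. Since $\partial\bar{V}$ is upper hemi-continuous (established in Subsection 2.4) and now single-valued, $\bar{V}'$ is continuous, so $\bar{V}\in C^1$; being a differentiable viscosity solution it satisfies the HJB equation pointwise and is therefore a classical solution. I expect the one genuinely delicate step to be the passage to $F(k_0,c^*(p,k_0))=0$: it hinges on recognizing $M$ as a flat convex function and extracting the slope $F(k_0,c^*(p,k_0))$ from its vanishing subdifferential, which is precisely where the envelope structure of the optimization problem is used and where the argument would break down without strict monotonicity of $F$ in $c$ and the differentiability hypothesis on the zero level set.
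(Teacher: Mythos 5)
Your proposal is correct and follows essentially the same route as the paper's proof: contradiction at a kink of $\partial\bar{V}$, Lemma \ref{Lemma7} forcing $F(k_0,c^*(p,k_0))=0$ across the subdifferential (your convex envelope function $M(p)$ is just a repackaging of the paper's direct equality-versus-inequality argument), uniqueness of the zero of $F(k_0,\cdot)$ by strict monotonicity in $c$, and the first-order condition yielding a linear equation in $p$ that cannot hold on a nondegenerate interval. The final upgrade to $C^1$ via upper hemi-continuity of $\partial\bar{V}$ is also exactly the paper's closing step.
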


\begin{proof}
Because of Proposition \ref{Prop3}, $\bar{V}$ is an increasing and concave viscosity solution to the HJB equation. Moreover, because $\partial \bar{V}$ is upper hemi-continuous, if it is single-valued, then it is continuous. Hence, it suffices to show that $\bar{V}$ is differentiable everywhere. Suppose that $\bar{V}$ is not differentiable at $k$. Then, $\partial \bar{V}(k)=[D_+\bar{V}(k),D_-\bar{V}(k)]$, where $D_+\bar{V}(k)<D_-\bar{V}(k)$. Choose $p\in ]D_+\bar{V}(k),D_-\bar{V}(k)[$. By Lemma \ref{Lemma7},
\[F(k,c^*(p,k))p+u(c^*(p,k),k)=\rho \bar{V}(k),\]
\[F(k,c^*(p,k))q+u(c^*(p,k),k)\le \rho \bar{V}(k),\]
for all $q\in \partial \bar{V}(k)$. Therefore, $F(k,c^*(p,k))=0$. By the continuity of $F$ and $c^*$, we have that $F(k,c^*(p,k))=0$ for all $p\in \partial \bar{V}(k)$, and thus there exists $\bar{c}>0$ such that, $c^*(p,k)=\bar{c}$ for all $p\in \partial \bar{V}(k)$. By assumption, $F$ is differentiable in $c$ at $(k,\bar{c})$, and by the first-order condition,
\[\frac{\partial F}{\partial c}(k,\bar{c})p+\frac{\partial u}{\partial c}(\bar{c},k)=0\]
for all $p\in \partial \bar{V}(k)$, which is impossible because $\frac{\partial F}{\partial c}(k,\bar{c})<0$. This completes the proof.
\end{proof}

As a corollary, we obtain the following result.

\begin{cor}
Suppose that Assumptions 1-6 hold, and that if $F(k,c)=0$, then $F$ is continuously differentiable in $c$ at $(k,c)$. Then, $\bar{V}$ is a classical solution to the HJB equation in $\mathscr{V}$, and there is no other viscosity solution to the HJB equation in $\mathscr{V}$. Moreover, the following ODE
\begin{equation}\label{SOL3}
\dot{k}(t)=F(k(t),c^*(\bar{V}'(k(t)),k(t))),\ k(0)=\bar{k}
\end{equation}
has a solution $k^*(t)$ defined on $\mathbb{R}_+$ such that $\inf_{t\ge 0}k^*(t)>0$, and if we define $c^*(t)=c^*(\bar{V}'(k^*(t)),k^*(t))$, then $(k^*(t),c^*(t))\in B_{\bar{k}}$ and
\[\bar{V}(\bar{k})=\lim_{T\to \infty}\int_0^Te^{-\rho t}u(c^*(t),k^*(t))dt.\]
Furthermore, if one of the three conditions in Corollary 1 holds, then $(k^*(t),c^*(t))$ is a solution to $(\ref{MODEL})$ such that $k^*(t)$ is continuously differentiable and $c^*(t)$ is continuous.
\end{cor}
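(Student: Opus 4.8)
The plan is to assemble the earlier results, since almost every claim has already been established in a more general form and the corollary is essentially a regularity refinement. First I would invoke Theorem \ref{Theorem1} to guarantee that, under Assumptions 1--6, $\bar V\in\mathscr V$ and $\bar V$ is finite. Together with the added hypothesis that $F$ is continuously differentiable in $c$ at every $(k,c)$ with $F(k,c)=0$, Proposition \ref{Prop6} then yields that $\bar V$ is continuously differentiable on $\mathbb R_{++}$, hence a classical solution to the HJB equation. Uniqueness in $\mathscr V$ is exactly Theorem \ref{Theorem2}, so the first two assertions require no new work.

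For the dynamics, the key observation is that once $\bar V$ is differentiable the set-valued object $\partial\bar V(k)$ collapses to the singleton $\{\bar V'(k)\}$. Consequently the image $c^*(\partial\bar V(k),k)$ is the single point $c^*(\bar V'(k),k)$ (recall $c^*(p,k)$ is a single-valued continuous function by Proposition \ref{Prop1}), so the differential inclusion (\ref{SOL2}) reduces to the ordinary differential equation (\ref{SOL3}), and the $\arg\min$ rule (\ref{CONSUM2}) degenerates to $c^*(t)=c^*(\bar V'(k^*(t)),k^*(t))$. Applying Proposition \ref{Prop4} with $V=\bar V$ then directly delivers a solution $k^*(t)$ of (\ref{SOL2}), which is now (\ref{SOL3}), on $\mathbb R_+$ with $\inf_{t\ge0}k^*(t)>0$, the membership $(k^*(t),c^*(t))\in B_{\bar k}$, and the value identity $\bar V(\bar k)=\lim_{T\to\infty}\int_0^Te^{-\rho t}u(c^*(t),k^*(t))\,dt$.

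Next I would upgrade the regularity of the trajectory. Because $\bar V'$ is continuous (Proposition \ref{Prop6}), $c^*$ is continuous (Proposition \ref{Prop1}), and $F$ is continuous (Assumption 3), the right-hand side $k\mapsto F(k,c^*(\bar V'(k),k))$ of (\ref{SOL3}) is continuous on $\mathbb R_{++}$. By the fourth of the basic ODE facts recorded in Subsection 3.1, any solution of an ODE with continuous right-hand side is continuously differentiable, so $k^*(t)$ is continuously differentiable; and $c^*(t)=c^*(\bar V'(k^*(t)),k^*(t))$ is then continuous as a composition of continuous maps. Finally, when one of the three conditions of Corollary 1 holds, Corollary 1 applied to this very pair $(k^*(t),c^*(t))$ shows that it is a solution to (\ref{MODEL}), completing the argument.

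I do not expect a genuine obstacle here: the content is a careful reduction of the differential inclusion to a differential equation. The only point demanding attention is verifying that differentiability of $\bar V$ truly makes the selection rule (\ref{CONSUM2}) single-valued and continuous, so that Proposition \ref{Prop4} and the ODE regularity fact can both be legitimately invoked with $V=\bar V$; once that reduction is in hand, everything else is a direct citation of the preceding results.
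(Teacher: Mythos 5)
Your proposal is correct and follows essentially the same route as the paper's own proof: Proposition \ref{Prop6} and Theorem \ref{Theorem2} give the classical solution and uniqueness claims, differentiability of $\bar{V}$ collapses the inclusion (\ref{SOL2}) to the ODE (\ref{SOL3}) and the selection rule (\ref{CONSUM2}) to $c^*(t)=c^*(\bar{V}'(k^*(t)),k^*(t))$, Proposition \ref{Prop4} supplies existence, positivity, and the value identity, continuity of the right-hand side gives $k^*\in C^1$, and Corollary 1 handles the final claim. Your explicit appeal to Theorem \ref{Theorem1} for finiteness of $\bar{V}$ is a point the paper leaves implicit, but it is the same argument.
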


\begin{proof}
By Proposition \ref{Prop6} and Theorem \ref{Theorem2}, $\bar{V}$ is the unique classical solution to the HJB equation in $\mathscr{V}$. Therefore, $\partial \bar{V}(k)=\{\bar{V}'(k)\}$, and thus (\ref{SOL3}) is equivalent to (\ref{SOL2}). Hence, this has a solution $k^*(t)$ defined on $\mathbb{R}_+$ such that $\inf_{t\ge 0}k^*(t)>0$. Because the right-hand side of (\ref{SOL3}) is continuous in $k$, we have that $k^*(t)$ is continuously differentiable. Now, $c^*(t)$ is the same as that defined by (\ref{CONSUM2}). Therefore, $(k^*(t),c^*(t))\in B_{\bar{k}}$ and
\[\bar{V}(\bar{k})=\lim_{T\to \infty}\int_0^Te^{-\rho t}u(c^*(t),k^*(t))dt.\]
If one of the three conditions in Corollary 1 holds, then $(k^*(t),c^*(t))\in A_{\bar{k}}$, and thus it is a solution to (\ref{MODEL}). By definition, $c^*(t)$ is continuous. This completes the proof.
\end{proof}

\section{Examples}
\subsection{Case of the RCK model}
Our model (\ref{MODEL}) is much more complex than the most commonly used class of models in economics. This makes it difficult to understand what our assumptions imply. In this section, we examine the assumptions of the traditional RCK model, and clarify the implications of Assumptions 1-6.

The model we discuss in this section is as follows.
\begin{align}
\max~~~~~&~\int_0^{\infty}e^{-\rho t}u(c(t))dt \nonumber \\
\mbox{subject to. }&~c(\cdot)\in W_1,\nonumber \\
&~k(t)\mbox{ is absolutely continuous on any compact interval},\nonumber \\
&~k(t)\ge 0,\ c(t)\ge 0,\label{MODEL2}\\
&~\int_0^{\infty}e^{-\rho t}u(c(t))dt\mbox{ can be defined},\nonumber \\
&~\dot{k}(t)=f(k(t))-dk(t)-c(t)\mbox{ a.e.},\nonumber \\
&~k(0)=\bar{k}.\nonumber
\end{align}

The background story of this model is as follows. As in the model (\ref{MODEL}), $k(t)$ denotes the amount of capital stock, and $c(t)$ denotes the amount of consumption. In addition, let $y(t)$ denote the total production and $i(t)$ denote the amount of investment. The function $f$ denotes the {\bf production function}. We assume the following three relationships. First, the total production $y(t)$ is determined by the formula
\[y(t)=f(k(t)).\]
Second, the following simplified {\bf IS relationship}
\[y(t)=c(t)+i(t)\]
is assumed. Third, the increasing speed of capital stock is determined by the following formula
\[\dot{k}(t)=i(t)-dk(t),\]
where $d$ is the capital depreciation rate. Combining these three equations, we obtain that
\[\dot{k}(t)=f(k(t))-dk(t)-c(t),\]
which appears in the model (\ref{MODEL2}).

We make the following assumptions.

\vspace{12pt}
\noindent
{\bf Assumption 1'}. $\rho>0$.

\vspace{12pt}
\noindent
{\bf Assumption 2'}. The instantaneous utility function $u:\mathbb{R}_+\to \mathbb{R}\cup\{-\infty\}$ is a continuous, concave, and increasing function. Moreover, $u$ is continuously differentiable on $\mathbb{R}_{++}$, $u'(c)$ is decreasing, and $\lim_{c\to 0}u'(c)=+\infty$, $\lim_{c\to \infty}u'(c)=0$.

\vspace{12pt}
\noindent
{\bf Assumption 3'}. The production function $f:\mathbb{R}_+\to \mathbb{R}_+$ is a continuous, concave, and increasing function that satisfies $f(0)=0$. Moreover, $d\ge 0$ and there exists $k>0$ such that $f(k)>dk$.

\begin{prop}\label{Prop7}
Suppose that Assumptions 1'-3' holds. Define $u(c,k)=u(c)$ and $F(k,c)=f(k)-dk-c$. Then, Assumptions 1-4 holds, and $F$ is continuously differentiable in $c$. In particular, the value function $\bar{V}$ is a nondecreasing and concave function such that $\bar{V}(k)>-\infty$ for all $k>0$, and if $\bar{V}$ is finite, then it is an increasing function that is a classical solution to the HJB equation.
\end{prop}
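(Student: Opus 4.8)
The plan is to verify that the specified data $u(c,k)=u(c)$ and $F(k,c)=f(k)-dk-c$ satisfy Assumptions 1--4 together with the differentiability hypothesis of Proposition \ref{Prop6}, and then to read off every conclusion by invoking Propositions \ref{Prop2}, \ref{Prop3}, and \ref{Prop6}. No new machinery is required; the entire task is a translation of the one-variable hypotheses 1'--3' into the two-variable hypotheses 1--4, so I would organise the argument as a sequence of assumption-by-assumption checks followed by a short citation of the earlier propositions.

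First I would dispatch the easy assumptions. Assumption 1 is immediate since $\rho>0$ (Assumption 1') and $W=W_1$ in model (\ref{MODEL2}). For Assumption 2, because $u(c,k)$ does not depend on $k$, continuity, concavity, monotonicity in $c$, and the $C^1$ property on $\mathbb{R}^2_{++}$ all descend directly from Assumption 2'; the key simplification is that $\frac{\partial u}{\partial k}\equiv 0$, which simultaneously makes the last clause of Assumption 4 (namely $\limsup_{c\downarrow 0}\frac{\partial u}{\partial k}<+\infty$) trivial, while the remaining parts of Assumption 4 are verbatim restatements of the properties of $u'$ (monotonicity, $\lim_{c\to 0}u'=+\infty$, $\lim_{c\to\infty}u'=0$). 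The existence of $c>0$ with $u(c,0)>-\infty$ follows since $u(c)\in\mathbb{R}$ for every $c>0$, as $u$ is increasing and finite on $\mathbb{R}_{++}$.

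The only substantive checks lie in Assumption 3. Continuity, concavity, $F(0,0)=f(0)=0$, and monotonicity in $c$ (indeed $\frac{\partial F}{\partial c}\equiv -1<0$, which also yields the claimed $C^1$-in-$c$ property) are clear. For the lower bound, since $f\ge 0$ one has $F(k,c)\ge -dk-c$, so I would take $d_1=d+1$ and $\delta_2(c)=c$: the strict inequality $F(k,c)>-(d+1)k-c$ holds precisely because $-dk>-(d+1)k$ when $k>0$, and since $W=W_1$ the required linear form $\delta_2(c)=d_2c$ is met with $d_2=1$. For the final clause, note $F(0,c)=-c$, so $F(k,c)>F(0,c)$ reduces to $f(k)>dk$; the crucial observation is that Assumption 3' furnishes a single $k>0$ with $f(k)>dk$ that works uniformly in $c$, which is exactly what the clause demands.

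With Assumptions 1--4 in hand, Proposition \ref{Prop2} yields $\bar{V}(\bar{k})>-\infty$ for all $\bar{k}>0$ and that $\bar{V}$ is nondecreasing and concave. When $\bar{V}$ is finite, Proposition \ref{Prop3} gives that $\bar{V}$ is increasing and a viscosity solution to the HJB equation; and since $\frac{\partial F}{\partial c}\equiv -1$ makes $F$ differentiable in $c$ everywhere, in particular wherever $F(k,c)=0$, the hypothesis of Proposition \ref{Prop6} holds automatically, so $\bar{V}$ is continuously differentiable and hence a classical solution. The only point requiring genuine care is the lower-bound clause of Assumption 3, where one must produce the constants $d_1,\delta_2$ simultaneously achieving a strict inequality and the linear form forced by $W=W_1$; everything else reduces to bookkeeping made trivial by the independence of $u$ from $k$ and the affine dependence of $F$ on $c$.
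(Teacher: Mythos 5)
Your proposal is correct and follows essentially the same route as the paper: translate Assumptions 1'--3' into Assumptions 1--4 clause by clause, note $\partial F/\partial c\equiv -1$, and then cite Propositions \ref{Prop2}, \ref{Prop3}, and \ref{Prop6}. You are in fact somewhat more careful than the paper's own (terser) verification, in particular in producing $d_1=d+1$, $\delta_2(c)=c$ to secure the strict inequality in Assumption 3 and in checking its last clause via $F(0,c)=-c$.
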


\begin{proof}
First, because $W=W_1$, Assumption 1 holds. Second, Assumption 2 requires that $u$ is a continuous and concave function on $\mathbb{R}_+$ that is increasing and continuously differentiable on $\mathbb{R}_{++}$. Assumption 4 requires that $u'$ is decreasing in $c$, $\lim_{c\to 0}u'(c)=+\infty$, and $\lim_{c\to \infty}u'(c)=0$. All requirements are satisfied under Assumption 2'. Third, Assumption 3 requires that $f$ is a continuous and concave function that satisfies $f(0)=0$, $d\ge 0$, and there exists $k>0$ such that $f(k)>dk$. All requirements are satisfied under Assumption 3'. Moreover, for our $F(k,c)$, $\frac{\partial F}{\partial c}(k,c)\equiv -1$. By Proposition \ref{Prop2}, $\bar{V}$ is a nondecreasing and concave function such that $\bar{V}(k)>-\infty$ for all $k>0$. Moreover, by Propositions \ref{Prop3} and \ref{Prop6}, if $\bar{V}$ is finite, then it is an increasing function that is a classical solution to the HJB equation, as desired. This completes the proof.
\end{proof}

We say that $f$ satisfies the {\bf Inada condition} if $f$ is continuously differentiable, strictly concave, and $f'(\mathbb{R}_{++})=\mathbb{R}_{++}$. Then, the following proposition holds.

\begin{prop}\label{Prop8}
Suppose that Assumptions 1'-3' hold, and that $f$ satisfies the Inada condition. Define $u(c,k)=u(c)$ and $F(k,c)=f(k)-dk-c$. Then, Assumption 6 holds. Moreover, if there exist $a>0, C\in \mathbb{R}$, and $\theta>0$ such that
\begin{equation}\label{BOUND}
u(c)\le au_{\theta}(c)+C
\end{equation}
for all $c\ge 0$, then Assumption 5 holds, and thus, $\bar{V}$ is an increasing and concave classical solution to the HJB equation, and there is no other increasing and concave viscosity solution to the HJB equation. Furthermore, a solution $(k^*(t),c^*(t))$ to $(\ref{MODEL2})$ can be obtained from $(\ref{SOL2})$ and $(\ref{CONSUM2})$, and $k^*(t)$ is continuously differentiable and $c^*(t)$ is continuous.
\end{prop}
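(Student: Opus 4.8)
The plan is to verify the hypotheses of the general machinery (Theorem 2, Propositions 6 and 7, and the two corollaries) for the specialized RCK model with $u(c,k)=u(c)$ and $F(k,c)=f(k)-dk-c$. The whole argument is a matter of checking assumptions; no fundamentally new estimate is needed, because all the heavy lifting is done by the results proved earlier in the excerpt. I would organize the proof in three blocks: (i) verify Assumption 6, (ii) verify Assumption 5 under the bound \eqref{BOUND}, and (iii) assemble the conclusions.

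First I would check Assumption 6. Since $u(c,k)=u(c)$ does not depend on $k$, we have $\frac{\partial u}{\partial c}(c,k)=u'(c)$, which is trivially continuously differentiable in $k$ (it is constant in $k$), so the first clause holds for any $\varepsilon_0>0$. For the function $H$, I would simply take something like $H(k,c)=k-c$ (or any continuously differentiable function with $\frac{\partial H}{\partial k}>0$ and $\frac{\partial H}{\partial c}<0$); since $F(k,c)=f(k)-dk-c$ is affine in $c$, it is automatically $C^1$ in $c$ with $\frac{\partial F}{\partial c}\equiv -1$, and $\frac{\partial F}{\partial c}$ is trivially $C^1$ in $k$, so the regularity requirement triggered by $H(k,c)\neq 0$ is satisfied everywhere without restriction. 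The only genuinely substantive part of Assumption 6 is the final inequality $\inf_{c\ge 0}D_{k,+}F(k,c)>\rho$ for some $k>0$. Here $D_{k,+}F(k,c)=D_+f(k)-d$, which is independent of $c$, so the requirement reduces to finding $k>0$ with $D_+f(k)-d>\rho$, i.e. $D_+f(k)>\rho+d$. Under the Inada condition $f'(\mathbb{R}_{++})=\mathbb{R}_{++}$ and $f$ strictly concave, so $f'$ is a decreasing surjection onto $\mathbb{R}_{++}$; choosing $k$ small enough makes $f'(k)$ exceed any prescribed bound, in particular $\rho+d$. This establishes Assumption 6.

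Next I would verify Assumption 5 under the bound \eqref{BOUND}. I would exhibit the required constants $k^*,c^*,\gamma,\delta,\theta,a,b,C$. Take $b=0$ and use the same $a,\theta,C$ from \eqref{BOUND}, so that \eqref{TL} becomes $u(c,k)=u(c)\le au_{\theta}(c)+0\cdot u_{\theta}(k)+C$, which holds by hypothesis. For \eqref{FL}, since $F$ is $C^1$ here I can use $\partial F(k^*,c^*)=\{(f'(k^*)-d,-1)\}$, so I set $\delta=1$ and $\gamma=f'(k^*)-d$, where $k^*$ is chosen freely. The remaining constraint is \eqref{SL}, namely $\rho-(1-\theta)\gamma>0$. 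By the Inada condition $f'$ ranges over all of $\mathbb{R}_{++}$, so $\gamma=f'(k^*)-d$ can be made to take any value in $(-d,\infty)$; I would choose $k^*$ so that $\gamma$ is either sufficiently small (if $\theta<1$) or handled directly (if $\theta\ge 1$, where $(1-\theta)\gamma\le 0$ whenever $\gamma>0$), making $\rho-(1-\theta)\gamma>0$ hold. The one point deserving care is ensuring a single $k^*$ can simultaneously satisfy $F(k^*,c^*)$ being well-defined with $c^*\ge 0$ and the sign condition; since $c^*\ge 0$ is free and the slope $\gamma$ depends only on $k^*$, this is a routine selection. This gives Assumption 5.

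Finally I would assemble the conclusions. With Assumptions 1--6 verified (Assumptions 1--4 already coming from Proposition 7), Theorem 2 gives that $\bar V$ is the unique viscosity solution in $\mathscr V$. Since $F$ is $C^1$ in $c$ everywhere (in particular wherever $F(k,c)=0$), Proposition 6 upgrades $\bar V$ to a classical solution, and Corollary 2 then yields that the ODE \eqref{SOL3} has a solution $k^*(t)$ on $\mathbb{R}_+$ with $\inf_{t\ge 0}k^*(t)>0$, with $c^*(t)=c^*(\bar V'(k^*(t)),k^*(t))$ continuous and $k^*(t)$ continuously differentiable. The last assertion, that $(k^*(t),c^*(t))$ solves \eqref{MODEL2}, follows from Corollary 2 once one of the three conditions of Corollary 1 holds; here $F(k,c)=f(k)-dk-c$ has the form $f(k)-h(c)$ with $h(c)=c$ convex and $u(c,k)=u(c)$, so condition 3) of Corollary 1 ($c^*(p,k)$ independent of $k$, hence $\liminf_{k\to\infty}c^*(p,k)>0$) is satisfied. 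The main obstacle, such as it is, is the bookkeeping in the second block: choosing $k^*$ (equivalently $\gamma$) to reconcile \eqref{SL} with the different sign behavior of $(1-\theta)\gamma$ in the cases $\theta<1$, $\theta=1$, and $\theta>1$; everything else is direct substitution into already-established results.
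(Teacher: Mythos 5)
Your blocks (i) and (ii) match the paper's proof almost verbatim: the same choices $\varepsilon_0$, $H(k,c)=k-c$, $b=0$, $\delta=1$, $\gamma=f'(k^*)-d$ with $k^*$ tuned via the Inada condition so that $\gamma>0$ and $\rho-(1-\theta)\gamma>0$, and the same appeal to condition 3) of Corollary 1 (via $c^*(p,k)=(u')^{-1}(p)$) together with Proposition \ref{Prop6} and Corollary 2 for the final assembly. Your treatment of Assumption 6 is in fact slightly more careful than the paper's, since you correctly note that the threshold is $f'(k)>\rho+d$ rather than $f'(k)>\rho$.

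However, there is one genuine gap in block (iii). The proposition asserts that there is no other \emph{increasing and concave} viscosity solution to the HJB equation, whereas Theorem \ref{Theorem2} only gives uniqueness within $\mathscr{V}$, and membership of $\mathscr{V}$ requires, in addition to monotonicity and concavity, the growth condition (\ref{GC}). Uniqueness in $\mathscr{V}$ does not by itself exclude an increasing concave viscosity solution that violates (\ref{GC}), so your conclusion as stated is strictly weaker than the proposition's claim. The paper closes this gap with Lemma \ref{Lemma2}: by the Inada condition one can choose $k>0$ with $f'(k)-d<\rho$, so that $\gamma=f'(k)-d\in \partial_kF(k,0)$ satisfies $\gamma<\rho$; Lemma \ref{Lemma2} then shows that $\mathscr{V}$ coincides with the set of \emph{all} increasing and concave real-valued functions on $\mathbb{R}_{++}$, i.e.\ the growth condition is automatic, and uniqueness in $\mathscr{V}$ becomes uniqueness in the full class claimed by the proposition. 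Adding this one observation (a second, separate use of the Inada condition, this time to make $f'(k)$ small rather than large) completes your argument.
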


\begin{proof}
First, $\frac{\partial^2u}{\partial k\partial c}(c,k)\equiv 0$. By the Inada condition, there exists $k>0$ such that $f'(k)>\rho$. Because $\frac{\partial^2F}{\partial k\partial c}(k,c)\equiv 0$ for all $(k,c)$, we can choose $\varepsilon_0=1,\ H(k,c)\equiv k-c$. Therefore, Assumption 6 holds.\footnote{Note that, we do not prohibit the differentiability of $F$ at $(k,c)$ with $H(k,c)=0$. For Assumption 6, it is only necessary that if $k<\varepsilon_0$ and $H(k,c)\neq 0$, then $F$ satisfies some differentiability requirements, and the converse is not needed.} Second, suppose that there exist $a>0, C\in \mathbb{R}$, and $\theta>0$ such that (\ref{BOUND}) holds. By the Inada condition, there exists $k>0$ such that $f'(k)>d$ and $\rho-(1-\theta)(f'(k)-d)>0$. Clearly, $(f'(k)-d,-1)\in \partial F(k,1)$. If we choose $b=0$, then all the requirements in Assumption 5 holds. Therefore, Assumptions 1-6 holds, and by Theorem \ref{Theorem2}, $\bar{V}$ is the unique viscosity solution to the HJB equation in $\mathscr{V}$. Note that, by Lemma \ref{Lemma2}, $\mathscr{V}$ is the set of all increasing and concave functions, and thus there is no increasing and concave viscosity solution to the HJB equation other than $\bar{V}$. By Proposition \ref{Prop7}, $\bar{V}$ is a classical solution to the HJB equation. Finally, because $c^*(p,k)=(u')^{-1}(p)$, statement 3) of Corollary 1 holds, and thus Corollary 2 can be applied. Hence, a solution $(k^*(t),c^*(t))$ to (\ref{MODEL2}) is obtained by (\ref{SOL3}) and the equality $c^*(t)=f(k^*(t))-dk^*(t)-\dot{k}^*(t)$, and $k^*(t)$ is continuously differentiable and $c^*(t)$ is continuous. This completes the proof.
\end{proof}

Hence, in the usual RCK model with the Inada condition, (\ref{BOUND}) is the unique requirement for ensuring that the value function is the unique classical solution to the HJB equation.

\subsection{The Non-Smooth Fiscal Policy}
In this section, we treat an economic example of the model in which Assumptions 1-6 are satisfied, but $F(k,c)$ is not differentiable in $c$. In this model, $\bar{V}$ may not be differentiable, and thus it may not be a classical solution to the HJB equation, despite being a viscosity solution to the HJB equation.

The model is as follows:
\begin{align}
\max~~~~~&~\int_0^{\infty}e^{-\rho t}u(c(t))dt \nonumber \\
\mbox{subject to. }&~c(\cdot)\in W_1,\nonumber \\
&~k(t)\mbox{ is absolutely continuous on any compact interval},\nonumber \\
&~k(t)\ge 0,\ c(t)\ge 0,\label{MODEL3}\\
&~\int_0^{\infty}e^{-\rho t}u(c(t))dt\mbox{ can be defined},\nonumber \\
&~\dot{k}(t)=f(k(t))-dk(t)-c(t)-g(k(t),c(t))\mbox{ a.e.},\nonumber \\
&~k(0)=\bar{k},\nonumber
\end{align}
where $g(k,c)=\max\{[Af(k)-c]B,0\}$ and $0<A,B<1$.

Let us explain the background story of this model. As in the RCK model, $f$ denotes the production function and $d\ge 0$ is the capital depreciation rate. The difference from the RCK model is the presence of the {\bf fiscal policy rule} $g(k,c)$. We assume that the government conducts a Keynesian policy to attempt to buoy the economy when consumption is too small relative to production. In this model, the boundary value of consumption at which the government decides to spend is $100A$\% of the output, and government expenditure is determined as $100B$\% of the shortfall by which consumption is below that boundary value. The function $g(k,c)$ reflects this policy rule. Through this fiscal policy, the government produces some `public' commodity that increases the consumer's utility. However, in the decentralized economy behind this model, the action of the government is independently determined, and the consumer does not consider changing government spending through his/her own action. Thus, the consumer's behavioural decision does not reflect the utility derived from the `public' commodity produced by the government, and as a result, it is excluded from the model's objective function. Therefore, the instantaneous utility function is assumed to be a function of $c$ only.

Let $u(c,k)=u(c)$ and $F(k,c)=f(k)-c-g(k,c)$. Suppose that Assumptions 1'-3' hold, $f$ satisfies the Inada condition, and there exist $a>0$, $C\in \mathbb{R}$, and $\theta>0$ such that (\ref{BOUND}) holds. We check that model (\ref{MODEL3}) satisfies Assumptions 1-6. First, by the same arguments as in the proof of Propositions \ref{Prop7} and \ref{Prop8}, Assumptions 1, 2, 4, and 5 are clearly satisfied. For Assumption 3, consider
\[G(y,c)=y-c-\max\{(Ay-c)B,0\}.\]
It is easy to show that $G$ is concave and increasing in $y$. Then,
\begin{align*}
&~F((1-t)(k_1,c_1)+t(k_2,c_2))\\
=&~G(f((1-t)k_1+tk_2),(1-t)c_1+tc_2)-d[(1-t)k_1+tk_2]\\
\ge&~G((1-t)(f(k_1),c_1)+t(f(k_2),c_2))-d[(1-t)k_1+tk_2]\\
\ge&~(1-t)[G(f(k_1),c_1)-dk_1]+t[G(f(k_2),c_2)-dk_2]\\
=&~(1-t)F(k_1,c_1)+tF(k_2,c_2),
\end{align*}
and thus $F$ is concave. The rest of the claim in Assumption 3 clearly holds with $d_1=d, \delta_2(c)=c$. Finally, let $H(k,c)=Af(k)-c$. Then, we can easily check that Assumption 6 holds. Hence, we have that the value function $\bar{V}$ in (\ref{MODEL3}) is the unique viscosity solution to the HJB equation in $\mathscr{V}$. Because $F(k,0)=(1-AB)f(k)-dk$, Lemma \ref{Lemma2} implies that $\mathscr{V}$ is the set of all increasing and concave real-valued functions on $\mathbb{R}_{++}$.

We can easily show that
\[c^*(p,k)=\begin{cases}
(u')^{-1}(p) & \mbox{if }p\le u'(Af(k)),\\
Af(k) & \mbox{if }p\ge u'(Af(k))\ge (1-B)p,\\
(u')^{-1}((1-B)p) & \mbox{if }(1-B)p\ge u'(Af(k)),
\end{cases}\]
and thus, statement 3) of Corollary 1 holds. Therefore, a solution $(k^*(t),c^*(t))$ to (\ref{MODEL3}) can be obtained from (\ref{SOL3}) and the equality $c^*(t)=c^*(\bar{V}'(k^*(t)),k^*(t))$.

Surprisingly, if $d=0$, then $\bar{V}$ is continuously differentiable. Indeed, we can easily check that if $F(k,c^*(p,k))=0$, then $c^*(p,k)=f(k)>Af(k)$, and thus $F$ is differentiable at $(k,c^*(p,k))$. Hence, Proposition \ref{Prop6} implies that $\bar{V}$ is continuously differentiable.

However, if $d>0$, then there exists $k^*$ such that $f(k^*)-dk^*=Af(k^*)$. In this case, it is unknown whether $\bar{V}$ is differentiable at $k^*$.

\section{Discussion}
Usually, the HJB equation is written as a second-order degenerate elliptic differential equation. In fact, in stochastic economic dynamic models, the HJB equation becomes a second-order differential equation. For example, Malliaris and Brock \cite{MB} dealt with this type of equation. Probably, the use of Ito's formula in the middle of the derivation makes the HJB equation second-order. On the other hand, since there is no stochastic variation in the dynamic model considered in this paper, the HJB equation is only a first-order differential equation.

The case in which the HJB equation does not have a classical solution has been highlighted in many studies. Therefore, following Lions \cite{LI}, the viscosity solution is typically taken as the solution concept in the study of degenerate elliptic differential equations. However, as in Proposition \ref{Prop6}, the condition for the value function to be differentiable is very mild, and thus, in many economic models, the value function is a classical solution to the HJB equation. For such a model, the net contribution of this paper is to show the uniqueness of the viscosity, not classical, solution to the HJB equation rigorously.

As noted in the introduction, the properties of the HJB equation presented in many previous studies are not applicable to the model considered in this paper. There are several reasons for this. First, in the model considered herein, the time interval is infinite. To the best of our knowledge, this type of problem was first addressed by Soner \cite{SO1, SO2}, who made the assumption that moving the control variable of the cost function does not have a significant effect on the cost. Applying this assumption to economic models, we must assume that $u(c,k)$ is bounded in $c$. However, the most typical example of $u(c,k)$ in economics is $\log c$, which violates this assumption. Barles \cite{BA} pointed out that when this boundedness requirement does not hold, a pathological problem arises. Hosoya \cite{HOA, HOB} showed that this pathological problem appears even in RCK models, and thus this problem is serious for economic models.

This boundedness assumption is often applied in the modern theory of the HJB equation for control problems with infinite time interval. Bardi and Capuzzo-Dolcetta \cite{BCD} is a typical example. In contrast, several recent studies have not assumed the boundedness of $u$. For example, Hermosilla et al. \cite{HVZ} treated such a case for a problem with a finite time interval. However, even in that case, $u$ is assumed to be Lipschitz. This is a condition that rarely applies to economic models, because any CRRA function is not Lipschitz around $0$. Da Lio \cite{D} also treated an infinite time interval model with unbounded $u$. However, in this paper, $u$ is still bounded from above.\footnote{Da Lio \cite{D} treated the minimization problem, and thus he assumed the boundedness from below. However, we treat the maximization problem, and thus `below' must be changed into `above'. Also, this paper assumed that $F$ satisfies some Lipschitz-like condition, which is usually inconsistent with the Inada condition.} This condition is violated if $u=u_{\theta}$ with $\theta\le 1$, and thus, these requirements cannot be applied for many economic applications. To the best of our knowledge, no model other than that of Hosoya \cite{HOA, HOB} has dealt with the case of $u(c,k)=\log c$. Hosoya \cite{HOA} allows the differentiability of $F$ to derive the results, but this is not assumed in the present paper. Hosoya \cite{HOB} assumes the existence of a continuous solution to the original problem, but again, the present paper does not make such an assumption. Hence, the results presented in this paper are independent of those derived in previous studies.

On the other hand, in economics, if $c$ can be written as a function of $k$ and $\dot{k}$, the value function is differentiable when the original problem has a solution such that $\dot{k}(t)$ is continuous. This is because of Theorem 2 of Benveniste and Scheinkman \cite{BS2}. Thus, the value function $\bar{V}$ becomes non-differentiable only if the optimal capital accumulation path $k^*(t)$ is not continuously differentiable. In other words, the value function can be non-differentiable only when (\ref{SOL2}) is truly a differential inclusion. Unfortunately, (\ref{SOL2}) contains information on $\bar{V}$ itself, and thus deriving the non-differentiability of $\bar{V}$ from the assumptions of $u$ and $F$ involves considerable difficulty. However, there exist cases in which it is not possible to judge whether the value function is differentiable. In such cases, our Theorems 1-2 serve as powerful analytical tools.

In this paper, the fact that $u$ may be unbounded was a major problem. It is possible to remove this problem by restricting the range of values $c(t)$ can take. However, in that case, it is not possible to deal with problems in which $F(k,c)=Ak-c$. Such a model is called the AK model in economics. It is known that the solution to the AK model usually satisfies $\lim_{t\to \infty}c(t)=+\infty$, and thus it is not possible to deal with the AK model if an upper bound of $c(t)$ is introduced. It is easy to show that there exists an AK model that is consistent with all of Assumptions 1-6, and thus in this paper, by not setting an upper bound of $c(t)$, the range of models that can be handled increases.

We additionally mention the so-called ``inward pointing condition''. This condition states that if a state variable is at the boundary of the constraint set, then there exists a control variable that can bring this state variable back to the interior of the constraint set. See, for example, (1.3) of Colombo et al. \cite{CKR}. In our model (\ref{MODEL}), this condition states that there exists $c\ge 0$ such that $F(0,c)>0$. However, this condition contradicts our Assumption 3, because $F(0,0)=0$ and $F$ is decreasing in $c$. Colombo et al. \cite{CKR} mentioned a sort of `higher-order' inward pointing conditions in which the partial derivative of $F$ in $k$ at $(0,0)$ is used. However, the Inada condition implies that $F$ is not differentiable in $k$ at $(0,0)$, and thus a differentiable requirement of $F$ at the boundary does not hold in many economic models.

This ``inward pointing condition'' is used to guarantee a certain kind of viability of the state variable function, and thus, it is similar in spirit to Assumption 6, which guarantees the existence of a positive solution to the differential inclusion (\ref{SOL}) defined on $\mathbb{R}_+$. However, Assumption 6 is a condition that guarantees that ``for a candidate of the solution to (\ref{MODEL}), the state variable will never reach the boundary,'' whereas the inward pointing condition is a sort of conditions that guarantee that ``when a candidate of the solution reaches the boundary, it can return to the interior.'' Hence, the two conditions are different in nature. In the usual cases, $u$ and $F$ in the economic models do not have good properties at the boundary, and in this sense, any condition that guarantees a good property of the model on the boundary is not acceptable in economic models. Hence, it is unlikely that such a `well-behaved' condition on the boundary will be of much use in economic models in the future.

\section{Concluding Remarks}
In this paper, we considered a class of economic control problems, and presented the conditions under which the value function is the unique viscosity solution to the HJB equation. For such cases, we provided a method for deriving a solution to the original problem using the HJB equation. Furthermore, we presented a condition under which the value function is differentiable. Our conditions are sufficiently weak that our results can be applied to many economic applications.

There are several future tasks. First, we want to prove that Corollary 1 holds unconditionally.

Second, we want to extend our results to some multidimensional models and stochastic models.

Third, we want to obtain a simple method for gaining an approximate solution to the value function. In discrete-time models, there is a famous approximation method that uses Blackwell's inequality and the contraction mapping theorem.\footnote{See Stokey and Lucas \cite{SL} for more detailed arguments.} We want to obtain a counterpart to this result for a continuous-time model.

Fourth, we want to extend our result to decentralized models in economic theory. Usually, macroeconomic dynamic models can be classified into two categories: centralized and decentralized. We have only discussed centralized models in this paper. If the government is absent, then by the fundamental theorem of welfare economics, the results of the two models coincide. However, if fiscal policy is introduced, then these two models can derive different results. Hence, we want to extend our result to decentralized models.

\section*{Acknowledgements}
The author is grateful to anonymous reviewers for their kind comments and suggestions. Moreover, the author thanks Toru Maruyama and Hisatoshi Tanaka for their helpful suggestions in private discussions. This work was supported by JSPS KAKENHI Grant Number JP25K05007.

\section*{Declaration}
The author reports there are no competing interests to declare.

\appendix
\section{Proofs of Lemmas}
In this appendix, we put on the proofs of Lemmas 5-8.

\begin{proof}[{\bf Proof of Lemma \ref{Lemma5}}]
Suppose not. Then, there exists $t^+>0$ such that $k_1(t^+)<k_2(t^+)$. Let $t^*=\inf\{t\in [0,T]|\forall s\in [t,t^+],\ k_1(s)<k_2(s)\}$. Because $k_1(0)=k_2(0)=\bar{k}$, we have that $k_1(t^*)=k_2(t^*)$. Let
\[k_3(t)=k_1(t^*)+\int_{t^*}^th(k_2(s))ds.\]
Because $\dot{k}_2(t)\in \Gamma(k_2(t))$, we have that $h(k_2(t))\ge \dot{k}_2(t)$ almost everywhere, and thus, 
\[k_3(t)\ge k_2(t)>k_1(t)\]
for all $t\in ]t^*,t^+]$. Let $L>0$ be the Lipschitz constant on a compact set $C$ that includes $\{k_i(t)|i\in \{1,2\},\ t\in [t^*,t^+]\}$. Then,
\begin{align*}
k_3(t)-k_1(t)=&~\int_{t^*}^t[h(k_2(s))-h(k_1(s))]ds\\
\le&~\int_{t^*}^tL[k_2(s)-k_1(s)]ds\\
\le&~L(t-t^*)\max_{s\in [t^*,t]}(k_2(s)-k_1(s)).
\end{align*}
Choose any $t\in [t^*,t^+]$ such that $0<t-t^*<L^{-1}$, and let $s^*\in \arg\max\{k_2(s)-k_1(s)|s\in [t^*,t]\}$. Because $k_1(t^*)=k_2(t^*)$ and $k_2(s)>k_1(s)$ for $s\in ]t^*,t]$, we have that $s^*>t^*$. Then,
\[k_3(s^*)-k_1(s^*)\le L(s^*-t^*)(k_2(s^*)-k_1(s^*))<k_2(s^*)-k_1(s^*),\]
and thus, $k_3(s^*)<k_2(s^*)$, which is a contradiction. This completes the proof.
\end{proof}

\begin{proof}[{\bf Proof of Lemma \ref{Lemma6}}]
Let $Y$ be the set of all solutions to (\ref{DI}) defined on either $\mathbb{R}_+$ or $[0,T']$ for some $T'$. Because (\ref{DI}) has a solution, $Y$ is nonempty. For $k_1(\cdot),k_2(\cdot)\in Y$, let $k_1(\cdot)\succeq k_2(\cdot)$ if the domain of $k_1(\cdot)$ includes that of $k_2(\cdot)$ and $k_1(t)=k_2(t)$ when both are defined at $t$. Clearly, $\succeq$ is a partial order on $Y$. For $k(\cdot)\in Y$, let $I_{k(\cdot)}$ be the domain of $k(\cdot)$. Choose any chain $C\subset Y$ of $\succeq$. If $\sup \cup_{k(\cdot)\in C}I_{k(\cdot)}=+\infty$, then we can define
\[k^+(t)=k(t)\mbox{ if }t\in I_{k(\cdot)},\]
and $k^+(\cdot)$ is an upper bound of $C$. Otherwise, let $T^*=\sup \cup_{k(\cdot)\in C}I_{k(\cdot)}$. Define
\[k^+(t)=k(t)\mbox{ if }t\in I_{k(\cdot)}.\]
Then, $k^+(t)$ is a solution to (\ref{DI}) defined on $[0,T^*[$. By the continuity of $\hat{k}(t),\bar{k}(t)$, there exist $\varepsilon>0$ and $\delta>0$ such that $\bar{k}(T^*)>\varepsilon$ and if $0<T^*-t<\delta$, then $k^+(t)\in [\bar{k}(T^*)-\varepsilon, \hat{k}(T^*)+\varepsilon]$. Hence, we can define
\[k^+(T^*)=\limsup_{t\to T^*}k^+(t)\in [\bar{k}(T^*)-\varepsilon,\hat{k}(T^*)+\varepsilon].\]
Because $[\bar{k}(T^*)-\varepsilon, \hat{k}(T^*)+\varepsilon]$ is compact and $\Gamma$ is upper hemi-continuous, there exists $r>0$ such that $|y|<r$ for every $k\in [\bar{k}(T^*)-\varepsilon,\hat{k}(T^*)+\varepsilon]$ and $y\in \Gamma(k)$. Thus, $|k^+(t)-k^+(T^*)|\le r(T^*-t)$, and hence
\[\lim_{t\to T^*}k^+(t)=k^+(T^*),\]
which implies that $k^+(t)$ is continuous on $[0,T^*]$. Then, $k^+(t)$ is differentiable almost everywhere on $[0,T^*]$ and 
\[k^+(t)=\bar{k}+\int_0^t\dot{k}^+(t)ds\]
for all $t\in [0,T^*]$, and thus, we conclude that $k^+(t)$ is a solution to equation (\ref{DI}) defined on $[0,T^*]$. Hence, $k^+(\cdot)\in Y$ and it is an upper bound of $C$. By Zorn's lemma, there exists a maximal element $k^*(\cdot)\in Y$ of $\succeq$. If the domain $I_{k^*(\cdot)}$ is $[0,T]$, then $k^*(T)\in [\bar{k}(T),\hat{k}(T)]$, and thus as mentioned above, there exists a solution $k_1(t)$ to the following inclusion:
\[\dot{k}(t)\in \Gamma(k(t)),\ k(0)=k^*(T),\]
defined on $[0,T']$. Define
\[\kappa(t)=\begin{cases}
k^*(t) & \mbox{if }t\in [0,T],\\
k_1(t-T) & \mbox{if }t\in [T,T+T'].
\end{cases}\]
Then, $\kappa(\cdot)\in Y$ and $\kappa(\cdot)\succ k^*(\cdot)$, which is a contradiction. Therefore, the domain of $k^*(t)$ must be $\mathbb{R}_+$. This completes the proof.
\end{proof}

\begin{proof}[{\bf Proof of Lemma \ref{Lemma7}}]
First, suppose that $V$ is a viscosity solution to the HJB equation. If $V$ is differentiable at $k$, then $p=V'(k)$, and thus (\ref{EQQ}) holds. Suppose that $\partial V(k)=[D_+V(k), D_-V(k)]$ and $D_+V(k)<D_-V(k)$. Because $\partial V$ is upper hemi-continuous and $V$ is locally Lipschitz, there exist sequences $(k_m^1),(k_m^2)$ such that $V$ is differentiable at $k_m^i$, and
\[\lim_{m\to \infty}V'(k_m^1)=D_+V(k),\ \lim_{m\to \infty}V'(k_m^2)=D_-V(k).\]
Then,
\[\rho V(k_m^i)=F(k,c^*(V'(k_m^i),k_m^i))V'(k_m^i)+u(c^*(V'(k_m^i),k_m^i),k_m^i),\]
and letting $m\to \infty$, we have that (\ref{EQQ}) holds for $p=D_+V(k)$ or $p=D_-V(k)$. Now, define
\[g(p)=\sup_{c\ge 0}\{F(k,c)p+u(c,k)\}.\]
Then, we can easily check that $g$ is convex. Because $g(D_+V(k))=g(D_-V(k))=\rho V(k)$, we have that $g(p)\le \rho V(k)$ for all $p\in \partial V(k)$. Now, choose any $p\in \partial V(k)$, and define $\varphi(k')=V(k)+p(k'-k)$. Then, $\varphi(k)=V(k)$ and $\varphi'(k')\ge V(k')$ for all $k'$. Because $V$ is a viscosity supersolution, we have that $g(p)=g(\varphi'(k))\ge \rho V(k)$, and thus $g(p)\equiv \rho V(k)$ on $\partial V(k)$, as desired.

Conversely, suppose that (\ref{EQQ}) holds for any $p\in \partial V(k)$. Choose any continuously differentiable function $\varphi$ defined on some neighbourhood of $k$ such that $k$ is either a minimum point or a maximum point of $\varphi-V$. Then, $\varphi'(k)\in \partial V(k)$, and thus (\ref{EQQ}) holds for $p=\varphi'(k)$, which implies that $V$ is a viscosity solution to the HJB equation. This completes the proof.
\end{proof}

\begin{proof}[{\bf Proof of Lemma \ref{Lemma8}}]
First, let $C=x([A,B])$. Then, $C$ is a compact and convex subset of $U$, and thus $H$ has a Lipschitz constant $L>0$ on $C$. Choose any $\varepsilon>0$. Then, there exists $\delta>0$ such that if $s_1<t_1\le s_2<t_2\le ... \le s_n<t_n$ and $\sum_{i=1}^n|t_i-s_i|<\delta$, then $\sum_{i=1}^n|x(t_i)-x(s_i)|<L^{-1}\varepsilon$, which implies that
\[\sum_{i=1}^n|\psi(t_i)-\psi(s_i)|\le L\sum_{i=1}^n|x(t_i)-x(s_i)|<\varepsilon,\]
as desired. Therefore, $\psi$ is absolutely continuous.

Second, suppose that $\dot{\psi}(t)$ and $\dot{x}(t)$ are defined, and $p\in \partial H(x(t))$. Then,
\begin{align*}
\dot{\psi}(t)-p\dot{x}(t)=&~\lim_{h\downarrow 0}\frac{\psi(t+h)-\psi(t)}{h}-p\dot{x}(t)\\
\le&~\lim_{h\downarrow 0}p\frac{x(t+h)-x(t)}{h}-p\dot{x}(t)=0,\\
\dot{\psi}(t)-p\dot{x}(t)=&~\lim_{h\uparrow 0}\frac{\psi(t+h)-\psi(t)}{h}-p\dot{x}(t)\\
\ge&~\lim_{h\uparrow 0}p\frac{x(t+h)-x(t)}{h}-p\dot{x}(t)=0,
\end{align*}
which implies that $\dot{\psi}(t)=p\dot{x}(t)$. This completes the proof.
\end{proof}

\end{document}